\newtheorem{lem}{Lemma}[section]
\newtheorem{rem}{Remark}[section]
\newtheorem{prop}{Proposition}[section]
\newcounter{hypA}
\newcounter{hypB}
\newcounter{hypD}
\date{}
\numberwithin{equation}{section}
\begin{document}
\begin{center}

{\Large \textbf{Unbiased Estimation of the Hessian for Partially Observed Diffusions}}

\vspace{0.5cm}

BY NEIL K. CHADA, AJAY JASRA \& FANGYUAN YU

{\footnotesize Computer, Electrical and Mathematical Sciences and Engineering Division, King Abdullah University of Science and Technology, Thuwal, 23955-6900, KSA.}
{\footnotesize E-Mail:\,} \texttt{\emph{\footnotesize neilchada123@gmail.com,ajay.jasra@kaust.edu.sa, fangyuan.yu@kaust.edu.sa}}
\end{center}

\begin{abstract} 
In this article we consider the development of unbiased estimators of the Hessian, of the log-likelihood function with respect to parameters, for partially observed diffusion processes. 
These processes arise in numerous applications, where such diffusions require derivative information, either through the Jacobian
or Hessian matrix. As time-discretizations of diffusions induce a bias, we provide an unbiased estimator of the Hessian. This is based on using Girsanov's
Theorem and randomization schemes developed through Mcleish \cite{DM11} and Rhee \& Glynn \cite{RG15}. We demonstrate our developed 
estimator of the Hessian is unbiased, and one of finite variance. We numerically test and verify this by comparing the methodology here
to that of a newly proposed particle filtering methodology. We test this on a range of diffusion models, which include different Ornstein--Uhlenbeck
processes and the Fitzhugh--Nagumo model, arising in neuroscience. 
\\\\
\noindent \textbf{Key words}: Partially Observed Diffusions, Randomization Methods, Hessian Estimation, \\ Coupled Conditional Particle Filter \\
\textbf{AMS subject classifications:} 62C10, 60J60, 60J22, 65C40 
\end{abstract} 


\section{Introduction}
\label{sec:intro}

In many scientific disciplines, diffusion processes \cite{CMR05} are used to model and describe important phenomenon. Particular applications where
such processes arise include biological sciences, finance, signal processing and atmospheric sciences \cite{BC09,MW06,LMR77,SES04}. Mathematically, diffusion processes  
 take the general form 
\begin{equation}
\label{eq:diff}
dX_t = a_{\theta}(X_t)dt +\sigma(X_t)dW_t, \quad X_0 = x_{\star} \in \mathbb{R}^d,
\end{equation}
where $X_t \in \mathbb{R}^d$, $\theta \in \Theta$ is a parameter, $X_0 = x_{\star}$ is the initial condition with $x_{\star}$ given, $a: \Theta \times \mathbb{R}^d \rightarrow \mathbb{R}^d$ denotes the drift term, $\sigma:\mathbb{R}^d \rightarrow \mathbb{R}^{d \times d}$ denotes the diffusion coefficient and $\{W_t\}_{t \geq 0}$ is a standard $d-$dimensional Brownian motion. In practice it is often difficult to have direct access to such continuous processes, where instead one has discrete-time partial observations of the process $\{X_t\}_{t \geq 0}$, denoted as $Y_{t_1},\ldots,Y_{t_n}$,  where $0 < t_1 < \ldots <t_n=T$, such that $Y_{t_p} \in \mathbb{R}^{d_y}$. Such processes are referred to as partially observed diffusion processes (PODPs), where one is interested in doing inference on the hidden process \eqref{eq:diff} given the observations. In order to do such inference, one must time-discretize such a process which induces a discretization bias. For \eqref{eq:diff} this can arise through common discretization forms such as an Euler or Milstein scheme \cite{KP13}. Therefore an important question, related to inference, is how one can reduce, or remove the discretization bias. Such a discussion motivates the development and implementation of unbiased estimators. 

The unbiased estimation of PODPs has been an important, yet challenging topic. Some original seminal work on this has been the idea of exact simulation, proposed in various works \cite{BR05,BOR06,FPR08}. The underlying idea behind exact simulation, is that through a particular transformation one can acquire an unbiased estimator, subject to certain restrictions on the from of the diffusion and its dimension. Since then there has been a number of extensions aimed at going beyond this, w.r.t. to more general multidimensional diffusions and continuous-time dynamics \cite{BZ17,FPR10}. However there has been recent attention in unbiased estimation, for Bayesian computation, through the work of Rhee and Glynn \cite{GR14,RG15}, where they provide unbiased and finite variance estimators through introducing randomization.   In particular these methods allow  to unbiasedly estimate an expectation of a functional, by randomizing on the level of the time-discretization in a type of multilevel Monte Carlo (MLMC) approach \cite{MV18}, where there is a coupling between different levels. As a result, this methodology has been considered in the context of both filtering and Bayesian computation \cite{CFJ21,HJL21,JLY21} and gradient estimation \cite{HHJ21}. 

In this work we are interested in developing an unbiased estimator of the Hessian for PODPs. This is of interest as current state of-the-art stochastic gradient methodologies, exploit Hessian information for improved convergence, such as Newton type methods \cite{ABH17,BCN11}.  In order to develop an unbiased estimator, our methodology will largely follow that described in \cite{HHJ21}, with the extension of this from the score function to the Hessian. In particular we will
exploit the use of the conditional particle filter (CPF), first considered by Andrieu et al. \cite{CDH10,ALV18}. We provide an expression for the Hessian of the likelihood, while introducing an Euler time-discretization of the diffusion process in order to implement our unbiased estimator. We then describe how one can attain unbiased estimators, which is based on various couplings of the CPF. From this we test this methodology to that of using the methods of \cite{CFJ21,JKL18} for the Hessian computation, as for a comparison, where we demonstrate the unbiased estimator through both the variance and bias. This will be conducted on both a single and multidimensional Ornstein--Uhlenbeck process, as well as a more complicated model of the  Fitzhugh--Nagumo model. We remark that our estimator of the hessian is unbiased, but if the inverse hessian is required, it is possible to adapt the forthcoming methodology to that context as well.
 \subsection{Outline}
 In Section \ref{sec:model} we present our setting for our diffusion process. We also present a derived expression for the Hessian, with an appropriate time-discretization. Then in Section \ref{sec:algo} we describe our algorithm in detail for the unbiased estimator of the Hessian. This will be primarily based on a coupling of a coupled conditional particle filter. This will lead to Section \ref{sec:num} where we present our numerical experiments, which provide variance and bias plots. We compare the methodology of this work, with that of the Delta particle filter. This comparison will be tested on a range of diffusion processes, which include an Ornstein--Uhlenbeck process and the Fitzhugh--Nagumo model. We summarize our findings in Section \ref{sec:summ}.
\section{Model}
\label{sec:model}

In this section we introduce our setting and notation regarding our partially observed diffusions. This will include a number of assumptions. We will then provide an expression for the Hessian of the likelihood function,  
with a time-discretization based on the Euler scheme. This will include a discussion on the stochastic model where we define the marginal likelihood. Finally we present a result indicating the approximation of the Hessian
computation as we take the limit of the discretization level.

\subsection{Notation}

Let $(\mathsf{X},\mathcal{X})$ be a measurable space.
For $\varphi:\mathsf{X}\rightarrow\mathbb{R}$ we write $\mathcal{B}_b(\mathsf{X})$ as the collection of bounded measurable functions,
$\mathcal{C}^j(\mathsf{X})$ are the collection of $j-$times, $j\in\mathbb{N}$ continuously differentiable functions and we omit the subscript $j$ if the functions are simply continuous;
if $\varphi:\mathsf{X}\rightarrow\mathbb{R}^d$ we write $\mathcal{C}_d^j(\mathsf{X})$ and $\mathcal{C}_d(\mathsf{X})$.
Let $\varphi:\mathbb{R}^d\rightarrow\mathbb{R}$, $\textrm{Lip}_{\|\cdot\|_2}(\mathbb{R}^{d})$ denotes the collection of real-valued functions that are Lipschitz w.r.t.~$\|\cdot\|_2$ ($\|\cdot\|_p$ denotes the $\mathbb{L}_p-$norm of a vector $x\in\mathbb{R}^d$). That is, $\varphi\in\textrm{Lip}_{\|\cdot\|_2}(\mathbb{R}^{d})$ if there exists a $C<+\infty$ such that for any $(x,y)\in\mathbb{R}^{2d}$
$$
|\varphi(x)-\varphi(y)| \leq C\|x-y\|_2.
$$
We write $\|\varphi\|_{\textrm{Lip}}$ as the Lipschitz constant of a function $\varphi\in\textrm{Lip}_{\|\cdot\|_2}(\mathbb{R}^{d})$.
For $\varphi\in\mathcal{B}_b(\mathsf{X})$, we write the supremum norm $\|\varphi\|=\sup_{x\in\mathsf{X}}|\varphi(x)|$.
$\mathcal{P}(\mathsf{X})$  denotes the collection of probability measures on $(\mathsf{X},\mathcal{X})$.
For a measure $\mu$ on $(\mathsf{X},\mathcal{X})$
and a $\varphi\in\mathcal{B}_b(\mathsf{X})$, the notation $\mu(\varphi)=\int_{\mathsf{X}}\varphi(x)\mu(dx)$ is used. 
$B(\mathbb{R}^d)$ denote the Borel sets on $\mathbb{R}^d$. $dx$ is used to denote the Lebesgue measure.
Let $K:\mathsf{X}\times\mathcal{X}\rightarrow[0,\infty)$ be a non-negative operator and $\mu$ be a measure then we use the notations
$
\mu K(dy) = \int_{\mathsf{X}}\mu(dx) K(x,dy)
$
and for $\varphi\in\mathcal{B}_b(\mathsf{X})$, 
$
K(\varphi)(x) = \int_{\mathsf{X}} \varphi(y) K(x,dy).
$
For $A\in\mathcal{X}$ the indicator is written $\mathbb{I}_A(x)$.
$\mathcal{U}_A$ denotes the uniform distribution on the set~$A$. 
$\mathcal{N}_s(\mu,\Sigma)$ (resp.~$\psi_s(x;\mu,\Sigma)$)
denotes an $s-$dimensional Gaussian distribution (density evaluated at $x\in\mathbb{R}^s$) of mean $\mu$ and covariance $\Sigma$. If $s=1$ we omit the subscript $s$. For a vector/matrix $X$, $X^*$ is used to denote the transpose of $X$.
For $A\in\mathcal{X}$, $\delta_A(du)$ denotes the Dirac measure of $A$, and if $A=\{x\}$ with $x\in \mathsf{X}$, we write $\delta_x(du)$. 
For a vector-valued function in $d-$dimensions (resp.~$d-$dimensional vector), $\varphi(x)$ (resp.~$x$) say, we write the $i^{\textrm{th}}-$component ($i\in\{1,\dots,d\}$) as $\varphi(x)^{(i)}$ (resp.~$x^{(i)}$). For a $d\times q$ matrix $x$ we write the $(i,j)^{\textrm{th}}-$entry as $x^{(ij)}$.
For $\mu\in\mathcal{P}(\mathsf{X})$ and $X$ a random variable on $\mathsf{X}$ with distribution associated to $\mu$ we use the notation $X\sim\mu(\cdot)$. 

\subsection{Diffusion Process}
\label{ssec:diff}
Let $\theta \in \Theta \subseteq \mathbb{R}^{d_{\theta}}$ be fixed and we consider a diffusion process on the probability space $(\Omega, \mathscr{F}, \{\mathscr{F}\}_{t \geq 0}, \mathbb{P}_{\theta})$, such that 
\begin{equation}
\label{eq:sde}
dX_t = a_{\theta}(X_t)dt +\sigma(X_t)dW_t, \quad X_0 = x_{\star} \in \mathbb{R}^d,
\end{equation}
where $X_t \in \mathbb{R}^d$, $X_0 = x_{\star}$ with $x_{\star}$ given, $a: \Theta \times \mathbb{R}^d \rightarrow \mathbb{R}^d$ is the drift term, $\sigma:\mathbb{R}^d \rightarrow \mathbb{R}^{d \times d}$ is the diffusion coefficient and $\{W_t\}_{t \geq 0}$ is a standard $d-$dimensional Brownian motion. We assume that for any fixed $\theta \in \Theta$, $a_{\theta}^{(i)} \in \mathcal{C}^2(\mathbb{R}^d)$ and $\sigma^{(ij)} \in \mathcal{C}^2(\mathbb{R}^d)$  for $(i,j) \in \{1,\ldots,d\}^2$. For fixed $x \in \mathbb{R}^d$ we have $a_{\theta}(x)^{(i)} \in \mathcal{C}(\Theta)$ for $i \in \{1,\ldots, d\}$. 

Furthermore we make the following additional assumption, termed (D1).
\begin{quote}
\begin{enumerate}
\item{\textit{Uniform ellipticity}:  $\Sigma(x) := \sigma(x)\sigma(x)^*$ is uniformly positive definite over $x \in \mathbb{R}^d$.}
\item{\textit{Globally Lipschitz}:  for any $\theta \in \Theta$, there exists a positive constant  $C < \infty$ such that 
$$
|a_{\theta}(x)^{(i)} - a_{\theta}(x')^{(i)}| + |\sigma(x)^{(ij)} - \sigma(x')^{(ij)}| \leq C \|x - x'\|_2,
$$
for all $(x,x') \in \mathbb{R}^d \times \mathbb{R}^d$, $(i,j) \in \{1,\ldots,d\}^2$.}
\end{enumerate}
\end{quote}

Let $0 < t_1, < \ldots <t_n = T>0$ be a given collection of time points. Following \cite{HHJ21}, by the use of Girsanov Theorem, for any $\mathbb{P}_{\theta}$-integrable $\varphi:\Theta \times \mathbb{R}^{nd}\rightarrow\mathbb{R}$, 
\begin{equation}
\label{eq:girs}
\mathbb{E}_{\theta}[\varphi_{\theta}(X_{t_1},\ldots,X_{t_n})] = \mathbb{E}_{\mathbb{Q}}\bigg[\varphi_{\theta}(X_{t_1},\ldots,X_{t_n})\frac{d\mathbb{P}_{\theta}}{d \mathbb{Q}}(\mathbf{X}_T)\bigg],
\end{equation}
where $\mathbb{E}_{\theta}$ denotes the expectation w.r.t. $\mathbb{P}_{\theta}$, set $\mathbf{X}_T=\{X_t\}_{t\in[0,T]}$, and the change of measure is given by
$$
\frac{d\mathbb{P}_{\theta}}{d \mathbb{Q}}(\mathbf{X}_T) = \exp \bigg\{-\frac{1}{2}\int^{T}_0\|b_{\theta}(X_s)\|^2_2 ds + \int^T_0 b_{\theta}(X_s)^*dW_s\bigg\},
$$
with $b_\theta(x) = \Sigma(x)^{-1}\sigma(x)^*a_{\theta}(x)$ is a $d-$vector.
As it will be useful below, we can modify the above expression, by using that $dX_t = \sigma(X_t)dW_t$, where $X_t$ solves such a process,
$$
\frac{d\mathbb{P}_{\theta}}{d \mathbb{Q}}(\mathbf{X}_T) = \exp \bigg\{-\frac{1}{2}\int^{T}_0\|b_{\theta}(X_s)\|^2_2 ds + \int^T_0 b_{\theta}(X_s)^*\Sigma(X_s)^-1\sigma(X_s)^*dX_s\bigg\}.
$$
Set 
$$
\rho_{\theta}(\mathbf{X}_T) = \varphi_{\theta}(X_{t_1:t_n})\frac{d\mathbb{P}_{\theta}}{d\mathbb{Q}}(\mathbf{X}_T).
$$
Now if we assume that $\varphi_{\theta}$ is differentiable w.r.t.~$\theta$, then one has for $i\in\{1,\dots,d_{\theta}\}$
\begin{equation}
\label{eq:grad}
\mathfrak{G}_{\theta}^{(i)} := \frac{\partial}{\partial
\theta^{(i)}} \Big(\log \left\{\mathbb{E}_{\theta}[\varphi_{\theta}(X_{t_1},\ldots,X_{t_n})]\right\}\Big) = \mathbb{E}_{\overline{\mathbb{P}}_{\theta}}\bigg[\frac{\partial}{\partial
\theta^{(i)}}\Big(\log\left\{\rho_{\theta}(\mathbf{X}_T)\right\}\Big)\bigg],
\end{equation}
where $\overline{\mathbb{P}}_{\theta} = \varphi_{\theta} \mathbb{P}_{\theta}/\mathbb{P}_{\theta}(\varphi_{\theta})$ and $\mathbb{P}_{\theta}[\varphi_{\theta}(X_{t_1},\ldots,X_{t_n})]$.
From herein we will use the short-hand notation $\varphi_{\theta}(X_{t_1},\ldots,X_{t_n})=\varphi_{\theta}(X_{t_1:t_n})$ and also set, for $i\in\{1,\dots,d_{\theta}\}$,
$$
G_{\theta}(\mathbf{X}_T)^{(i)} = \frac{\partial}{\partial
\theta^{(i)}}\Big(\log\left\{\rho_{\theta}(\mathbf{X}_T)\right\}\Big).
$$

\subsection{Hessian Expression}

Given the expression \eqref{eq:grad} our objective is now to write the matrix of second derivatives, for $(i,j)\in\{1,\dots,d_{\theta}\}^2$
$$
\mathfrak{H}_{\theta}^{(ij)} :=  -\frac{\partial^2}{\partial \theta^{(i)}\partial\theta^{(j)}}\Big(\log\left\{\mathbb{E}_{\theta}[\varphi_{\theta}(X_{t_1:t_n})]\right\}\Big),
$$
in terms of expectations w.r.t.~$\overline{\mathbb{P}}_{\theta}$.

We have the following simple calculation
\begin{align*}
\frac{\partial^2}{\partial \theta^{(i)}\partial\theta^{(j)}}\Big(\log\left\{\mathbb{E}_{\theta}[\varphi_{\theta}(X_{t_1:t_n})]\right\} \Big)
&= \frac{\partial}{\partial \theta^{(i)}}\Big(\frac{\frac{\partial}{\partial \theta^{(j)}}\left\{\mathbb{E}_{\theta}[\varphi_{\theta}(X_{t_1:t_n})]\right\}}{\mathbb{E}_{\theta}[\varphi_{\theta}(X_{t_1:t_n})]}\Big) \\
&=\frac{\frac{\partial^2}{\partial \theta^{(i)}\partial\theta^{(j)}}\left\{\mathbb{E}_{\theta}[\varphi_{\theta}(X_{t_1:t_n})]\right\}}{\mathbb{E}_{\theta}[\varphi_{\theta}(X_{t_1:t_n})]}
- \frac{\frac{\partial}{\partial \theta^{(i)}}\left\{\mathbb{E}_{\theta}[\varphi_{\theta}(X_{t_1:t_n})]\right\}\frac{\partial}{\partial \theta^{(j)}}\left\{\mathbb{E}_{\theta}[\varphi_{\theta}(X_{t_1:t_n})]\right\}}{\mathbb{E}_{\theta}[\varphi_{\theta}(X_{t_1:t_n})]^2} \\
&=: T_1 - T_2.
\end{align*}
Under relatively weak conditions, once can express $T_1$ and $T_2$ as
\begin{align*}
T_1 &= \mathbb{E}_{\overline{\mathbb{P}}_{\theta}}\Big[
\frac{\partial\log\{\rho_{\theta}(\mathbf{X}_T)\}}{\partial \theta^{(i)}}
\frac{\partial\log\{\rho_{\theta}(\mathbf{X}_T)\}}{\partial \theta^{(j)}}
\Big] +
\mathbb{E}_{\overline{\mathbb{P}}_{\theta}}\left[
\frac{\partial^2}{\partial \theta^{(i)}\partial\theta^{(j)}}\Big(\log\{\rho_{\theta}(\mathbf{X}_T)\}\Big)
\right] , \\
T_2 &=  \mathbb{E}_{\overline{\mathbb{P}}_{\theta}}\left[
\frac{\partial\log\{\rho_{\theta}(\mathbf{X}_T)\}}{\partial \theta^{(i)}}\right]
\mathbb{E}_{\overline{\mathbb{P}}_{\theta}}\left[
\frac{\partial\log\{\rho_{\theta}(\mathbf{X}_T)\}}{\partial \theta^{(j)}}\right].
\end{align*}
Therefore we have the following expression
\begin{eqnarray}
\mathfrak{H}_{\theta}^{(ij)}
& = &
\mathbb{E}_{\overline{\mathbb{P}}_{\theta}}\left[
\frac{\partial\log\{\rho_{\theta}(\mathbf{X}_T)\}}{\partial \theta^{(i)}}\right]
\mathbb{E}_{\overline{\mathbb{P}}_{\theta}}\left[
\frac{\partial\log\{\rho_{\theta}(\mathbf{X}_T)\}}{\partial \theta^{(j)}}\right]
-\mathbb{E}_{\overline{\mathbb{P}}_{\theta}}\Big[
\frac{\partial\log\{\rho_{\theta}(\mathbf{X}_T)\}}{\partial \theta^{(i)}}
\frac{\partial\log\{\rho_{\theta}(\mathbf{X}_T)\}}{\partial \theta^{(j)}}
\Big] - \nonumber\\ & &
\mathbb{E}_{\overline{\mathbb{P}}_{\theta}}\left[
\frac{\partial^2}{\partial \theta^{(i)}\partial\theta^{(j)}}\Big(\log\{\rho_{\theta}(\mathbf{X}_T)\}\Big)
\right].\label{eq:hessian}
\end{eqnarray}
Defining, for $(i,j)\in\{1,\dots,d_{\theta}\}^2$
$$
H_{\theta}(\mathbf{X}_T)^{(ij)} := \frac{\partial^2}{\partial \theta^{(i)}\partial\theta^{(j)}}\Big(\log\{\rho_{\theta}(\mathbf{X}_T)\}\Big),
$$
one can write more succinctly
$$
\mathfrak{H}_{\theta}^{(ij)} = \mathbb{E}_{\overline{\mathbb{P}}_{\theta}}\left[G_{\theta}(\mathbf{X}_T)^{(i)}\right]\mathbb{E}_{\overline{\mathbb{P}}_{\theta}}\left[G_{\theta}(\mathbf{X}_T)^{(j)}\right] - \mathbb{E}_{\overline{\mathbb{P}}_{\theta}}\left[G_{\theta}(\mathbf{X}_T)^{(i)}G_{\theta}(\mathbf{X}_T)^{(j)}\right] - \mathbb{E}_{\overline{\mathbb{P}}_{\theta}}\left[H_{\theta}(\mathbf{X}_T)^{(ij)}\right].
$$



\subsubsection{Stochastic Model}\label{sec:hmm}

Consider a sequence of random variables $(Y_{t_1},\ldots,Y_{t_n})$ where $0 < t_1 < \ldots <t_n=T$, where $Y_{t_p} \in \mathbb{R}^{d_y}$, which are assumed to have the following joint Lebesgue density
$$
p_{\theta}(y_{t_1},\ldots,y_{t_n}|\{x_s\}_{0 \leq s \leq T}) = \prod^n_{p=1}g_{\theta}(y_{t_p}|x_{t_p}),
$$
where $g:\Theta \times \mathbb{R}^d \times \mathbb{R}^{d_y} \rightarrow \mathbb{R}^+$ for any $(\theta,x) \in \Theta \times \mathbb{R}^d, \int_{\mathbb{R}^{d_y}}g_{\theta}(y|x)dy=1$ such that $dy$ is the Lebesgue measure. Now if one considers instead realizations of the random variables $(Y_{t_1},\ldots,Y_{t_n})$, then we have a state-space model with marginal likelihood
$$
p_{\theta}(y_{t_1},\ldots,y_{t_n}) := \mathbb{E}_{\theta}\bigg[  \prod^n_{p=1}g_{\theta}(y_{t_p}|X_{t_p}) \bigg].
$$
Note that the framework to be investigated in this article is not restricted to this special case, but, we shall focus on it for the rest of the paper. So to clarify 
$\varphi_{\theta}(x_{t_1},\ldots,x_{t_n}) = \prod^n_{p=1}g_{\theta}(y_{t_p}|x_{t_p})$ from herein.

In reference to \eqref{eq:grad} and \eqref{eq:hessian}, we have that
\begin{eqnarray*}
\frac{\partial\log\{\rho_{\theta}(\mathbf{X}_T)\}}{\partial \theta^{(i)}} & = & 
 \sum^n_{p=1}\frac{\partial}{\partial \theta^{(i)}}\Big(\log\{g_{\theta}(y_{t_p}|x_{t_p})\}\Big) - \frac{1}{2}\int^T_0\frac{\partial}{\partial \theta^{(i)}}\Big(\|b_{\theta}(X_s)\|^2_2\Big)ds + \\
& & \frac{\partial}{\partial \theta^{(i)}} \Big(\int^T_0b_{\theta}(X_s)^*\Sigma(X_s)^{-1}\sigma(X_s)^*dX_s\Big)
\\
\frac{\partial^2}{\partial \theta^{(i)}\partial\theta^{(j)}}\Big(\log\{\rho_{\theta}(\mathbf{X}_T)\}\Big) & = & 
 \sum^n_{p=1}\frac{\partial^2}{\partial \theta^{(i)}\partial\theta^{(j)}}\Big(\log\{g_{\theta}(y_{t_p}|x_{t_p})\}\Big) - \frac{1}{2}\int^T_0\frac{\partial^2}{\partial \theta^{(i)}\partial\theta^{(j)}}\Big(\|b_{\theta}(X_s)\|^2_2\Big)ds +\\
& & \frac{\partial^2}{\partial \theta^{(i)}\partial\theta^{(j)}}\Big(\int^T_0 b_{\theta}(X_s)^*\Sigma(X_s)^{-1}\sigma(X_s)^*dX_s\Big).
\end{eqnarray*}

\subsection{Time-Discretization}

From herein, we take the simplification that $t_p=p, p \in \{1,\ldots,n\},T=n$. Let $l \in \mathbb{N}_0$ be given and consider the Euler discretization of step size $\Delta_l = 2^{-l}, k \in \{1,2,\ldots,\Delta^{-1}_lT\}$ with $\tilde{X}_0 = x_{\star}$:
\begin{equation}\label{eq:euler}
\tilde{X}_{k\Delta_l} = \tilde{X}_{(k-1)\Delta_l} + a_{\theta}(\tilde{X}_{(k-1)\Delta_l})\Delta_l + \sigma( \tilde{X}_{(k-1)\Delta_l})[W_{k\Delta l} - W_{(k-1)\Delta_l}].
\end{equation}
Set $\mathbf{x}_T^l=(x_{\star},\tilde{x}_{\Delta_l},\ldots,\tilde{x}_T)$. We then consider the vector-valued function
$G^l:\Theta \times (\mathbb{R}^d)^{\Delta_l^{-1}T+1} \rightarrow \mathbb{R}^{d_{\theta}}$ and the matrix-valued function
$H^l:\Theta \times (\mathbb{R}^d)^{\Delta_l^{-1}T+1} \rightarrow \mathbb{R}^{d_{\theta}\times d_{\theta}}$
defined as, for $(i,j)\in\{1,\dots,d_{\theta}\}^2$
\begin{eqnarray*}
G_{\theta}^l(\mathbf{x}_T^l)^{(i)} & = &  \sum^n_{p=1}\frac{\partial}{\partial \theta^{(i)}}\Big(\log\{g_{\theta}(y_{p}|\tilde{x}_{p})\}\Big) - \frac{\Delta_l}{2} \sum^{\Delta_l^{-1}T-1}_{k=0} \frac{\partial}{\partial \theta^{(i)}}\Big(\|b_{\theta}(\tilde{x}_{k\Delta_l})\|^2_2\Big) + \\
& &  \sum^{\Delta_l^{-1}T-1}_{k=0} \frac{\partial}{\partial \theta^{(i)}}\Big(b_{\theta}(\tilde{x}_{k\Delta_l})^*\Sigma(\tilde{x}_{k\Delta_l})^{-1}\sigma(\tilde{x}_{k\Delta_l})^*[\tilde{x}_{(k+1)\Delta_l}-\tilde{x}_{k\Delta_l}]\Big), \\
H_{\theta}^l(\mathbf{x}_T^l)^{(ij)} & = &  \sum^n_{p=1}\frac{\partial^2}{\partial \theta^{(i)}\partial\theta^{(j)}}\Big(\log\{g_{\theta}(y_{p}|\tilde{x}_{p})\}\Big) - \frac{\Delta_l}{2}
 \sum^{\Delta_l^{-1}T-1}_{k=0}
\frac{\partial^2}{\partial \theta^{(i)}\partial\theta^{(j)}}\Big(\|b_{\theta}(\tilde{x}_{k\Delta_l})\|^2_2\Big) + \\
& & \sum^{\Delta_l^{-1}T-1}_{k=0} \frac{\partial^2}{\partial \theta^{(i)}\partial\theta^{(j)}}\Big(b_{\theta}(\tilde{x}_{k\Delta_l})^*\Sigma(\tilde{x}_{k\Delta_l})^{-1}\sigma(\tilde{x}_{k\Delta_l})^*[\tilde{x}_{(k+1)\Delta_l}-\tilde{x}_{k\Delta_l}]\Big).
\end{eqnarray*}

Then, noting \eqref{eq:hessian}, we have an Euler approximation of the Hessian
\begin{eqnarray*}
\label{eq:approx2}
\mathfrak{H}_{\theta}^{l,(ij)}  & := &  \frac{\mathbb{E}_{\theta}[\varphi_{\theta}(\tilde{X}_{1:n})G_{\theta}^l(\mathbf{X}_T^l)^{(i)}]}{\mathbb{E}_{\theta}[\varphi_{\theta}(\tilde{X}_{1:n})]}
\frac{\mathbb{E}_{\theta}[\varphi_{\theta}(\tilde{X}_{1:n})G_{\theta}^l(\mathbf{X}_T^l)^{(j)}]}{\mathbb{E}_{\theta}[\varphi_{\theta}(\tilde{X}_{1:n})]} - 
\frac{\mathbb{E}_{\theta}[\varphi_{\theta}(\tilde{X}_{1:n})G_{\theta}^l(\mathbf{X}_T^l)^{(i)}G_{\theta}^l(\mathbf{X}_T^l)^{(j)}]}{\mathbb{E}_{\theta}[\varphi_{\theta}(\tilde{X}_{1:n})]} - 
\\ & & 
\frac{\mathbb{E}_{\theta}[\varphi_{\theta}(\tilde{X}_{1:n})H_{\theta}^l(\mathbf{X}_T^l)^{(ij)}]}{\mathbb{E}_{\theta}[\varphi_{\theta}(\tilde{X}_{1:n})]}.
\end{eqnarray*}

In the context of the model in Section \ref{sec:hmm}, if one sets
$$
\pi_{\theta}^l(d\mathbf{x}_T^l) \propto \left\{\prod^n_{p=1}g_{\theta}(y_{p}|\tilde{x}_{p})p_{\theta}^l(\tilde{x}_{p-1},\tilde{x}_{p})\right\}d\mathbf{x}_T^l,
$$
where $p_{\theta}^l$ is the transition density induced by \eqref{eq:euler} (over unit time) and we use the abuse of notation that $d\mathbf{x}_T^l$ is the Lebesgue
measure on the co-ordinates $(\tilde{x}_{\Delta_l},\ldots,\tilde{x}_T)$, then one has that 
\begin{equation}\label{eq:hessian_disc}
\mathfrak{H}_{\theta}^{l,(ij)} = \pi_{\theta}^l(G_{\theta}^{l,(i)})\pi_{\theta}^l(G_{\theta}^{l,(j)}) - \pi_{\theta}^l(G_{\theta}^{l,(i)}G_{\theta}^{l,(j)}) - \pi_{\theta}^l(H_{\theta}^{l,(ij)}),
\end{equation}
where we are using the short-hand $G_{\theta}^{l,(i)} = G_{\theta}^l(\mathbf{X}_T^l)^{(i)}$ and $H_{\theta}^{l,(ij)} = H_{\theta}^l(\mathbf{X}_T^l)^{(ij)}$ etc.

We have the following result whose proof and assumption (D2) is in Appendix \ref{app:hess_conv}.
\begin{prop}\label{prop:hess_conv}
Assume (D1-D2). Then for any $(i,j)\in\{1,\dots,d_{\theta}\}^2$ we have
$$
\lim_{l\rightarrow\infty}\mathfrak{H}_{\theta}^{l,(ij)} = \mathfrak{H}_{\theta}^{(ij)}.
$$
\end{prop}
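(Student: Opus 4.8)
The plan is to reduce the claim to the convergence of the scalar quantities out of which $\mathfrak{H}_\theta^{l,(ij)}$ and $\mathfrak{H}_\theta^{(ij)}$ are built, and to establish each such convergence by coupling the Euler scheme \eqref{eq:euler} to the diffusion \eqref{eq:sde} on a common probability space. Recall from \eqref{eq:hessian_disc} and \eqref{eq:hessian} that $\mathfrak{H}_\theta^{l,(ij)}=\pi_\theta^l(G_\theta^{l,(i)})\pi_\theta^l(G_\theta^{l,(j)})-\pi_\theta^l(G_\theta^{l,(i)}G_\theta^{l,(j)})-\pi_\theta^l(H_\theta^{l,(ij)})$ while $\mathfrak{H}_\theta^{(ij)}$ has the same algebraic form with $\pi_\theta^l$ replaced by $\mathbb{E}_{\overline{\mathbb{P}}_\theta}$; since sums and products of convergent real sequences converge, it suffices to prove, for every $(i,j)\in\{1,\dots,d_\theta\}^2$,
\[
\pi_\theta^l(G_\theta^{l,(i)})\to\mathbb{E}_{\overline{\mathbb{P}}_\theta}[G_\theta(\mathbf{X}_T)^{(i)}],\qquad \pi_\theta^l(G_\theta^{l,(i)}G_\theta^{l,(j)})\to\mathbb{E}_{\overline{\mathbb{P}}_\theta}[G_\theta(\mathbf{X}_T)^{(i)}G_\theta(\mathbf{X}_T)^{(j)}],\qquad \pi_\theta^l(H_\theta^{l,(ij)})\to\mathbb{E}_{\overline{\mathbb{P}}_\theta}[H_\theta(\mathbf{X}_T)^{(ij)}]
\]
as $l\to\infty$. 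Each of these is a ratio of expectations, numerator $\mathbb{E}_\theta[\varphi_\theta(\tilde{X}_{1:n})\,\psi_\theta^l(\mathbf{X}_T^l)]$ over denominator $\mathbb{E}_\theta[\varphi_\theta(\tilde{X}_{1:n})]$, to be compared with $\mathbb{E}_\theta[\varphi_\theta(X_{1:n})\,\psi_\theta(\mathbf{X}_T)]$ over $\mathbb{E}_\theta[\varphi_\theta(X_{1:n})]=\mathbb{P}_\theta(\varphi_\theta)>0$, so it is enough that numerator and denominator converge separately.

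To obtain that, realise \eqref{eq:euler} and \eqref{eq:sde} driven by the same Brownian motion $W$, and use two ingredients. \emph{Strong convergence of the Euler scheme}: under (D1) the coefficients are globally Lipschitz with linear growth, so the classical Euler--Maruyama theory (see e.g.\ \cite{KP13}) gives $\mathbb{E}\big[\max_{0\le k\le\Delta_l^{-1}T}\|\tilde{X}_{k\Delta_l}-X_{k\Delta_l}\|_2^2\big]\to0$ together with $\sup_l\mathbb{E}\big[\max_{0\le k\le\Delta_l^{-1}T}\|\tilde{X}_{k\Delta_l}\|_2^q\big]<\infty$ for every $q\ge1$. \emph{Convergence of the discretised functionals $\psi_\theta^l$}: the observation-derivative sums over $p$ of the relevant derivatives of $\log g_\theta(y_p|\tilde{x}_p)$ converge by continuity of those derivatives and the Euler convergence; the Riemann-sum terms built from $\|b_\theta(\tilde{X}_{k\Delta_l})\|_2^2$ and its $\theta$-derivatives converge to the corresponding time integrals because $s\mapsto X_s$ is continuous, the integrands are continuous in $x$ with polynomial growth, and $\tilde{X}^l\to X$; and the discretised stochastic-integral terms $\sum_k c_\theta(\tilde{X}_{k\Delta_l})[\tilde{X}_{(k+1)\Delta_l}-\tilde{X}_{k\Delta_l}]$, with $c_\theta$ the appropriate first or second $\theta$-derivative of $b_\theta^*\Sigma^{-1}\sigma^*$, are handled by writing $\tilde{X}_{(k+1)\Delta_l}-\tilde{X}_{k\Delta_l}=a_\theta(\tilde{X}_{k\Delta_l})\Delta_l+\sigma(\tilde{X}_{k\Delta_l})[W_{(k+1)\Delta_l}-W_{k\Delta_l}]$, treating the drift part as a Riemann sum and showing the martingale part converges in $L^2$ to $\int_0^T c_\theta(X_s)\sigma(X_s)\,dW_s$ by combining the standard left-point Riemann--It\^o approximation of that integral with an It\^o-isometry estimate (controlled via local Lipschitz continuity and polynomial growth of $c_\theta\sigma$ and the moment bounds above) for replacing $X_{k\Delta_l}$ by $\tilde{X}_{k\Delta_l}$. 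Hence $\varphi_\theta(\tilde{X}_{1:n})\to\varphi_\theta(X_{1:n})$ and each $\psi_\theta^l(\mathbf{X}_T^l)\to\psi_\theta(\mathbf{X}_T)$ in probability.

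To pass from convergence in probability of the integrands to convergence of the numerators and denominators, invoke uniform integrability: (D2) should provide the growth and integrability control on $a_\theta$, $\Sigma^{-1}$, $\sigma$, $g_\theta$ and their first and second $\theta$-derivatives which, with the uniform Euler moment bounds above, bounds a moment of order strictly greater than one of $\varphi_\theta(\tilde{X}_{1:n})$, of each $\psi_\theta^l(\mathbf{X}_T^l)$, and of the products $\varphi_\theta(\tilde{X}_{1:n})\psi_\theta^l(\mathbf{X}_T^l)$, uniformly in $l$. Vitali's theorem then upgrades the in-probability limits to $L^1$ limits, and dividing (the denominator limit being $\mathbb{P}_\theta(\varphi_\theta)>0$) and then using the polynomial continuity noted above completes the proof. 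The main obstacle, I expect, is precisely this last uniform-integrability bookkeeping together with the stochastic-integral step of the second ingredient: the coefficients entering $G_\theta^l$ and $H_\theta^l$ have polynomial rather than bounded growth, and for the quadratic block $G_\theta^{l,(i)}G_\theta^{l,(j)}$ one multiplies two such possibly-unbounded functionals by the (possibly unbounded) observation-density product $\varphi_\theta$, so the hypotheses of (D2) must be deployed carefully; the linear score terms $G_\theta^l\to G_\theta$ are essentially already available from \cite{HHJ21}, so the genuinely new content is the $H$-term and the $G^{(i)}G^{(j)}$-term.
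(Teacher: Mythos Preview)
Your approach is correct and follows the same underlying mechanism as the paper: couple the Euler scheme to the diffusion via the same Brownian motion and pass to the limit in each of the three blocks of \eqref{eq:hessian_disc}. The execution differs, however. The paper packages the analytic work into Lemma~\ref{lem:diff2}, which gives quantitative $L^r$ bounds of order $\Delta_l^{1/2}$ for $\varphi_\theta(\tilde X_{1:n})G_\theta^{l,(i)}$, $\varphi_\theta(\tilde X_{1:n})H_\theta^{l,(ij)}$ and $\varphi_\theta(\tilde X_{1:n})G_\theta^{l,(i)}G_\theta^{l,(j)}$ against their continuous counterparts; the proof of the proposition is then a short algebraic exercise using the identity $a/b-c/d=(a/bd)(d-b)+(a-c)/d$ together with the bound $|\mathbb{E}_\theta[\varphi_\theta(\tilde X_{1:n})]-\mathbb{E}_\theta[\varphi_\theta(X_{1:n})]|\le C\Delta_l^{1/2}$. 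Your route---convergence in probability of integrands plus uniform integrability via Vitali---works but yields only the qualitative limit, whereas the paper's rate $\mathcal{O}(\Delta_l^{1/2})$ is reused later to verify \eqref{eq:ub2} for the unbiased estimator. One clarification that resolves your stated ``main obstacle'': under (D2) the functions $a_\theta$, $\sigma$, $\Sigma^{-1}$, the $\theta$-derivatives of $b_\theta$ and $(b_\theta^{(j)})^2$, and $\partial_{\theta^{(i)}}\log g_\theta$ are all \emph{bounded}, and $g_\theta$ is bounded above and below by positive constants, so $\varphi_\theta$ is bounded and bounded away from zero; your uniform-integrability bookkeeping is therefore much lighter than you anticipate, and the stochastic-integral step reduces to a bounded-Lipschitz It\^o-isometry estimate rather than a polynomial-growth argument.
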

The main strategy of the proof is by strong convergence, which means that one can characterize an upper-bound on $|\mathfrak{H}_{\theta}^{l,(ij)}-\mathfrak{H}_{\theta}^{(ij)}|$ of $\mathcal{O}(\Delta_l^{1/2})$ but that rate is most likely not sharp, as one expects $\mathcal{O}(\Delta_l)$.

\section{Algorithm}
\label{sec:algo}

The objective of this Section is, using only approximations of \eqref{eq:hessian_disc}, is to obtain an unbiased estimate of $\mathfrak{H}_{\theta}^{(ij)}$ for any fixed
$\theta\in\Theta$ and $(i,j)\in\{1,\dots,d_{\theta}\}^2$. Our approach is essentially an application of the methodology in \cite{HHJ21} and so we provide a review of that approach in the sequel.

\subsection{Strategy}

To focus our description, we shall suppose that we are interested in computing an unbiased estimate of $\mathfrak{G}_{\theta}^{(i)}$ for some fixed $i$; we remark that this specialization is not needed and is only used for notational convenience. An Euler approximation
of $\mathfrak{G}_{\theta}^{(i)}$  is $\pi_{\theta}^l(G_{\theta}^{l,(i)})=:\mathfrak{G}_{\theta}^{l,(i)}$. To further simplify the notation we will simply write $G_{\theta}^l$ instead of $G_{\theta}^{l,(i)}$.

Suppose that one can construct a sequence of random variables $(\widehat{\pi}_{\theta}^l(G_{\theta}^l))_{l\in\mathbb{N}_0}$ on a potentially extended probability space with expectation operator $\overline{\mathbb{E}}_{\theta}$, such that for each $l\in\mathbb{N}_0$,
$\overline{\mathbb{E}}_{\theta}[\widehat{\pi}_{\theta}^l(G_{\theta}^l)]=\pi_{\theta}^l(G_{\theta}^l)$. Moreover, consider the independent sequence of random variables, $(\Xi_{\theta}^l)_{l\in\mathbb{N}_0}$ which are constructed so that for $l\in\mathbb{N}_0$ 
\begin{equation}\label{eq:ub1}
\overline{\mathbb{E}}_{\theta}[\Xi_{\theta}^l] := \overline{\mathbb{E}}_{\theta}[\widehat{\pi}_{\theta}^l(G_{\theta}^l)] - \overline{\mathbb{E}}_{\theta}[\widehat{\pi}_{\theta}^{l-1}(G_{\theta}^{l-1})] = \pi_{\theta}^{l}(G_{\theta}^{l}) - \pi_{\theta}^{l-1}(G_{\theta}^{l-1}),
\end{equation}
with $\overline{\mathbb{E}}_{\theta}[\widehat{\pi}_{\theta}^{-1}(G_{\theta}^{-1})]:=\pi_{\theta}^{-1}(G_{\theta}^{-1}):=0$. Now let $\mathbb{P}_L$ be a positive probability mass function
on $\mathbb{N}_0$ and set $\overline{\mathbb{P}}_L(l) = \sum_{p=l}^{\infty}\mathbb{P}_L(p)$. Now if, 
\begin{equation}\label{eq:ub2}
\sum_{l\in\mathbb{N}_0}\frac{1}{\overline{\mathbb{P}}_L(l)}\Big\{\overline{\mathbb{V}\textrm{ar}}_{\theta}[\Xi_{\theta}^l] + (\mathfrak{G}_{\theta}^{l,(i)}-
\mathfrak{G}_{\theta}^{(i)})^2\Big\} <+\infty,
\end{equation}
then if one samples $L$ from $\mathbb{P}_L$ independently of the sequence $(\Xi_l)_{l\in\mathbb{N}_0}$  then by e.g.~\cite[Theorem 5]{MV18} the estimate
\begin{equation}\label{eq:ub3}
\widehat{\mathfrak{G}}_{\theta}^{(i)} := \sum_{l=0}^L \frac{\Xi_{\theta}^l}{\overline{\mathbb{P}}_L(l)},
\end{equation}
is an unbiased and finite variance estimator of $\mathfrak{G}_{\theta}^{(i)}$. The main issue is to construct the sequence of independent random variables $(\Xi_{\theta}^l)_{l\in\mathbb{N}_0}$ such that 
\eqref{eq:ub1} and \eqref{eq:ub2} hold and that the expected computational cost for doing so is not unreasonable as a functional of $\Delta_l$: a method for doing this is in \cite{HHJ21} as we will now describe.

\subsection{Computing $\Xi_{\theta}^0$}

The computation of $\Xi_{\theta}^0$ is performed by using exactly the coupled conditional particle filter (CCPF) that has been introduced in \cite{JLS21}. This is an algorithm which allows
one to construct a random variable $\widehat{\pi}_{\theta}^0(G_{\theta}^0)$ such that $\overline{\mathbb{E}}_{\theta}[\widehat{\pi}_{\theta}^0(G_{\theta}^0)] = \pi_{\theta}^0(G_{\theta}^0)$
and we will set $\Xi_{\theta}^0=\widehat{\pi}_{\theta}^0(G_{\theta}^0)$.

\begin{algorithm}[h]
\begin{enumerate}
\item{Input $x_{\Delta_l:n}'\in\mathsf{X}^l$. Set $k=1$, $x_0^i=x_{\star}$, $a_{0}^i=i$ for $i\in\{1,\dots,N-1\}$.}
\item{Sampling: for $i\in\{1,\dots,N-1\}$ sample $x_{k-1+\Delta_l:k}^i|x_{k-1}^{a_{k-1}^i}$ using the Markov kernel $p_{\theta}^l$. Set $x_{k-1+\Delta_l:k}^N=x_k'$ and for $i\in\{1,\dots,N-1\}$, $x_{\Delta_l:k}^i=(x_{\Delta_l:k-1}^{a_{k-1}^i},x_{k-1+\Delta_l:k}^i)$. If $k=n$ go to 4..}
\item{Resampling: Construct the probability mass function on $\{1,\dots,N\}$:
$$
r_1^i = \frac{g_{\theta}(y_k|x_k^i)}{\sum_{j=1}^Ng_{\theta}(y_k|x_k^j)}.
$$
For $i\in\{1,\dots,N-1\}$ sample $a_k^i$ from $r_1^i$. Set $k=k+1$ and return to the start of 2..}
\item{Construct the probability mass function on $\{1,\dots,N\}$:
$$
r_1^i = \frac{g_{\theta}(y_n|x_n^i)}{\sum_{j=1}^Ng_{\theta}(y_n|x_n^j)}.
$$
Sample $i\in\{1,\dots,N\}$ using this mass function and return $x_{\Delta_l:n}^i$.}
\end{enumerate}
\caption{Conditional Particle Filter at level $l\in\mathbb{N}_0$.}
\label{alg:cond_pf_0}
\end{algorithm}

We begin by introducing the Markov kernel $C^l:\Theta\times\mathsf{X}^l\rightarrow\mathcal{P}(\mathsf{X}^l)$ in Algorithm
\ref{alg:cond_pf_0}. 
To that end we will use the notation $x^{i,l}_{\Delta_l:k}\in(\mathbb{R}^d)^{k\Delta_l^{-1}}$, where $l\in\mathbb{N}_0$ is the level of discretization, $i\in\{1,\dots,N\}$ is a 
particle (sample) indicator, $k\in\{1,\dots,n\}$ is a time parameter and $x^{i,l}_{\Delta_l:k}=(x_{\Delta_l}^{i,l},x_{2\Delta_l}^{i,l},\dots,x_{k}^{i,l})$.
The kernel described in Algorithm
\ref{alg:cond_pf_0} is called the called the conditional particle filter, as developed in \cite{CDH10} and allows one to generate, under minor conditions, an ergodic Markov chain of invariant measure $\pi_{\theta}^l$. By itself, it does not provide unbiased
estimates of expectations w.r.t.~$\pi_{\theta}^l$, unless $\pi_{\theta}^l$ is the initial distribution of the Markov chain. However, the kernel will be of use in our subsequent discussion.

Our approach generates a Markov chain $\{Z_m\}_{m\in\mathbb{N}_0}$ on the space $\mathsf{Z}^0 := \mathsf{X}^0\times\mathsf{X}^0$, $Z_m\in \mathsf{Z}^0$. In order to describe how one can simulate this Markov chain, we introduce several objects which will be needed. The first of which is the kernel $\check{p}^l:\Theta\times\mathbb{R}^{2d}\rightarrow\mathcal{P}(\mathbb{R}^{2d})$, which we need in the case $l=0$ and its simulation is described in Algorithm \ref{alg:pl_def}. We will also need to simulate
the maximal coupling of two probability mass functions on $\{1,\dots,N\}$, for some $N\in\mathbb{N}$, and this is described in Algorithm \ref{alg:max_coup}.

\begin{rem}\label{rem:coup_resid}
Step 4.~of Algorithm \ref{alg:max_coup} can be modified to the case where one generates the pair $(i,j)\in\{1,\dots,N\}^2$ from any coupling of the two probability
mass functions $(r_4,r_5)$. In our simulations in Section \ref{sec:num} we will do this by sampling by inversion from $(r_4,r_5)$, using the same uniform random variable.
However, to simplify the mathematical analysis that we will give in the Appendix, we consider exactly Algorithm \ref{alg:max_coup} in our calculations.
\end{rem}

To describe the CCPF kernel, we must first introduce a driving coupled conditional particle filter, which is presented in Algorithm \ref{alg:cond_pf}. The driving coupled conditional particle
filter is nothing more than an ordinary coupled particle filter, except the final pair of trajectories is `frozen' as is given to the algorithm (that is $(x_{1:n}',\bar{x}_{1:n}')$ as in step
1.~of Algorithm \ref{alg:cond_pf}) and allowed to interact with the rest of the particle system. 
Given the ingredients in Algorithms \ref{alg:pl_def}-\ref{alg:cond_pf}
we are now in a position to describe the CCPF kernel, which is a Markov kernel $K^0:\Theta\times\mathsf{Z}^0\rightarrow\mathcal{P}(\mathsf{Z}^0)$, whose simulation is presented in Algorithm \ref{alg:ccpf}. We will consider the Markov chain $\{Z_m\}_{m\in\mathbb{N}_0}$, $Z_m=(X_{1:n}(m),\bar{X}_{1:n}(m))$,  generated by the CCPF kernel in
Algorithm \ref{alg:ccpf} and with initial distribution 
\begin{equation}\label{eq:ccpf_ini}
\nu_{\theta}^0\Big(d(x_{1:n},\bar{x}_{1:n})\Big) = \int_{\mathsf{Z}^0}\Big(\prod_{k=1}^n p_{\theta}^0(x_{k-1}',x_k')\Big)\Big(\prod_{k=1}^n p_{\theta}^0(\bar{x}_{k-1}',\bar{x}_k')\Big)
C_{\theta}^0(x_{1:n}',dx_{1:n})\delta_{\{\bar{x}_{1:n}'\}}(d\bar{x}_{1:n}),
d(x_{1:n},\bar{x}_{1:n})
\end{equation}
where $x_0'=\bar{x}_0'=x_{\star}$. 

We remark that in Algorithm \ref{alg:ccpf}, marginally, $x_{1:n}^i$ (resp.~$\bar{x}_{1:n}^j$) has 
been generated according to $C_{\theta}^0(x_{1:n},\cdot)$ (resp.~$C_{\theta}^0(\bar{x}_{1:n},\cdot)$).
A rather important point is that if the two input trajectories in step 1.~of Algorithm \ref{alg:ccpf} are equal, i.e.~$x_{1:n}=\bar{x}_{1:n}$, then the output trajectories will also be equal. 
To that end, define the stopping time associated to the given Markov chain
$$
\tau^0 = \inf\{m\geq 1: x_{1:n}(m) = \bar{x}_{1:n}(m)\}.
$$
Then, setting $m^*\in\{2,3,\dots\}$ one has the following estimator
\begin{equation}\label{eq:pi_0_def}
\widehat{\pi}_{\theta}^0(G_{\theta}^0) := G_{\theta}^0(X_{1:n}(m^*)) + \sum_{m=m^*+1}^{\tau^0-1} \{G_{\theta}^0(X_{1:n}(m))-G_{\theta}^0(\bar{X}_{1:n}(m))\},
\end{equation}
and one sets $\Xi_{\theta}^0=\widehat{\pi}_{\theta}^0(G_{\theta}^0)$. The procedure for computing $\Xi_{\theta}^0$ is summarized in Algorithm \ref{alg:xi_0_comp}. 

\begin{algorithm}[h]
\begin{enumerate}
\item{Input $(x_0,\bar{x}_0)\in\mathbb{R}^{2d}$ and the level $l\in\mathbb{N}_0$.}
\item{Generate $V_{k\Delta_l}\stackrel{\textrm{i.i.d.}}{\sim}\mathcal{N}_d(0,\Delta_l I_d)$, for $k\in\{1,\dots,\Delta_l^{-1}\}$.}
\item{Run the two recursions, for $k\in\{1,\dots,\Delta_l^{-1}\}$:
\begin{eqnarray*}
X_{k\Delta_l} & = & X_{(k-1)\Delta_l} + a_{\theta}(X_{(k-1)\Delta_l})\Delta_l + \sigma(X_{(k-1)\Delta_l})V_{k\Delta_l} \\
\bar{X}_{k\Delta_l} & = & \bar{X}_{(k-1)\Delta_l} + a_{\theta}(\bar{X}_{(k-1)\Delta_l})\Delta_l + \sigma(\bar{X}_{(k-1)\Delta_l})V_{k\Delta_l}.
\end{eqnarray*}
}
\item{Return $(x_1,\bar{x}_1)\in\mathbb{R}^{2d}$.}
\end{enumerate}
\caption{Simulating the Kernel $\check{p}_{\theta}^l$.}
\label{alg:pl_def}
\end{algorithm}

\begin{algorithm}[h]
\begin{enumerate}
\item{Input: Two probability mass functions (PMFs) $(r_1^1,\dots,r_1^N)$ and $(r_2^1,\dots,r_2^N)$ on $\{1,\dots,N\}$.}
\item{Generate $U\sim\mathcal{U}_{[0,1]}$.}
\item{If $U<\sum_{i=1}^N \min\{r_1^i,r_2^i\}=:\bar{r}$ then generate $i\in\{1,\dots,N\}$ according to the probability mass function:
$$
r_3^i = \frac{1}{\bar{r}} \min\{r_1^i,r_2^i\},
$$
and set $j=i$.}
\item{Otherwise generate $i\in\{1,\dots,N\}$ and $j\in\{1,\dots,N\}$ independently according to the probability mass functions 
$$
r_4^i = \frac{1}{1-\bar{r}} (r_1^i - \min\{r_1^i,r_2^i\}),
$$
and
$$
r_5^j = \frac{1}{1-\bar{r}} (r_2^j - \min\{r_1^j,r_2^j\}),
$$
respectively.
}
\item{Output: $(i,j)\in\{1,\dots,N\}^2$. $i$, marginally has PMF $r_1^i$ and $j$, marginally has PMF $r_2^j$.}
\end{enumerate}
\caption{Simulating a Maximal Coupling of Two Probability Mass Functions on $\{1,\dots,N\}$.}
\label{alg:max_coup}
\end{algorithm}

\begin{algorithm}[h]
\begin{enumerate}
\item{Input $(x_{1:n}',\bar{x}_{1:n}')\in\mathsf{Z}^0$. Set $k=1$, $(x_0^i,\bar{x}_0^i)=(x_{\star},x_{\star})$, $a_{0}^i=\bar{a}_0^i=i$ for $i\in\{1,\dots,N-1\}$.}
\item{Sampling: for $i\in\{1,\dots,N-1\}$ sample $(x_k^i,\bar{x}_k^i)|(x_{k-1}^{a_{k-1}^i},\bar{x}_{k-1}^{\bar{a}_{k-1}^i})$ using the Markov kernel $\check{p}_{\theta}^0$ in Algorithm \ref{alg:pl_def}. Set $(x_k^N,\bar{x}_k^N)=(x_k',\bar{x}_k')$ and for $i\in\{1,\dots,N-1\}$, $(x_{1:k}^i,\bar{x}_{1:k}^i)=((x_{1:k-1}^{a_{k-1}^i},x_k^i),(\bar{x}_{1:k-1}^{\bar{a}_{k-1}^i},\bar{x}_k^i))$. If $k=n$ stop.}
\item{Resampling: Construct the two probability mass functions on $\{1,\dots,N\}$:
$$
r_1^i = \frac{g_{\theta}(y_k|x_k^i)}{\sum_{j=1}^Ng_{\theta}(y_k|x_k^j)}\quad r_2^i = \frac{g_{\theta}(y_k|\bar{x}_k^i)}{\sum_{j=1}^Ng_{\theta}(y_k|\bar{x}_k^j)}\quad i\in\{1,\dots,N\}.
$$
For $i\in\{1,\dots,N-1\}$ sample $(a_k^i,\bar{a}_k^i)$ from the maximum coupling of the two given probability mass functions, using Algorithm \ref{alg:max_coup}. Set $k=k+1$ and return to the start of 2..}
\end{enumerate}
\caption{Driving Coupled Conditional Particle Filter at level $0$.}
\label{alg:cond_pf}
\end{algorithm}

\begin{algorithm}[h]
\begin{enumerate}
\item{Input $(x_{1:n},\bar{x}_{1:n})\in\mathsf{Z}^0$.}
\item{Run Algorithm \ref{alg:cond_pf}.}
\item{Construct the two probability mass functions on $\{1,\dots,N\}$:
$$
r_1^i = \frac{g_{\theta}(y_n|x_n^i)}{\sum_{j=1}^Ng_{\theta}(y_n|x_n^j)}\quad r_2^i = \frac{g_{\theta}(y_n|\bar{x}_n^i)}{\sum_{j=1}^Ng_{\theta}(y_n|\bar{x}_n^j)}\quad i\in\{1,\dots,N\}.
$$
Sample $(i,j)\in\{1,\dots,N\}^2$ from the maximum coupling of the two given probability mass functions, using Algorithm \ref{alg:max_coup}. Return $(x_{1:n}^i,\bar{x}_{1:n}^j)$ which
are the path of samples at indices $i$ and $j$ in step 2.~of Algorithm \ref{alg:cond_pf} when $k=n$.}
\end{enumerate}
\caption{The Coupled Conditional Particle Filter at level $0$.}
\label{alg:ccpf}
\end{algorithm}

\begin{algorithm}[h]
\begin{enumerate}
\item{Initialize the Markov chain by generating $Z_0$ using \eqref{eq:ccpf_ini}. Set $m=1$}
\item{Generate $Z_m|Z_{m-1}$ using the Markov kernel described in Algorithm \ref{alg:ccpf}. If $x_{1:n}(m) = \bar{x}_{1:n}(m)$ stop and return $\Xi_0=\widehat{\pi}_{\theta}^0(G_{\theta}^0)$ as in \eqref{eq:pi_0_def}. Otherwise set $m=m+1$ and return to the start of 2..}
\end{enumerate}
\caption{Computing $\Xi_{\theta}^0$.}
\label{alg:xi_0_comp}
\end{algorithm}

\subsection{Computing $(\Xi_{\theta}^l)_{l\in\mathbb{N}}$}

We are now concerned with the task of computing $(\Xi_{\theta}^l)_{l\in\mathbb{N}}$ such that \eqref{eq:ub1}-\eqref{eq:ub2} are satisfied. 
Throughout the section $l\in\mathbb{N}$ is fixed.
We will generate a Markov chain $\{\check{Z}_m^l\}_{m\in\mathbb{N}_0}$ on the space $\mathsf{Z}^l\times \mathsf{Z}^{l-1}$, where $\mathsf{Z}^l = \mathsf{X}^l\times\mathsf{X}^l$ and $\check{Z}_m^l\in \mathsf{Z}^l\times \mathsf{Z}^{l-1}$. In order to construct our Markov chain kernel, as in the previous section, we will need to provide some algorithms. We begin with the Markov kernel
$\check{q}^{l}:\Theta\times\mathbb{R}^{4d}\rightarrow\mathcal{P}(\mathbb{R}^{\Delta_l^{-1}2d}\times\mathbb{R}^{\Delta_{l-1}^{-1}2d})$ which will be needed and whose simulation is described in Algorithm \ref{alg:ql_def}.
We will also need to sample a coupling for four probability mass functions on $\{1,\dots,N\}$ and this is presented in Algorithm \ref{alg:max_coup_4}.

\begin{algorithm}[h]
\begin{enumerate}
\item{Input $(x_0^l,\bar{x}_0^l,x_0^{l-1},\bar{x}_0^{l-1})\in\mathbb{R}^{4d}$ and the level $l\in\mathbb{N}_0$.}
\item{Generate $V_{k\Delta_l}\stackrel{\textrm{i.i.d.}}{\sim}\mathcal{N}_d(0,\Delta_l I_d)$, for $k\in\{1,\dots,\Delta_l^{-1}\}$.}
\item{Run the two recursions, for $k\in\{1,\dots,\Delta_l^{-1}\}$:
\begin{eqnarray*}
X_{k\Delta_l}^l & = & X_{(k-1)\Delta_l}^l + a_{\theta}(X_{(k-1)\Delta_l}^l)\Delta_l + \sigma(X_{(k-1)\Delta_l}^l)V_{k\Delta_l}\\
\bar{X}_{k\Delta_l}^l & = & \bar{X}_{(k-1)\Delta_l}^l + a_{\theta}(\bar{X}_{(k-1)\Delta_l}^l)\Delta_l + \sigma(\bar{X}_{(k-1)\Delta_l}^l)V_{k\Delta_l}.
\end{eqnarray*}
}
\item{Run the two recursions, for $k\in\{1,\dots,\Delta_{l-1}^{-1}\}$:
\begin{eqnarray*}
X_{k\Delta_l}^{l-1} & = & X_{(k-1)\Delta_l}^{l-1} + a_{\theta}(X_{(k-1)\Delta_l}^{l-1})\Delta_l + \sigma(X_{(k-1)\Delta_l}^{l-1})[V_{(2k-1)\Delta_l}+V_{2k\Delta_l}] \\
\bar{X}_{k\Delta_l}^{l-1} & = & \bar{X}_{(k-1)\Delta_l}^{l-1} + a_{\theta}(\bar{X}_{(k-1)\Delta_l}^{l-1})\Delta_l + \sigma(\bar{X}_{(k-1)\Delta_l}^{l-1})[V_{(2k-1)\Delta_l}+V_{2k\Delta_l}].
\end{eqnarray*}
}
\item{Return $(x_{\Delta_l:1}^l,\bar{x}_{\Delta_l:1}^l,x_{\Delta_{l-1}:1}^{l-1},\bar{x}_{\Delta_{l-1}:1}^{l-1})\in\mathbb{R}^{\Delta_l^{-1}2d}\times\mathbb{R}^{\Delta_{l-1}^{-1}2d}$.}
\end{enumerate}
\caption{Simulating the Kernel $\check{q}_{\theta}^l$.}
\label{alg:ql_def}
\end{algorithm}

\begin{algorithm}
\begin{enumerate}
\item{Input: Four PMFs $(r_1^1,\dots,r_1^N), \dots, (r_4^1,\dots,r_4^N)$ on $\{1,\dots,N\}$.}
\item{If $r_1^i=r_3^i$ for every $i\in\{1,\dots,N\}$ and there exists at least one $i\in\{1,\dots,N\}$ such that 
$r_2^i\neq r_4^i$ then sample $(i_1,i_2)$ according to the maximal coupling of $(r_1^{i_1},r_2^{i_2})$ in Algorithm \ref{alg:max_coup}. Implement 5.~with $r_5^i=r_1^i$ and $r_6^i=r_4^i$, $i\in\{1,\dots,N\}$ and $i_5=i_1$. Set $(i_3,i_4)=(i_5,i_6)$ where $(i_5,i_6)$ have been computed from step 5. Go to 7..
}
\item{If $r_2^i=r_4^i$ for every $i\in\{1,\dots,N\}$ and there exists at least one $i\in\{1,\dots,N\}$ such that 
$r_1^i\neq r_3^i$ then sample $(i_1,i_2)$ according to the maximal coupling of $(r_1^{i_1},r_2^{i_2})$ in Algorithm \ref{alg:max_coup}. Implement 5.~with $r_5^i=r_2^i$ and $r_6^i=r_3^i$, $i\in\{1,\dots,N\}$ and $i_5=i_2$. Set $(i_3,i_4)=(i_6,i_5)$ where $(i_5,i_6)$ have been computed from step 5. Go to 7..
}
\item{Otherwise implement 6.~with $r_{j+6}^i=r_{j}^i$, $(i,j)\in\{1,\dots,N\}\times\{1,\dots,4\}$. Set $(i_1,\dots,i_4)=(i_7,\dots,i_{10})$ where $(i_7,\dots,i_{10})$ have been computed from step 6. Go to 7..}
\item{Conditional Algorithm based on \cite{HT00}.
\begin{enumerate}
\item{Input two PMFs $(r_5^1,\dots,r_5^N), (r_6^1,\dots,r_6^N)$ on $\{1,\dots,N\}$ and $i_5\in\{1,\dots,N\}$ drawn according to $r_5$.}
\item{Sample $U\sim\mathcal{U}_{[0,r_5^{i_5}]}$. If $U<r_6^{i_5}$
set $i_6=i_5$ and go to (c). Otherwise go to (b).
}
\item{Sample $i_6'$ from $r_6^{i_6'}$. Sample $U'\sim\mathcal{U}_{[0,r_6^{i_6'}]}$. If $U'>r_5^{i_6'}$
set $i_6=i_6'$ and go to (c). Otherwise start (b) again.}
\item{Output: $(i_5,i_6)$.}
\end{enumerate}
}
\item{Sampling Maximal Couplings of Maximal Couplings
\begin{enumerate}
\item{Input four PMFs $(r_7^1,\dots,r_7^N), \dots, (r_{10}^1,\dots,r_{10}^N)$ on $\{1,\dots,N\}$. For $j\in\{7,9\}$ define the PMFs
$$
\check{r}_{j}(i_j,i_{j+1}) = r_j^{i_j}\wedge r_{j+1}^{i_{j+1}} + \frac{r_j^{i_j}-r_j^{i_j}\wedge r_{j+1}^{i_{j}}}{1-\sum_{i=1}^N r_j^{i}\wedge r_{j+1}^{i}}\Big\{r_{j+1}^{i_{j+1}}-r_j^{i_{j+1}}\wedge r_{j+1}^{i_{j+1}}\Big\}.
$$
}
\item{Sample $(i_7,i_8)$ according to the maximal coupling of $(r_7^{i_7},r_8^{i_8})$ in Algorithm \ref{alg:max_coup}. Generate $U\sim\mathcal{U}_{[0,\check{r}_{7}(i_7,i_{8})]}$. If $U<\check{r}_{9}(i_7,i_{8})$ set $(i_9,i_{10})=(i_7,i_8)$ and go to (d). Otherwise go to (c).}
\item{Sample $(i_9',i_{10}')$ according to the maximal coupling of $(r_9^{i_9'},r_{10}^{i_{10}'})$ in Algorithm \ref{alg:max_coup}. Sample $U'\sim\mathcal{U}_{[0,\check{r}_9(i_9',i_{10}')]}$. If $U'>\check{r}_7(i_9',i_{10}')$
set $(i_9',i_{10}')=(i_9,i_{10})$ and go to (d). Otherwise start (c) again.}
\item{Output: $(i_7,i_8,i_9,i_{10})$.}
\end{enumerate}
}
\item{Output: $(i_1,i_2,i_3,i_4)\in\{1,\dots,N\}^4$. $i_j$, marginally has PMF $r_j^i$, $j\in\{1,\dots,4\}$.}
\end{enumerate}
\caption{Simulating a Maximal Coupling of Maximal Couplings associated to Four Probability Mass Functions on $\{1,\dots,N\}$.}
\label{alg:max_coup_4}
\end{algorithm}

To continue onwards, we will consider 
a generalization of that in Algorithm \ref{alg:cond_pf}. The driving coupled conditional particle filter at level $l$ is  described in Algorithm \ref{alg:cond_pf_l}. Now given Algorithms \ref{alg:ql_def}-\ref{alg:cond_pf_l} we are in a position to give our Markov kernel, $\check{K}^l:\Theta\times\mathsf{Z}^l\times\mathsf{Z}^{l-1}\rightarrow\mathcal{P}(\mathsf{Z}^l\times\mathsf{Z}^{l-1})$, which we shall call the coupled-CCPF (C-CCPF) and it is given in Algorithm \ref{alg:cccpf}. 
To assist the subsequent discussion, we will introduce the marginal Markov kernel:
$$
\check{q}_{\theta}^{(l)}\Big([x_0^l,x_0^{l-1}],d[x_{\Delta_l:1}^l,x_{\Delta_{l-1}:1}^{l-1}]\Big) := 
$$
\begin{equation}\label{eq:coup_euler}
\int_{(\mathbb{R}^{d})^{\Delta_l^{-1}}\times (\mathbb{R}^{d})^{\Delta_{l-1}^{-1}}}
\check{q}_{\theta}^{l}\Big([(x_0^l,\bar{x}_0^l),(x_0^{l-1},\bar{x}_0^{l-1})],d[(x_{\Delta_l:1}^l,\bar{x}_{\Delta_l:1}^l),(x_{\Delta_{l-1}:1}^{l-1},\bar{x}_{\Delta_{l-1}:1}^{l-1})]\Big).
\end{equation}
Given this kernel, one can describe the CCPF at two different levels $l,l-1$ in Algorithm \ref{alg:coup_cond_pf_l_l-1}.
Algorithm \ref{alg:coup_cond_pf_l_l-1} details a Markov kernel $\check{C}^{l}:\Theta\times\mathsf{X}^l\times\mathsf{X}^{l-1}\rightarrow\mathcal{P}(\mathsf{X}^l\times\mathsf{X}^{l-1})$ which we will use in the initialization of our Markov chain
to be described below.

\begin{algorithm}
\begin{enumerate}
\item{Input $(x_{\Delta_l:n}',\bar{x}_{\Delta_{l-1}:n}')\in\mathsf{X}^l\times\mathsf{X}^{l-1}$. Set $k=1$, $(x_0^i,\bar{x}_0^i)=(x_{\star},x_{\star})$, $a_{0}^i=\bar{a}_0^i=i$ for $i\in\{1,\dots,N-1\}$.}
\item{Sampling: for $i\in\{1,\dots,N-1\}$ sample $(x_{k-1+\Delta_l:k}^i,\bar{x}_{k-1+\Delta_{l-1}:k}^i)|(x_{k-1}^{a_{k-1}^i},\bar{x}_{k-1}^{\bar{a}_{k-1}^i})$ using the Markov kernel $\check{q}_{\theta}^{(l)}$ in \eqref{eq:coup_euler}. Set $(x_{k-1+\Delta_l:k}^N,\bar{x}_{k-1+\Delta_{l-1}:k}^N)=(x_{k-1+\Delta_{l}:k}',\bar{x}_{k-1+\Delta_{l-1}:k}')$ and for $i\in\{1,\dots,N-1\}$, $(x_{\Delta_l:k}^i,\bar{x}_{\Delta_{l-1}:k}^i)=((x_{\Delta_l:k-1}^{a_{k-1}^i},x_{k-1+\Delta_{l}:k}^i),(\bar{x}_{\Delta_{l-1}:k-1}^{\bar{a}_{k-1}^i},\bar{x}_{k-1+\Delta_{l-1}:k}^i))$. If $k=n$ go to 4..}
\item{Resampling: Construct the two probability mass functions on $\{1,\dots,N\}$:
$$
r_1^i = \frac{g_{\theta}(y_k|x_k^i)}{\sum_{j=1}^Ng_{\theta}(y_k|x_k^j)}\quad r_2^i = \frac{g_{\theta}(y_k|\bar{x}_k^i)}{\sum_{j=1}^Ng_{\theta}(y_k|\bar{x}_k^j)}\quad i\in\{1,\dots,N\}.
$$
For $i\in\{1,\dots,N-1\}$ sample $(a_k^i,\bar{a}_k^i)$ from the maximum coupling of the two given probability mass functions, using Algorithm \ref{alg:max_coup}. Set $k=k+1$ and return to the start of 2..}
\item{Construct the two probability mass functions on $\{1,\dots,N\}$:
$$
r_1^i = \frac{g_{\theta}(y_n|x_n^i)}{\sum_{j=1}^Ng_{\theta}(y_n|x_n^j)}\quad r_2^i = \frac{g_{\theta}(y_n|\bar{x}_n^i)}{\sum_{j=1}^Ng_{\theta}(y_n|\bar{x}_n^j)}\quad i\in\{1,\dots,N\}.
$$
Sample $(i,j)\in\{1,\dots,N\}^2$ from the maximum coupling of the two given probability mass functions, using Algorithm \ref{alg:max_coup}. Return $(x_{\Delta_l:n}^i,\bar{x}_{\Delta_{l-1}:n}^j)$ which
are the path of samples at indices $i$ and $j$ in step 2.~when $k=n$.
}
\end{enumerate}
\caption{Coupled Conditional Particle Filter at level $l,l-1$, $l\in\mathbb{N}$.}
\label{alg:coup_cond_pf_l_l-1}
\end{algorithm}

We will consider the Markov chain $\{\check{Z}_m^l\}_{m\in\mathbb{N}_0}$, with
$$
\check{Z}_m^l=\Big((X_{\Delta_l:n}^l(m),\bar{X}_{\Delta_l:n}^l(m)),(X_{\Delta_{l-1}:n}^{l-1}(m),\bar{X}_{\Delta_{l-1}:n}^{l-1}(m))\Big),
$$ 
generated by the C-CCPF kernel in Algorithm \ref{alg:cccpf} and with initial distribution 
$$
\check{\nu}_{\theta}^l\Big(d[(x_{\Delta_l:n}^l,\bar{x}_{\Delta_l:n}^l),(x_{\Delta_{l-1}:n}^{l-1},\bar{x}_{\Delta_{l-1}:n}^{l-1})]\Big)  =  \int_{\mathsf{Z}^l\times\mathsf{Z}^{l-1}} \Big(\prod_{k=1}^n 
\check{q}_{\theta}^{(l)}\Big([x_{k-1}^{l,\prime},x_{k-1}^{l-1,\prime}],d[x_{k-1+\Delta_l:k}^{l,\prime},x_{k-1+\Delta_{l-1}:k}^{l-1,\prime}]\Big)
\Big)\times
$$
$$
\Big(\prod_{k=1}^n 
\check{q}_{\theta}^{(l)}\Big([\bar{x}_{k-1}^{l,\prime},\bar{x}_{k-1}^{l-1,\prime}],d[\bar{x}_{k-1+\Delta_l:k}^{l,\prime},\bar{x}_{k-1+\Delta_{l-1}:k}^{l-1,\prime}]\Big)
\check{C}_{\theta}^{l}\Big([x_{\Delta_l:n}^{l,\prime},x_{\Delta_{l-1}:n}^{l-1,\prime}],d[x_{\Delta_l:n}^l,x_{\Delta_{l-1}:n}^{l-1}]\Big)\times
$$
\begin{equation}\label{eq:cccpf_ini}
\delta_{\{
\bar{x}_{\Delta_l:n}^{l,\prime},\bar{x}_{\Delta_{l-1}:n}^{l-1,\prime}\}}(d[\bar{x}_{\Delta_l:n}^l,\bar{x}_{\Delta_{l-1}:n}^{l-1}]),
\end{equation}
where $x_0^{l,\prime}=x_0^{l-1,\prime}=\bar{x}_0^{l,\prime}=\bar{x}_0^{l-1,\prime}=x_{\star}$.
An important point, as in the case of Algorithm \ref{alg:ccpf}, is that if the two input trajectories in step 1.~of Algorithm \ref{alg:cccpf} are equal, i.e.~$x_{\Delta_l:n}^l=\bar{x}_{\Delta_l:n}^l$, or $x_{\Delta_{l-1}:n}^{l-1}=\bar{x}_{\Delta_{l-1}:n}^{l-1}$, then the associated output trajectories will also be equal.  As before, we define the stopping times associated to the given Markov chain $(\check{Z}_m^l)_{m\in\mathbb{N}_0}$, $s\in\{l,l-1\}$
$$
\tau^s = \inf\{m\geq 1: X_{\Delta_s:n}^s(m) = \bar{X}_{\Delta_s:n}^s(m)\}.
$$
Then, setting $m^*\in\{2,3,\dots\}$ one has the following estimator
\begin{equation}\label{eq:pi_l_def}
\widehat{\pi}_{\theta}^s(G_{\theta}^s) := G_{\theta}^s(X_{\Delta_s:n}^s(m^*)) + \sum_{m=m^*+1}^{\tau^s-1} \{G_{\theta}^s(X_{\Delta_s:n}^s(m))-G_{\theta}^s(\bar{X}_{\Delta_s:n}^s(m))\},
\end{equation}
and one sets $\Xi_{\theta}^l=\widehat{\pi}_{\theta}^l(G_{\theta}^l)-\widehat{\pi}_{\theta}^{l-1}(G_{\theta}^{l-1})$. The procedure for computing $\Xi_\theta^l$ is summarized in Algorithm \ref{alg:xi_l_comp}.

\begin{algorithm}
\begin{enumerate}
\item{Input $((x_{\Delta_l:n}^l,\bar{x}_{\Delta_l:n}^l),(x_{\Delta_{l-1}:n}^{l-1},\bar{x}_{\Delta_{l-1}:n}^{l-1}))\in\mathsf{Z}^l\times\mathsf{Z}^{l-1}$. Set $k=1$, $(x_0^{i,l},\bar{x}_0^{i,l})=(x_{\star},x_{\star})=
(x_0^{i,l-1},\bar{x}_0^{i,l-1})$, $a_{0}^{i,l}=\bar{a}_0^{i,l}=a_{0}^{i,l-1}=\bar{a}_0^{i,l-1}=i$ for $i\in\{1,\dots,N-1\}$.}
\item{Sampling: for $i\in\{1,\dots,N-1\}$ sample 
$$
\Big((x_{k-1+\Delta_l:k}^{i,l},\bar{x}_{k-1+\Delta_l:k}^{i,l}),(x_{k-1+\Delta_{l-1}:k}^{i,l-1},\bar{x}_{k-1+\Delta_{l-1}:k}^{i,l-1})\Big)\Big|\Big((x_{k-1}^{a_{k-1}^{i,l},l},\bar{x}_{k-1}^{\bar{a}_{k-1}^{i,l},l}),(x_{k-1}^{a_{k-1}^{i,l-1},l-1},\bar{x}_{k-1}^{\bar{a}_{k-1}^{i,l-1},l-1})\Big)
$$ 
using the Markov kernel $\check{q}_{\theta}^l$ in Algorithm \ref{alg:ql_def}. Set 
$$
\Big((x_{k-1+\Delta_l:k}^{N,l},\bar{x}_{k-1+\Delta_l:k}^{N,l}),(x_{k-1+\Delta_{l-1}:k}^{N,l-1},\bar{x}_{k-1+\Delta_{l-1}:k}^{N,l-1})\Big)=
$$
$$
\Big((x_{k-1+\Delta_l:k}^l,\bar{x}_{k-1+\Delta_l:k}^l),(x_{k-1+\Delta_{l-1}:k}^{l-1},\bar{x}_{k-1+\Delta_{l-1}:k}^{l-1})\Big) 
$$
and for $i\in\{1,\dots,N-1\}$
\begin{eqnarray*}
(x_{\Delta_l:k}^{i,l},\bar{x}_{\Delta_l:k}^{i,l}) & = & 
\Big((x_{\Delta_l:k-1}^{a_{k-1}^{i,l},l},x_{k-1+\Delta_l:k}^{i,l}),(\bar{x}_{\Delta_{l}:k-1}^{\bar{a}_{k-1}^{i,l},l-1},\bar{x}_{k-1+\Delta_l:k}^{i,l})\Big),\\ 
(x_{\Delta_{l-1}:k}^{i,l-1},\bar{x}_{\Delta_{l-1}:k}^{i,l-1}) & = &
\Big((x_{\Delta_{l-1}:k-1}^{a_{k-1}^{i,l-1},l},x_{k-1+\Delta_{l-1}:k}^{i,l-1}),(\bar{x}_{\Delta_{l-1}:k-1}^{\bar{a}_{k-1}^{i,l-1},l-1},\bar{x}_{k-1+\Delta_l:k}^{i,l-1})
\Big).
\end{eqnarray*} 
If $k=n$ stop.}
\item{Resampling: Construct the four probability mass functions on $\{1,\dots,N\}$:
$$
r_1^i = \frac{g_{\theta}(y_k|x_k^{i,l})}{\sum_{j=1}^Ng_{\theta}(y_k|x_k^{j,l})} \quad r_3^i = \frac{g_{\theta}(y_k|\bar{x}_k^{i,l})}{\sum_{j=1}^Ng_{\theta}(y_k|\bar{x}_k^{j,l})}\quad i\in\{1,\dots,N\},
$$
and
$$
r_2^i = \frac{g_{\theta}(y_k|x_k^{i,l-1})}{\sum_{j=1}^Ng_{\theta}(y_k|x_k^{j,l-1})} \quad r_4^i= \frac{g_{\theta}(y_k|\bar{x}_k^{i,l-1})}{\sum_{j=1}^Ng_{\theta}(y_k|\bar{x}_k^{j,l-1})}\quad i\in\{1,\dots,N\}.
$$
For $i\in\{1,\dots,N-1\}$ sample $(a_k^{i,l},a_k^{i,l-1},\bar{a}_k^{i,l},\bar{a}_k^{i,l-1})$ using Algorithm \ref{alg:max_coup_4}. Set $k=k+1$ and return to the start of 2..}
\end{enumerate}
\caption{Driving Coupled Conditional Particle Filter at level $l\in\mathbb{N}$.}
\label{alg:cond_pf_l}
\end{algorithm}

\begin{algorithm}
\begin{enumerate}
\item{Input $((x_{\Delta_l:n}^l,\bar{x}_{\Delta_l:n}^l),(x_{\Delta_{l-1}:n}^{l-1},\bar{x}_{\Delta_{l-1}:n}^{l-1}))\in\mathsf{Z}^l\times\mathsf{Z}^{l-1}$.}
\item{Run Algorithm \ref{alg:cond_pf_l}.}
\item{Construct the four probability mass functions on $\{1,\dots,N\}$:
$$
r_1^i = \frac{g_{\theta}(y_n|x_n^{i,l})}{\sum_{j=1}^Ng_{\theta}(y_n|x_n^{j,l})} \quad r_3^i = \frac{g_{\theta}(y_n|\bar{x}_n^{i,l})}{\sum_{j=1}^Ng_{\theta}(y_n|\bar{x}_n^{j,l})}\quad i\in\{1,\dots,N\},
$$
and
$$
r_2^i = \frac{g_{\theta}(y_n|x_n^{i,l-1})}{\sum_{j=1}^Ng_{\theta}(y_n|x_n^{j,l-1})} \quad r_4^i= \frac{g_{\theta}(y_n|\bar{x}_n^{i,l-1})}{\sum_{j=1}^Ng_{\theta}(y_n|\bar{x}_n^{j,l-1})}\quad i\in\{1,\dots,N\}.
$$
Sample $(i_1,i_2,i_3,i_4)\in\{1,\dots,N\}^4$ using Algorithm \ref{alg:max_coup_4}. Return 
$((x_{\Delta_l:n}^{i_1,l},\bar{x}_{\Delta_l:n}^{i_3,l}),(x_{\Delta_{l-1}:n}^{i_2,l-1},\bar{x}_{\Delta_{l-1}:n}^{i_4,l-1}))$
 which
are the path of samples at indices $i_{1:4}$ in step 2.~of Algorithm \ref{alg:cond_pf_l} when $k=n$.}
\end{enumerate}
\caption{The Coupled-CCPF at level $l\in\mathbb{N}$.}
\label{alg:cccpf}
\end{algorithm}

\begin{algorithm}
\begin{enumerate}
\item{Initialize the Markov chain by generating $\check{Z}_0^l$ using \eqref{eq:cccpf_ini}. Set $m=1$}
\item{Generate $\check{Z}_m^l|\check{Z}^l_{m-1}$ using the Markov kernel described in Algorithm \ref{alg:cccpf}. If $x_{\Delta_l:n}^l(m) = \bar{x}_{\Delta_l:n}^l(m)$ 
and $x_{\Delta_{l-1}:n}^{l-1}(m) = \bar{x}_{\Delta_{l-1}:n}^{l-1}(m)$ 
stop and return $\Xi_{\theta}^l=\widehat{\pi}_{\theta}^l(G_{\theta}^l)-\widehat{\pi}_{\theta}^{l-1}(G_{\theta}^{l-1})$ where $\widehat{\pi}_{\theta}^s(G_{\theta}^s)$, $s\in\{l,l-1\}$, is as \eqref{eq:pi_l_def}. Otherwise set $m=m+1$ and return to the start of 2..}
\end{enumerate}
\caption{Computing $\Xi_{\theta}^l$.}
\label{alg:xi_l_comp}
\end{algorithm}

\subsection{Estimate and Remarks}\label{sec:est_rem}

Given the commentary above we are ready to present the procedure for our unbiased estimate of $\mathfrak{H}_{\theta}^{(ij)}$ for each $(i,j)\in D_{\theta}:=\{(i,j)\in\{1,\dots,d_{\theta}\}^2:i\leq j\}$; $\mathfrak{H}_{\theta}$ is a symmetric matrix. The two main Algorithms we will use (Algorithms \ref{alg:xi_0_comp} and \ref{alg:xi_l_comp}) are stated in terms of providing $\Xi_{\theta}^{l}$ in terms of one specified function $G_{\theta}^{l,(i)}$ (recall that $i$ was suppressed from the notation). However, the algorithms can be run once and provide an unbiased estimate of $\pi_{\theta}^l(G_{\theta}^{l,(i)})-\pi_{\theta}^{l-1}(G_{\theta}^{l-1,(i)})$ for every $i\in\{1,\dots,d_{\theta}\}$, of $\pi_{\theta}^l(G_{\theta}^{l,(i)}G_{\theta}^{l,(j)})-\pi_{\theta}^{l-1}(G_{\theta}^{l-1,(i)}G_{\theta}^{l-1,(j)})$ and $\pi_{\theta}^l(H_{\theta}^{l,(ij)})-\pi_{\theta}^{l-1}(H_{\theta}^{l-1,(ij)})$ for every $(i,j)\in D_{\theta}$. To that end we will write $\Xi_{\theta}^l(G_{\theta}^{l,(i)})$, $\Xi_{\theta}^l(G_{\theta}^{l,(i)}G_{\theta}^{l,(j)})$ and $\Xi_{\theta}^l(H_{\theta}^{l,(ij)})$ to denote the appropriate estimators.

Our approach consists of the following steps, repeated for $k\in\{1,\dots,M\}$:
\begin{enumerate}
\item{Generate $(L_k,\tilde{L}_k)$ according to $\mathbb{P}_L\otimes \mathbb{P}_L$.}
\item{Compute $\Xi_{\theta}^{k,0}(G_{\theta}^{0,(i)})$ for every $i\in\{1,\dots,d_{\theta}\}$ and $\Xi_{\theta}^{k,0}(G_{\theta}^{0,(i)}G_{\theta}^{0,(j)})$, $\Xi_{\theta}^{k,0}(H_{\theta}^{0,(ij)})$ for every $(i,j)\in D_{\theta}$ using Algorithm \ref{alg:xi_0_comp}. Independently, compute $\tilde{\Xi}_{\theta}^{k,0}(G_{\theta}^{0,(i)})$ for every $i\in\{1,\dots,d_{\theta}\}$ using Algorithm \ref{alg:xi_0_comp}.}
\item{If $L_k>0$ then independently for each $l\in\{1,\dots,L_k\}$ and independently of step 2.~calculate 
$\Xi_{\theta}^{k,l}(G_{\theta}^{l,(i)})$ for every $i\in\{1,\dots,d_{\theta}\}$ and $\Xi_{\theta}^{k,l}(G_{\theta}^{l,(i)}G_{\theta}^{l,(j)})$, $\Xi_{\theta}^{k,l}(H_{\theta}^{l,(ij)})$ for every $(i,j)\in D_{\theta}$ using Algorithm \ref{alg:xi_l_comp}.}
\item{If $\tilde{L}_k>0$ then independently for each $l\in\{1,\dots,\tilde{L}_k\}$ and independently of steps 2.~and 3.~calculate 
$\tilde{\Xi}_{\theta}^{k,l}(G_{\theta}^{l,(i)})$ for every $i\in\{1,\dots,d_{\theta}\}$ using Algorithm \ref{alg:xi_l_comp}.}
\item{Compute for every $(i,j)\in D_{\theta}$
$$
\widehat{\mathfrak{H}}_{\theta}^{k,(ij)} = \Big(\sum_{l=0}^{L_k}\frac{\Xi_{\theta}^{k,l}(G_{\theta}^{l,(i)})}{\overline{\mathbb{P}}_L(l)}\Big)
\Big(\sum_{l=0}^{\tilde{L}_k}\frac{\tilde{\Xi}_{\theta}^{k,l}(G_{\theta}^{l,(i)})}{\overline{\mathbb{P}}_L(l)}\Big) - 
\sum_{l=0}^{L_k}\frac{\Xi_{\theta}^{k,l}(G_{\theta}^{l,(i)}G_{\theta}^{l,(j)})}{\overline{\mathbb{P}}_L(l)} - 
\sum_{l=0}^{L_k}\frac{\Xi_{\theta}^{k,l}(H_{\theta}^{l,(ij)})}{\overline{\mathbb{P}}_L(l)}.
$$
}
\end{enumerate}
Then our estimator is for each $(i,j)\in D_{\theta}$
\begin{equation}\label{eq:ub_est}
\widehat{\mathfrak{H}}_{\theta}^{(ij)} = \frac{1}{M}\sum_{k=1}^M\widehat{\mathfrak{H}}_{\theta}^{k,(ij)}.
\end{equation}

The algorithm and the various settings are described and investigated in details in \cite{HHJ21} as well as enhanced estimators. We do not discuss the methodology further in this work.

\begin{prop}
Assume (D1-2). Then there exists choices of $\mathbb{P}_L$ so that \eqref{eq:ub_est} is an unbiased and finite variance estimator of $\mathfrak{H}_{\theta}^{(ij)}$
for each $(i,j)\in D_{\theta}$.
\end{prop}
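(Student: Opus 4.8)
The plan is to reduce the statement to the single-functional randomization principle already recorded in \eqref{eq:ub1}--\eqref{eq:ub3} (that is, \cite[Theorem~5]{MV18}), apply it separately to each of the three elementary families of increments that enter the estimator, and then reassemble the Hessian via an independence argument for the product term. For fixed $(i,j)\in D_\theta$ and each $k$, write $\widehat{A}^{k}_{(i)}=\sum_{l=0}^{L_k}\Xi_{\theta}^{k,l}(G_{\theta}^{l,(i)})/\overline{\mathbb{P}}_L(l)$, $\widehat{B}^{k}_{(ij)}=\sum_{l=0}^{L_k}\Xi_{\theta}^{k,l}(G_{\theta}^{l,(i)}G_{\theta}^{l,(j)})/\overline{\mathbb{P}}_L(l)$, $\widehat{C}^{k}_{(ij)}=\sum_{l=0}^{L_k}\Xi_{\theta}^{k,l}(H_{\theta}^{l,(ij)})/\overline{\mathbb{P}}_L(l)$, and let $\widetilde{A}^{k}_{(j)}$ be the independent copy built from the tilded quantities, so that $\widehat{\mathfrak H}_{\theta}^{k,(ij)}=\widehat{A}^{k}_{(i)}\widetilde{A}^{k}_{(j)}-\widehat{B}^{k}_{(ij)}-\widehat{C}^{k}_{(ij)}$ and \eqref{eq:ub_est} is the empirical mean of $M$ i.i.d.\ copies of this. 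Hence it is enough to show that, for a suitable $\mathbb{P}_L$, each of the four increment sequences $(\widehat{\pi}_{\theta}^s(\varphi^s))_s$ with $\varphi\in\{G_{\theta}^{(i)},\,G_{\theta}^{(i)}G_{\theta}^{(j)},\,H_{\theta}^{(ij)}\}$ satisfies (i) the unbiasedness identity $\overline{\mathbb{E}}_{\theta}[\widehat{\pi}_{\theta}^s(\varphi^s)]=\pi_{\theta}^s(\varphi^s)$, and (ii) the summability condition \eqref{eq:ub2} with $\mathfrak{G}_{\theta}^{l,(i)},\mathfrak{G}_{\theta}^{(i)}$ replaced by $\pi_{\theta}^l(\varphi^l)$ and its limit $\pi_{\theta}^{\infty}(\varphi):=\lim_{l\to\infty}\pi_{\theta}^l(\varphi^l)$.

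For (i): the estimators \eqref{eq:pi_0_def} and \eqref{eq:pi_l_def} are the standard Glynn--Rhee / unbiased-MCMC estimators associated with the CCPF kernel $K^0$ and the C-CCPF kernel $\check K^l$. Their unbiasedness is obtained once one knows that $K^0$ is $\pi_{\theta}^0$-invariant and ergodic, $\check K^l$ admits the product-type invariant law with the correct $\pi_{\theta}^l$- and $\pi_{\theta}^{l-1}$-marginals (this is exactly the construction of the driving coupled filters used in \cite{JLS21,CFJ21,HHJ21}), the meeting times $\tau^s$ have geometric tails uniformly in $l$, and the test functions have enough finite moments under the chain, uniformly in $l$ and in the iteration index $m$. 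For the filtering functionals these facts are established in \cite{JLS21,HHJ21}; the only new point is that $G_{\theta}^{l,(i)}G_{\theta}^{l,(j)}$ and $H_{\theta}^{l,(ij)}$ are of at most polynomial growth in the path coordinates and involve at most second $\theta$-derivatives of $a_\theta,\sigma,g_\theta$, so under (D1)--(D2) and the $\mathcal{C}^2$-hypotheses of Section~\ref{ssec:diff} the required moment bounds follow from the Lipschitz/ellipticity structure of the Euler recursion exactly as in the proof of Proposition~\ref{prop:hess_conv} (Appendix~\ref{app:hess_conv}). For (ii): the squared-bias term is $(\pi_{\theta}^l(\varphi^l)-\pi_{\theta}^{\infty}(\varphi))^2=\mathcal{O}(\Delta_l)$ by the same strong-convergence argument underlying Proposition~\ref{prop:hess_conv}, and the increment variance satisfies $\overline{\mathbb{V}\textrm{ar}}_{\theta}[\Xi_{\theta}^l(\varphi)]=\mathcal{O}(\Delta_l^{1/2})$ (plausibly $\mathcal{O}(\Delta_l)$): the synchronous, shared-Brownian-increment coupling inside $\check q_{\theta}^l$ in Algorithm~\ref{alg:ql_def} makes the level-$l$ and level-$(l-1)$ paths $\mathcal{O}(\Delta_l^{1/2})$-close in $\mathbb{L}_2$, and combining this with the uniform geometric meeting-time tails, as in \cite{HHJ21,CFJ21,JLY21}, controls the variance of the telescoping estimator \eqref{eq:pi_l_def}. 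Thus the summand in \eqref{eq:ub2} is $\mathcal{O}(2^{-l/2})$, so it suffices to take any $\mathbb{P}_L$ with geometric tail $\overline{\mathbb{P}}_L(l)\asymp 2^{-\beta l}$ for a fixed $\beta\in(0,1/2)$ (e.g.\ $\mathbb{P}_L(l)\propto 2^{-\beta l}$); then $\sum_l \overline{\mathbb{P}}_L(l)^{-1}\mathcal{O}(2^{-l/2})<\infty$ and \eqref{eq:ub2} holds for all four functionals at once. (If the sharper variance rate is available the admissible range of $\beta$ widens, but $\beta\in(0,1/2)$ already suffices here, since the claim asks only for unbiasedness and finite variance, not finite expected cost.)

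It remains to assemble the estimator. By \cite[Theorem~5]{MV18} and the above, for this $\mathbb{P}_L$ each of $\widehat{A}^k_{(i)},\widetilde{A}^k_{(j)},\widehat{B}^k_{(ij)},\widehat{C}^k_{(ij)}$ is unbiased for, respectively, $\mathfrak{G}_{\theta}^{(i)}$, $\mathfrak{G}_{\theta}^{(j)}$, $\mathbb{E}_{\overline{\mathbb{P}}_{\theta}}[G_{\theta}(\mathbf{X}_T)^{(i)}G_{\theta}(\mathbf{X}_T)^{(j)}]$ and $\mathbb{E}_{\overline{\mathbb{P}}_{\theta}}[H_{\theta}(\mathbf{X}_T)^{(ij)}]$ (the two latter limits being identified through Proposition~\ref{prop:hess_conv}), and each has finite variance. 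Since $(L_k,\tilde L_k)$ is drawn from $\mathbb{P}_L\otimes\mathbb{P}_L$ and the tilded increments use independent randomness, $\widehat{A}^k_{(i)}$ and $\widetilde{A}^k_{(j)}$ are independent, hence $\mathbb{E}[\widehat{A}^k_{(i)}\widetilde{A}^k_{(j)}]=\mathfrak{G}_{\theta}^{(i)}\mathfrak{G}_{\theta}^{(j)}$ and $\mathbb{E}[(\widehat{A}^k_{(i)}\widetilde{A}^k_{(j)})^2]=\mathbb{E}[(\widehat{A}^k_{(i)})^2]\,\mathbb{E}[(\widetilde{A}^k_{(j)})^2]<\infty$. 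Comparing with \eqref{eq:hessian_disc}, \eqref{eq:hessian} and Proposition~\ref{prop:hess_conv} gives $\mathbb{E}[\widehat{\mathfrak H}_{\theta}^{k,(ij)}]=\mathfrak{H}_{\theta}^{(ij)}$, and $\widehat{\mathfrak H}_{\theta}^{k,(ij)}$ has finite variance as a finite linear combination of finite-variance random variables; averaging the $M$ i.i.d.\ copies in \eqref{eq:ub_est} preserves both properties. The main obstacle is the uniform-in-$l$ control in steps (i)--(ii): establishing geometric, level-uniform meeting-time tails for the (C-)CCPF together with uniform moment bounds for the unbounded functionals $G_{\theta}^{l},H_{\theta}^{l}$ — this is the technical content encoded in (D2) and in the analyses of \cite{JLS21,HHJ21}, and its extension to the new product and second-derivative functionals is the only genuinely new estimate needed.
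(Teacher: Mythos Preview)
Your proposal is correct and follows essentially the same approach as the paper's own proof, which is extremely terse: it simply asserts that the argument is identical to \cite[Theorem~2]{HHJ21} once one substitutes Lemma~\ref{lem:diff3} (the level-increment $L^r$ bounds for $H_\theta^{l,(ij)}$ and $G_\theta^{l,(i)}G_\theta^{l,(j)}$) together with the $\mathcal{O}(\Delta_l^{1/2})$ bias rate from the proof of Proposition~\ref{prop:hess_conv}. You have effectively unpacked what that deferral means --- decompose $\widehat{\mathfrak H}_{\theta}^{k,(ij)}$ into the four building blocks, apply the single-sum randomization principle \eqref{eq:ub1}--\eqref{eq:ub3} to each, and handle the product term via the built-in independence of the tilded replicate --- and your conservative choice $\overline{\mathbb{P}}_L(l)\asymp 2^{-\beta l}$, $\beta\in(0,1/2)$, already suffices for the bare existence claim (the sharper choices of $\mathbb{P}_L$ the paper mentions appear only in the discussion after the proof, not in the proof itself).
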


\begin{proof}
This is the same as \cite[Theorem 2]{HHJ21}, except one must repeat the arguments of that paper given Lemma \ref{lem:diff3} in the appendix and given
the rate in the proof of Proposition \ref{prop:hess_conv}. Since the arguments and calculations are almost identical, they are omitted in their entirety.
\end{proof}

The main point is that the choice of $\mathbb{P}_L$ is as in \cite{HHJ21}, which is:  In the case that $\sigma$ is constant $\mathbb{P}_L(l)\propto \Delta_l(l+1)\log_2(2+l)^2$ and in the non-constant case $\mathbb{P}_L(l)\propto \Delta_l^{1/2}(l+1)\log_2(2+l)^2$; both choices achieve finite variance and costs to achieve an error of $\mathcal{O}(\epsilon)$ with high-probability as in \cite[Propositions 4 and 5]{RG15}. 

{As we will see in the succeeding section, we will compare our methodology which is based on the C-CCPF to that of another methodology, which is the $\Delta$PF, within particle Markov chain Monte Carlo. Specifically it will be a particle marginal Metropolis Hastings algorithm.  We omit such a description of the later, as we only use it as a comparison, but we refer the reader to \cite{CFJ21} for a more concrete description. However we emphasis with it, that it is only asymptotically unbiased, in relation to the Hessian identity \eqref{eq:hessian}.}

\begin{rem}
{
It is important to emphasis that with inverse Hessian, which is required for Newton methodologies, we can debias both the  C-CCPF and the $\Delta$PF. This can be achieved by using the same techniques which
are presented in the work of Jasra et al. \cite{JLY21}.}
\end{rem}

\section{Numerical experiments}
\label{sec:num}

In this section we demonstrate that our estimate of the Hessian is unbiased through various different experiments. We consider testing this through the study of the variance and bias of the mean square error, while also provided plots related
to the Newton-type learning. Our results will be demonstrated on three underlying diffusion process. Firstly that of a univariate Ornstein--Uhlenbeck process, a multivariate OU process and the Fitzhugh--Nagumo model. We compare our methodology
to that of using the $\Delta$PF instead of the coupled-CCPF within our unbiased estimator.
\\\\
A python package which allows implementation of Hessian estimate ($\ref{eq:ub_est}$), as well as score estimate in $\cite{HHJ21}$ for general partially observed diffusion model can be found in: \url{https://github.com/fangyuan-ksgk/Hessian_Estimate}.

\subsection{Ornstein--Uhlenbeck process}

Our first set of numerical experiments will be conducted on a univariate Ornstein--Uhlenbeck (OU) process, which takes the form
\begin{align*}
dX_t  & =  - \theta_1 X_t dt + \sigma dW_t, \\ 
X_0   & =  x_0,
\end{align*}
where $x_0 \in \mathbb{R}^+$ is our initial condition, $\theta_1 \in \mathbb{R}$ is a parameter of interest and $\sigma \in \mathbb{R}$ is the diffusion coefficient. 
For our discrete observations, we 
assume that we have Gaussian measurement errors, $Y_t|X_t \sim g_{\theta}(\cdot |X_t) = \mathcal{N}(X_t,\theta_2)$ for $t \in \{1,\ldots,T\}$ and for some
$\theta_2 \in \mathbb{R}^+$. Our observations will be generated with parameter choices defined as $\theta=(\theta_1,\theta_2) = (0.46,0.38)$, $x_{0}=1$ and $T=500$. Throughout the simulation, one observation sequence $\{Y_{1},Y_{2},...,Y_{T}\}$ is used. The true distribution of observations can be computed analytically, therefore the Hessian is known. In Figure \ref{fig:Hessian_OU}, we present the surface plots comparing the true Hessian with the estimated Hessian, obtained by the Rhee \& Glynn estimator (\ref{eq:ub_est}) truncated at discretization level $L=8$, this is done by letting
$\mathbb{P}_{L}(l) \propto \Delta_{l} \mathbb{I}(l \leq L)$. We use $M=10^4$ to obtain the estimate Hessian surface plot. Both surface plots are evaluated at $\theta_{1}, \theta_{2} \in \{0.2,0.3,0.4,\ldots,1.0\}$. In Figure \ref{fig:Bias_OU}, we test out the convergence of bias of the Hessian estimate (\ref{eq:ub_est})  with respect to its truncated discretization level. This essentially tests the result in Lemma \ref{lem:diff2}. We uses $L = \{2,3,4,5,6,7\}$ and plot the bias against $\Delta_{L}$. 

The bias is obtained by using $M=10^4$ i.i.d. samples, and taking its entry-wise difference with the true Hessian entry-wise value. Note that the Hessian estimate here is evaluated with true parameter choice. As the parameter $\theta$ is two-dimensional, we present four $\log$ $\log$ plots where the rate represents the fitted slope of $\log$-scaled bias against $\log$-scaled $\Delta_{L}$. We observe that the Hessian estimate bias is of order $\Delta_{L}^{\alpha}$ where $\alpha \in \{0.9629, 0.7536, 0.7361, 0.8949\}$ respectively for the four entries, this verifies our result in Lemma \ref{lem:diff2}. We also compare the wall-clock time cost of obtaining one realization of Hessian estimate (\ref{eq:ub_est}) with the cost of obtaining one realization of score estimate (see \cite{HHJ21}), both truncated at same discretization levels $L=\{2,3,4,5,6,7\}$, here $M=100$. The comparison result is provided on top of Figure \ref{fig:Cost-IncreVar_OU}.

 We observe that the cost of obtaining the Hessian estimate. is on average 3 times more expensive than obtaining a Score estimate. The reason for this is that we needs to simulate three CCPF paths in order to obtain one summand in the Hessian estimate, while to estimate the score function, we need only one path. We also record the fitted slope of $\log$-scaled Cost against $\log$-scaled $\Delta_{L}$ for both estimates, the cost for Hessian estimates is roughly proportional to $\Delta_{L}^{-0.9}$. To verify the rate obtained in Lemma \ref{lem:diff3}, we compare the variance of the Hessian incremental estimate with respect to discretization level $L\in\{1,2,3,4,5,6\}$. The incremental variance is approximated with the sample variance over $10^3$ repetitions, and we sum over all $2 \times 2$ entries and present the $\log$ $\log$ plot of the summed variance against $\Delta_{L}$ on the bottom of Figure \ref{fig:Cost-IncreVar_OU}. We observe that the incremental variance is of order $\Delta_{L}^{1.15}$ for the OU process model. This verifies the result obtained in Lemma \ref{lem:diff3}. It is known that when truncated, the Rhee $\&$ Glynn method essentially serves as a variance reduction method. As a result, compared to the discrete Hessian estimate ($\ref{eq:hessian_disc}$), the truncated Hessian estimate ($\ref{eq:ub_est}$) will require less cost to achieve the same MSE target. 
 
 We present on top of Figure \ref{fig:MSE_OU} the $\log$-$\log$ plot of cost against MSE for discrete Hessian estimate ($\ref{eq:hessian_disc}$) and the Rhee $\&$ Glynn (R $\&$ G) Hessian estimate ($\ref{eq:ub_est}$). We observe that ($\ref{eq:hessian_disc}$) requires much less cost for a MSE target compared to ($\ref{eq:ub_est}$). For ($\ref{eq:hessian_disc}$), the cost is proportional to $\mathcal{O}(\epsilon^{-2.974})$ for a MSE target of order $\mathcal{O}(\epsilon^2)$. While for ($\ref{eq:ub_est}$), the cost is proportional to $\mathcal{O}(\epsilon^{-2.428})$. The average cost ratio between ($\ref{eq:ub_est}$) and ($\ref{eq:hessian_disc}$) under same MSE target is $5.605$. On the bottom of Figure \ref{fig:MSE_OU}, we present the $\log$-$\log$ plot of cost against MSE for ($\ref{eq:ub_est}$) and the hessian estimate obtained by the $\Delta$PF method. We observe that under similar MSE target, the latter method on average has costs $5.054$ times less than ($\ref{eq:ub_est}$). In Figure \ref{fig:SGD_Newton_OU}, we present the convergence plots for Stochastic Gradient Descent (SGD) method with Score estimate and Newton method with Score $\&$ Hessian estimate. For both method, the parameter is initialized at $(0.1,0.1)$, learning rate for the SGD method is set to $0.002$.
\begin{figure}[ht]
\begin{multicols}{2}
\centering
\includegraphics[width=.9\linewidth]{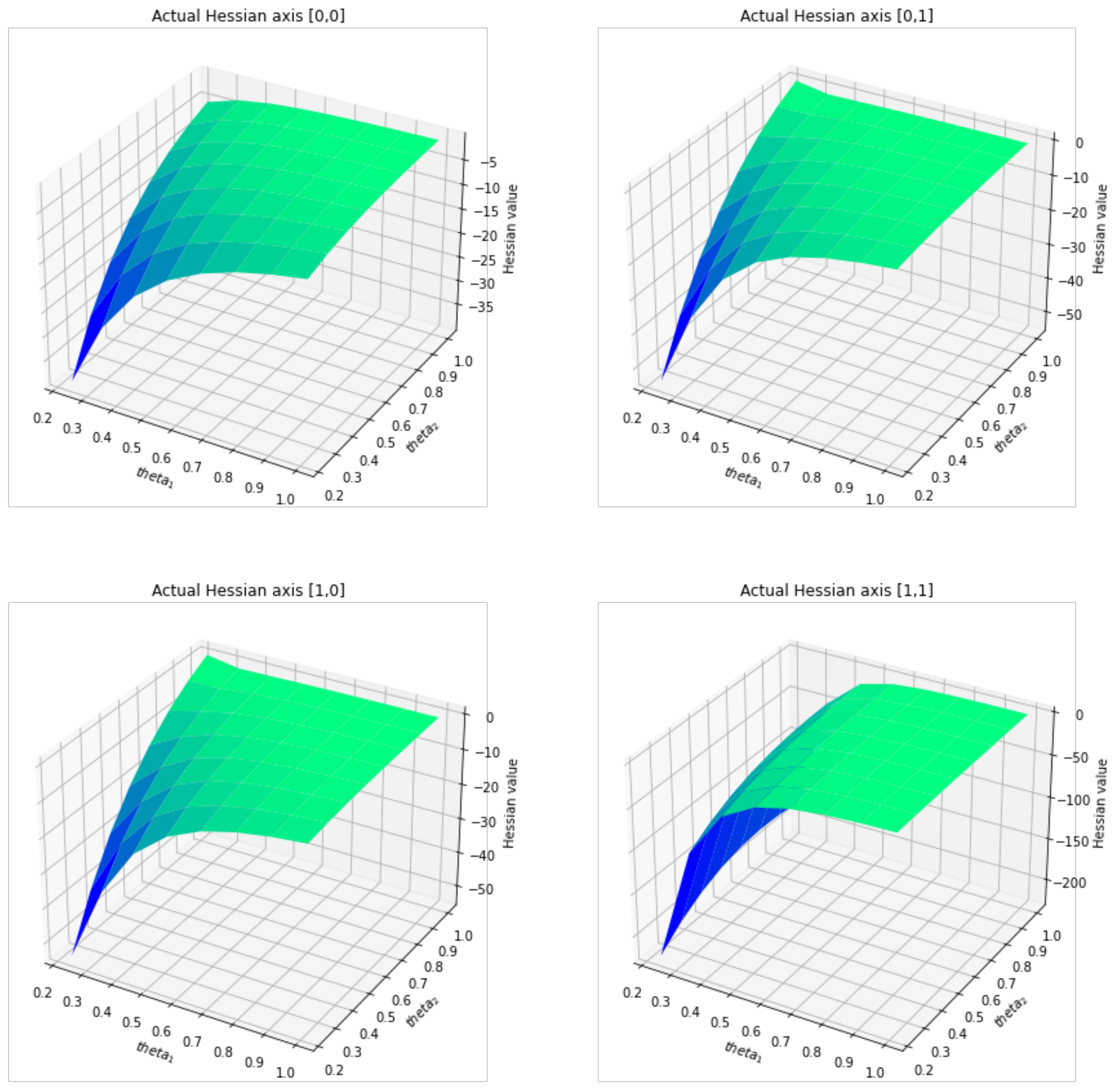}\\ 
\includegraphics[width=.9\linewidth]{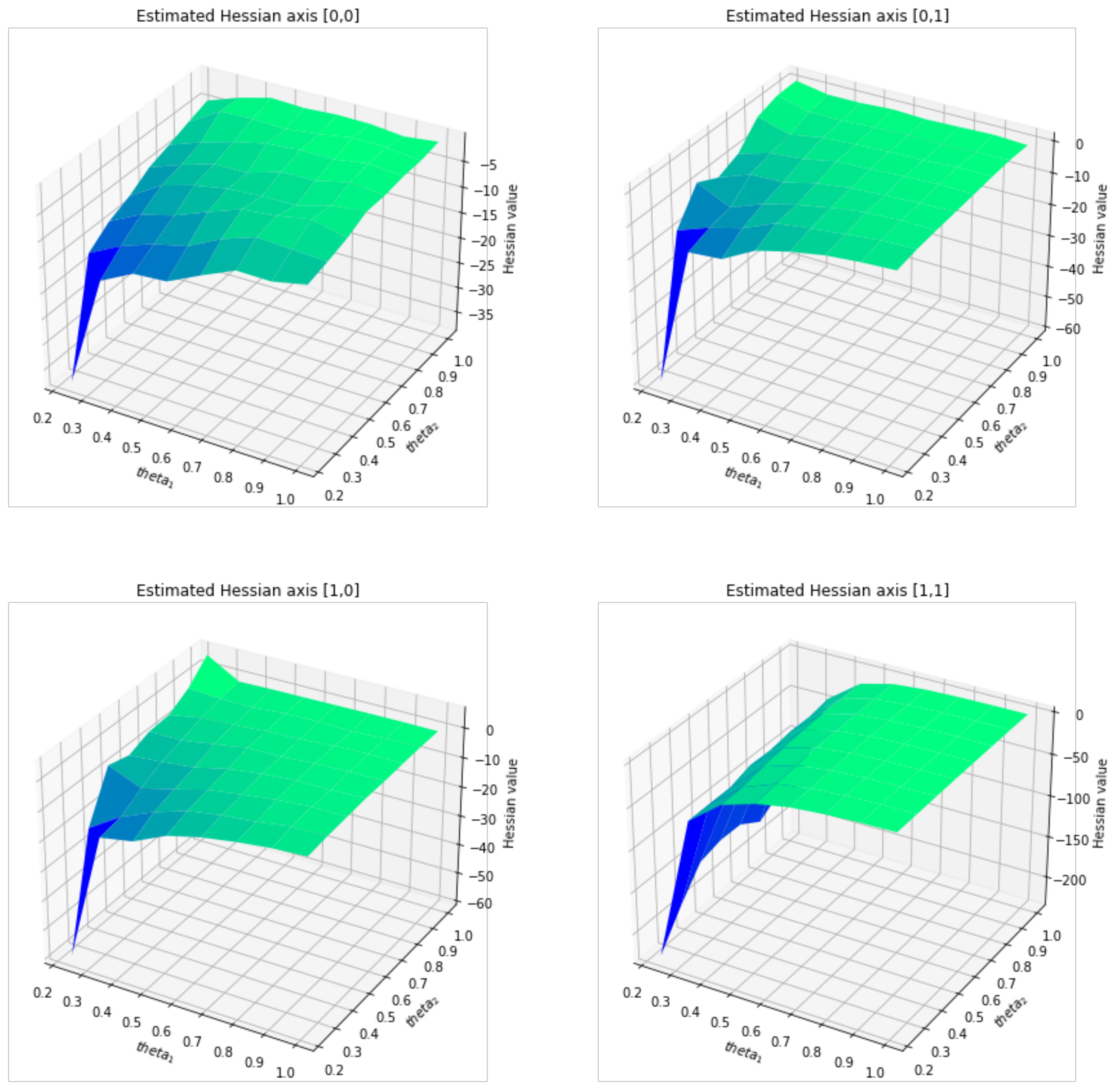}\\ 
\end{multicols}
\caption{Experiments for the OU model. Left: true values of the Hessian. Right: estimated values of the Hessian.}
\label{fig:Hessian_OU}
\end{figure}

\begin{figure}[h!]
\includegraphics[width=1.\textwidth]{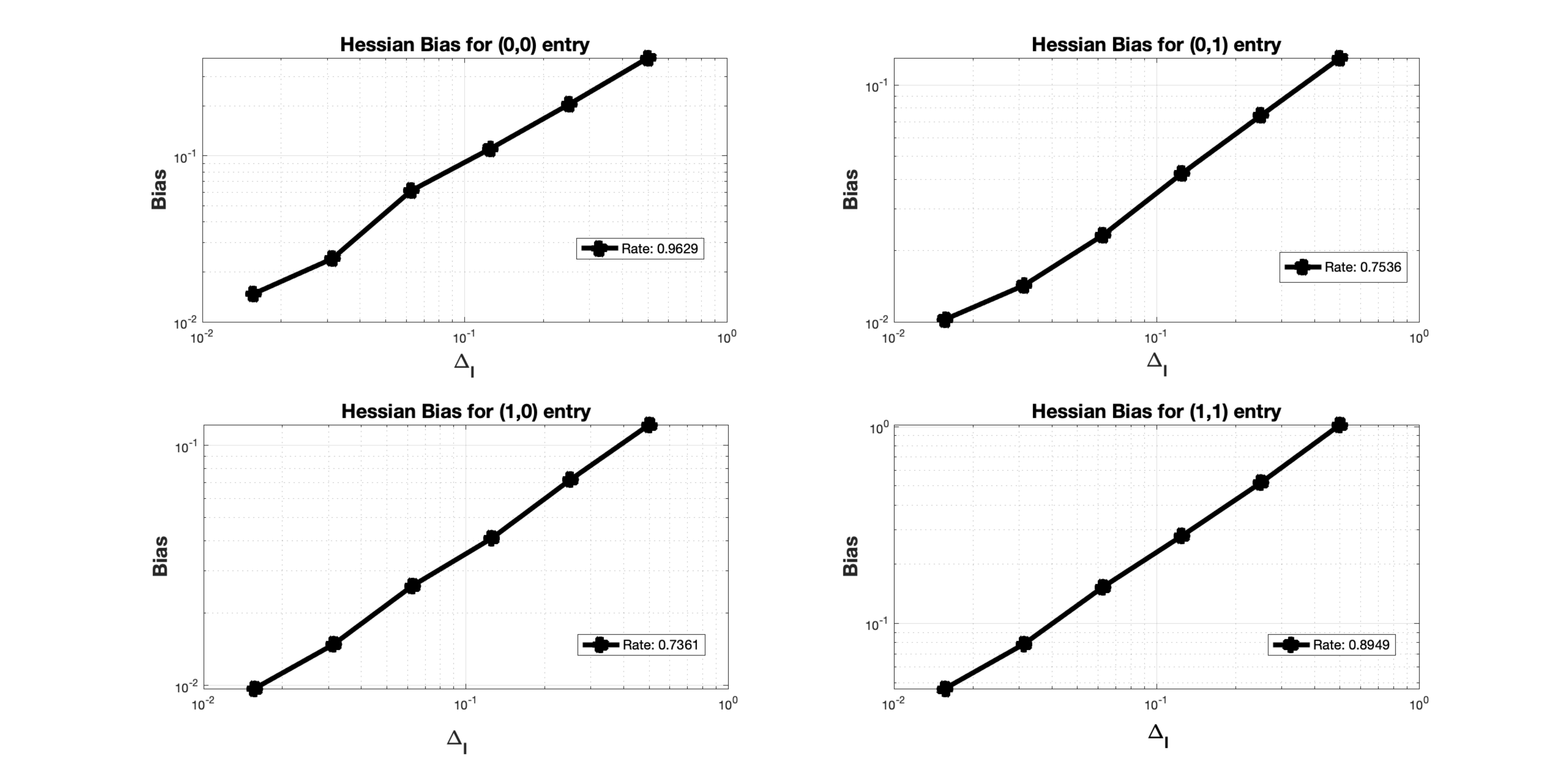}
\caption{Experiments for the OU model: bias values of Hessian estimate.}
\label{fig:Bias_OU}
\end{figure}

\begin{figure}[h!]
\centering
\includegraphics[width=.45\linewidth]{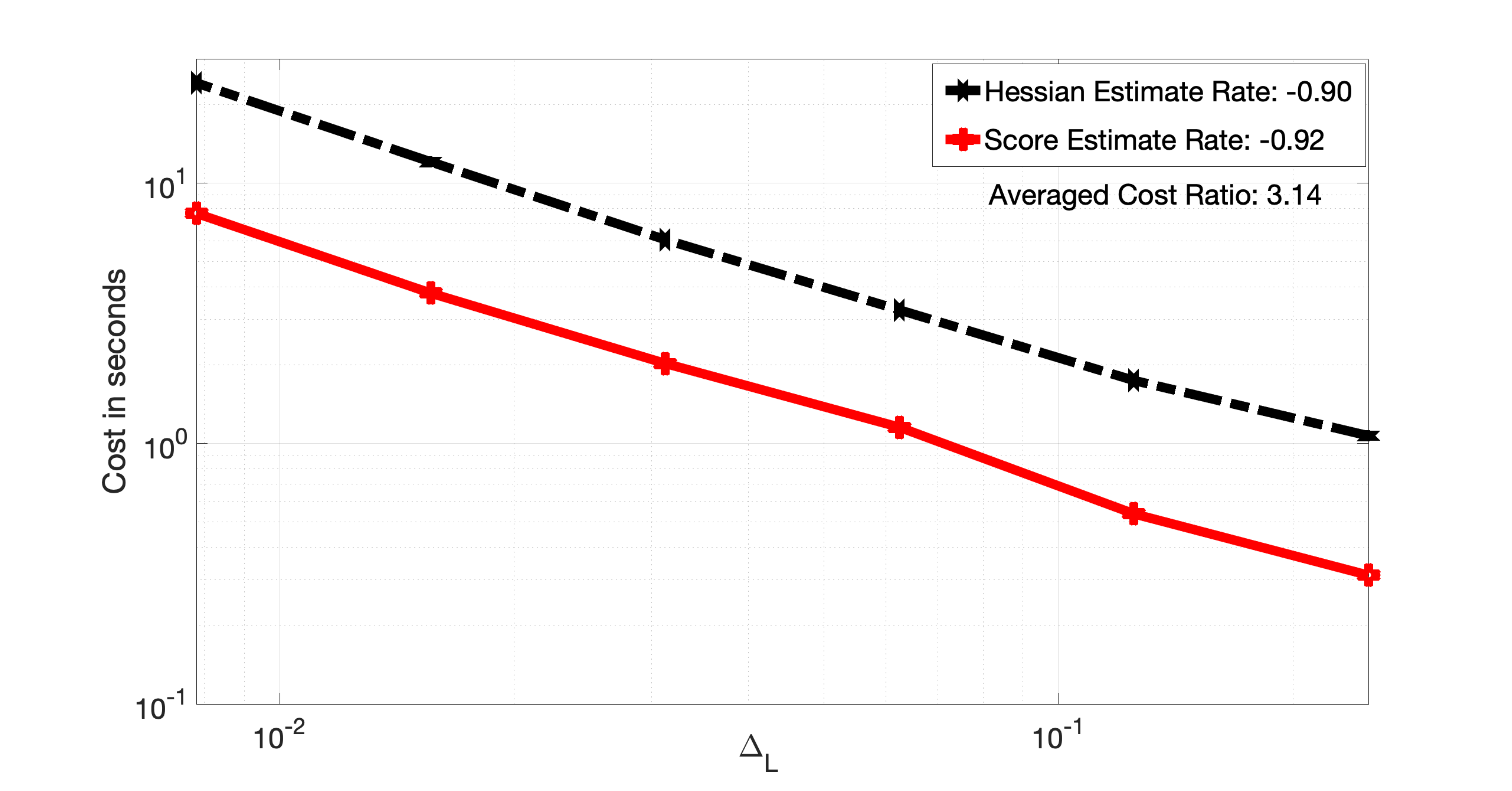}
\includegraphics[width=.45\linewidth]{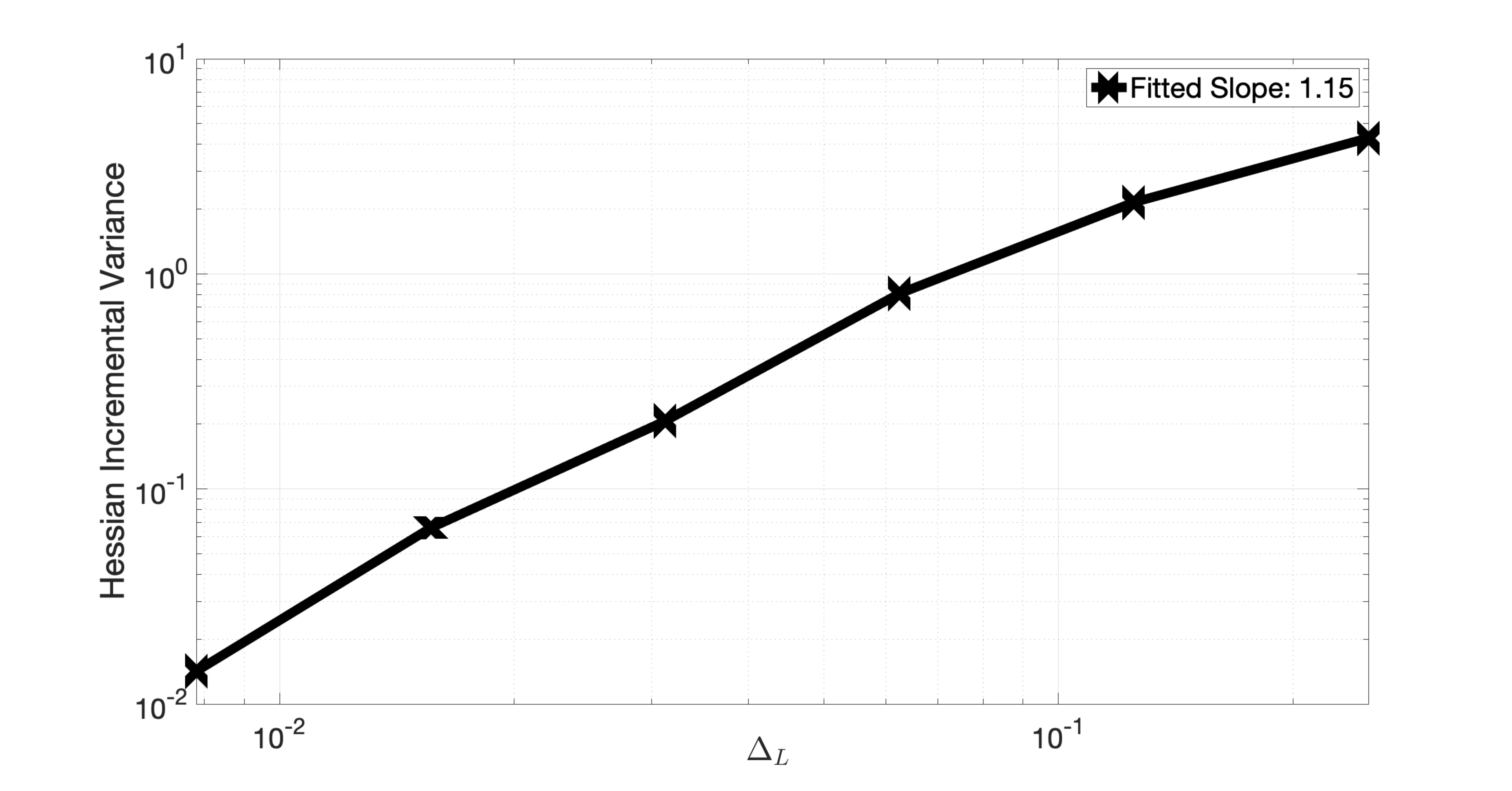}
\caption{Experiments for the OU model. Top: cost of Hessian \& score estimate. Bottom: incremental Hessian estimate variance summed over all entries.}
\label{fig:Cost-IncreVar_OU}
\end{figure}

\begin{figure}
\centering
\includegraphics[width=0.45\textwidth]{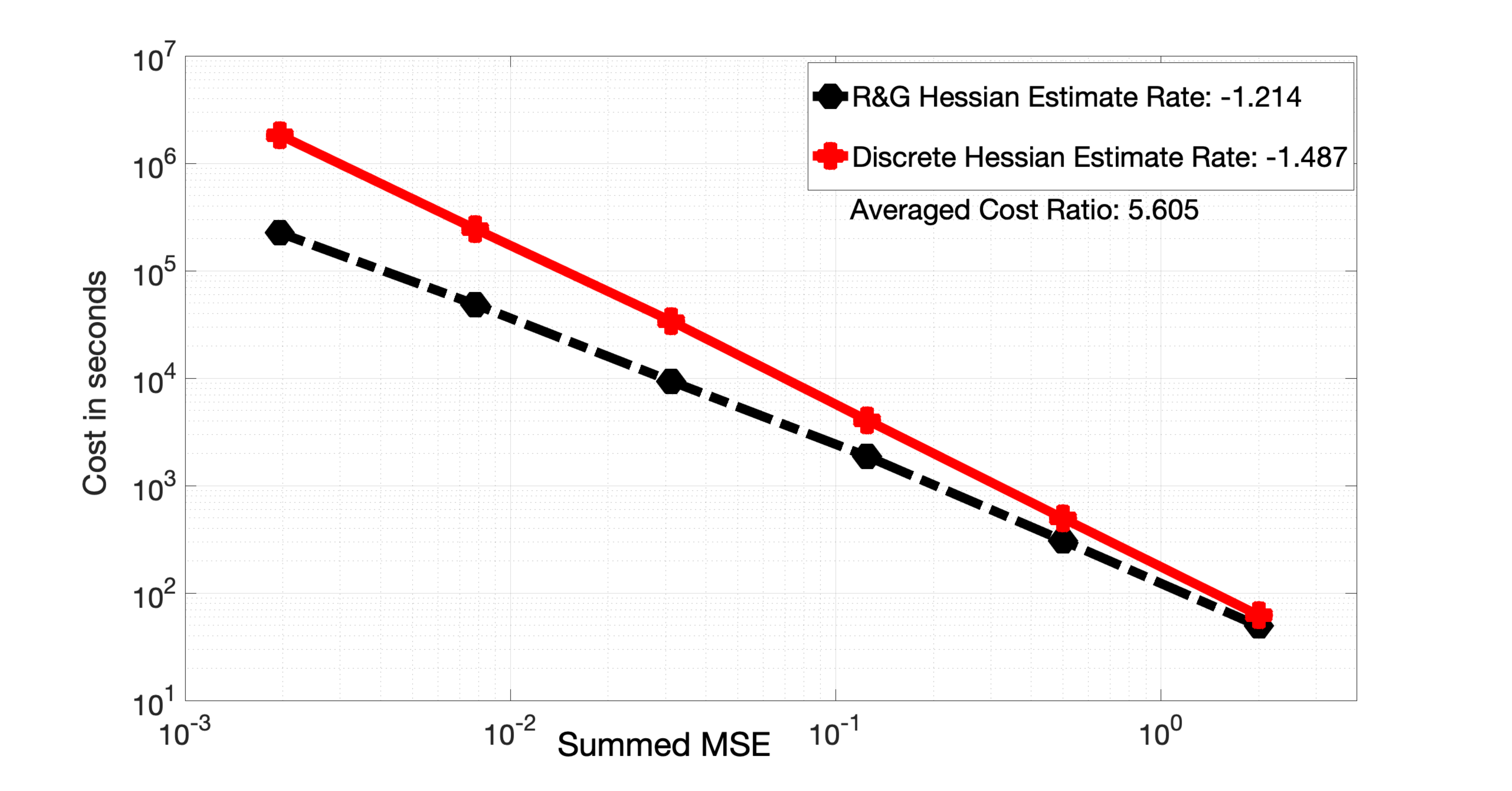}
\includegraphics[width=0.45\textwidth]{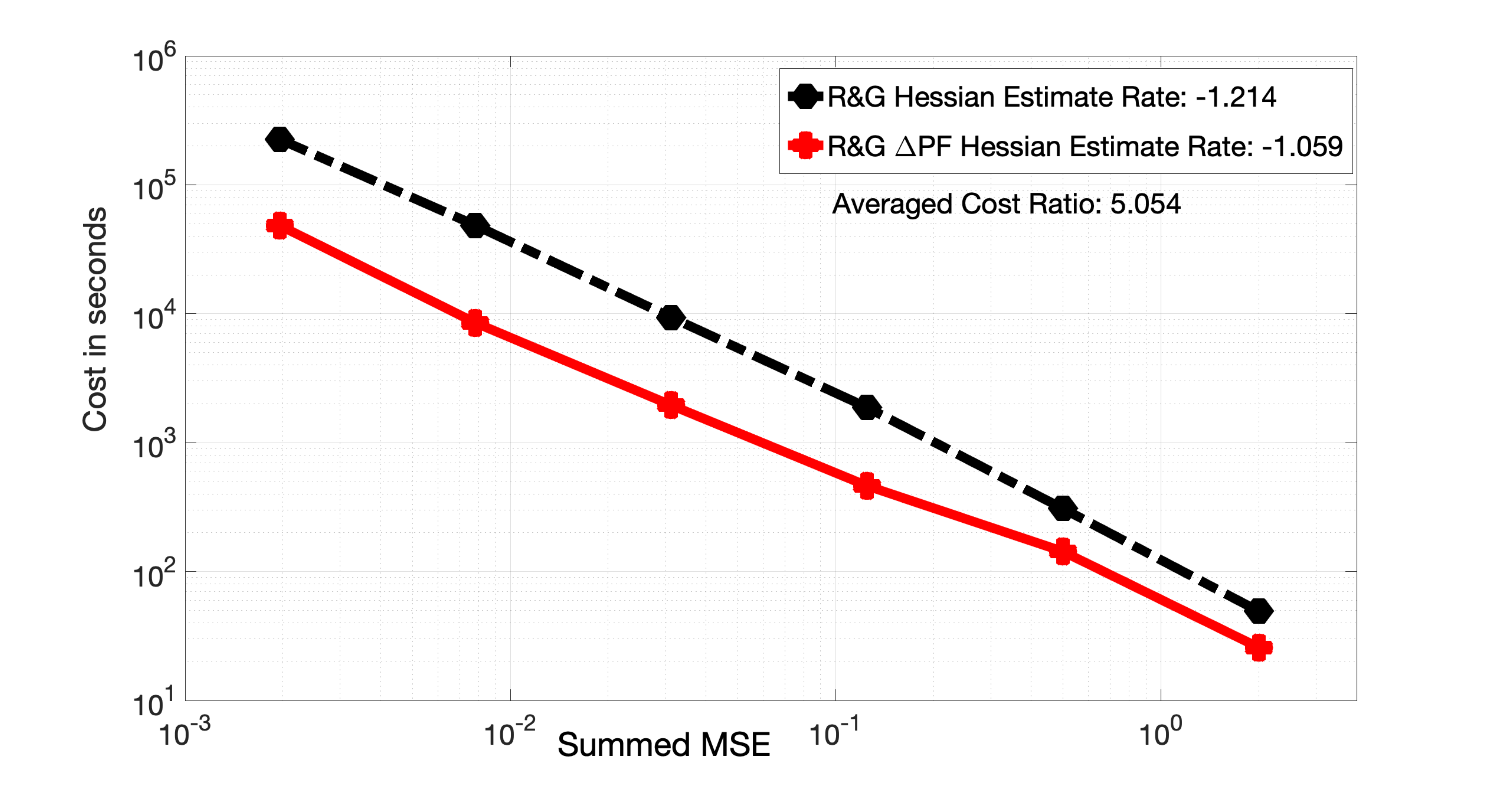}
\caption{Hessian estimate cost against summed MSE for the OU model}
\label{fig:MSE_OU}
\end{figure}

\begin{figure}[h!]
\centering
\includegraphics[width=.49\linewidth]{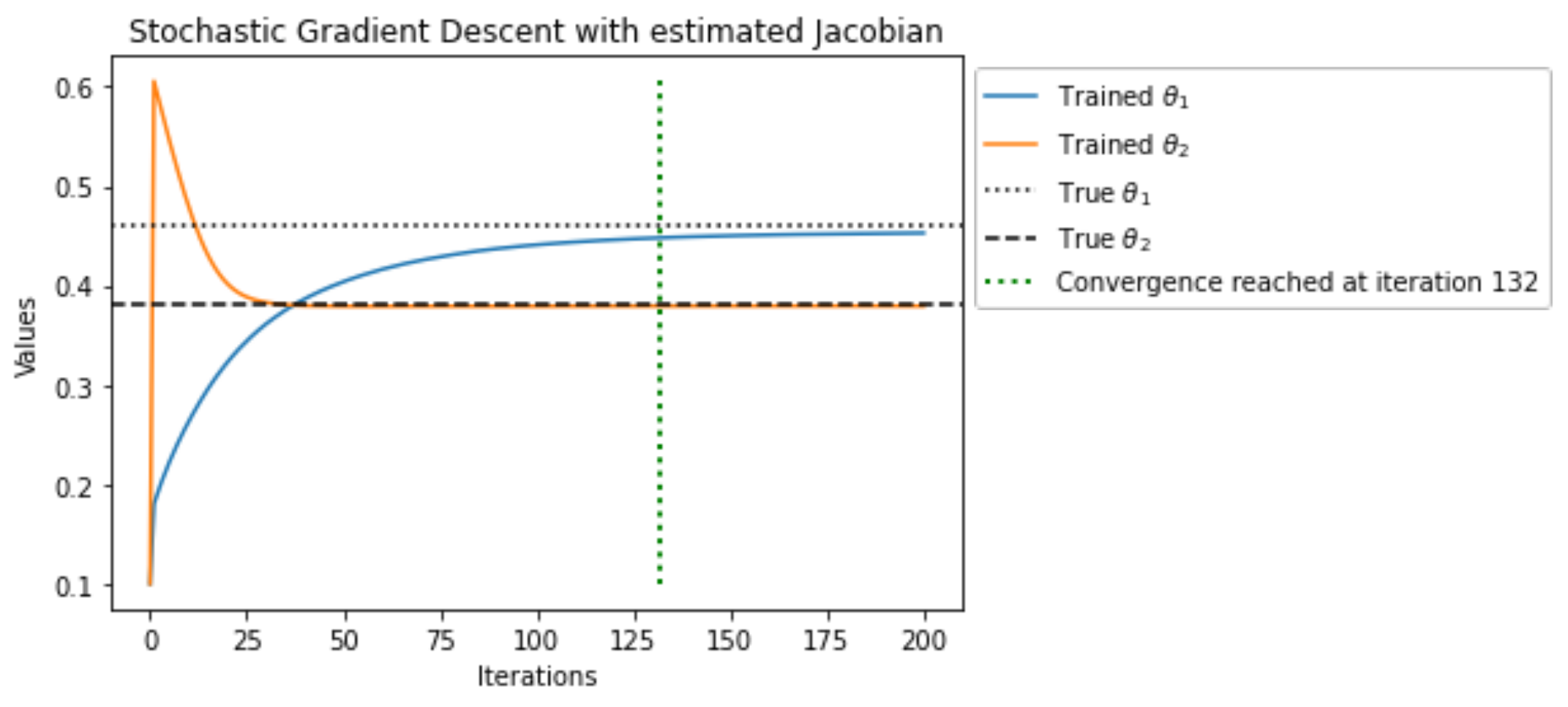}
\includegraphics[width=.49\linewidth]{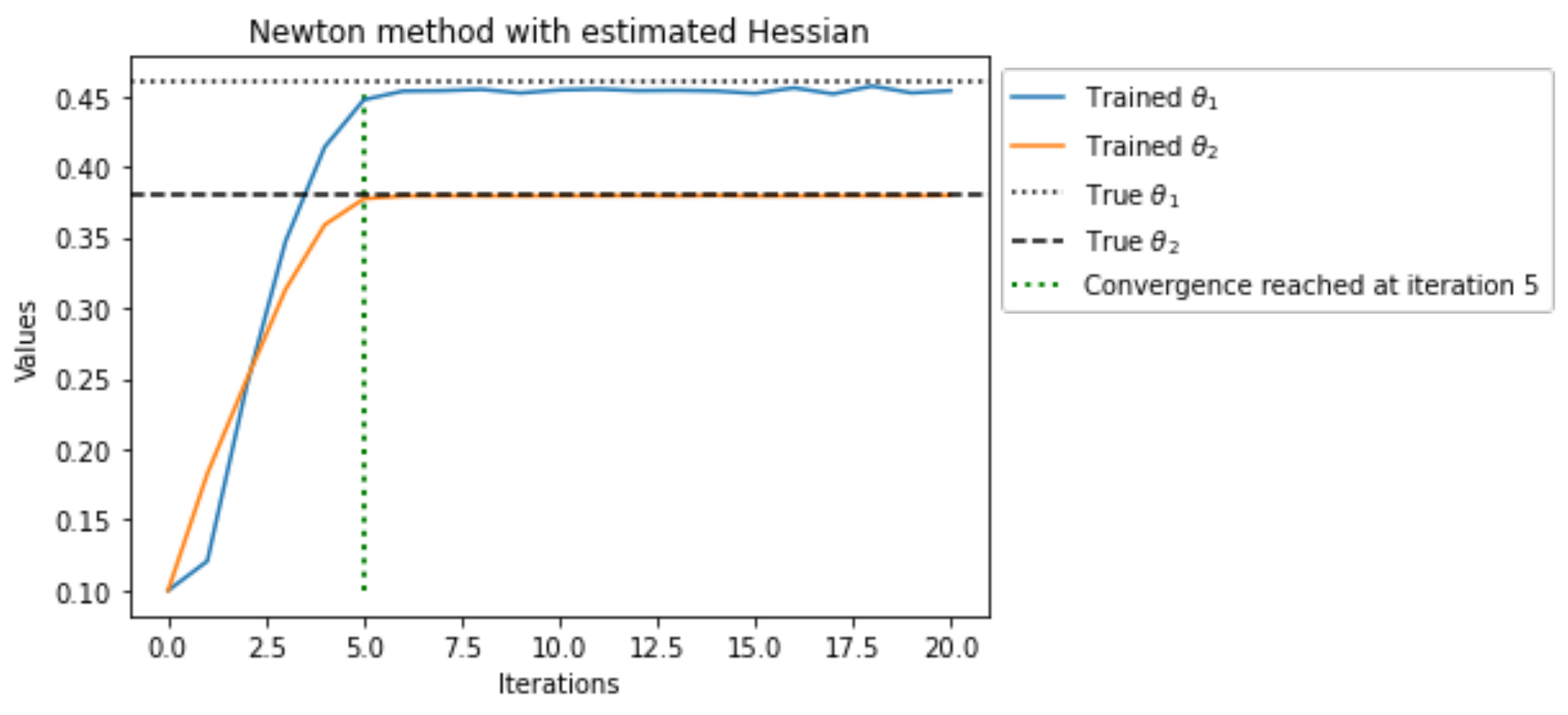}
\caption{Parameter estimate for the OU model. Top: SGD with score estimate. Bottom: Newton method with score \& Hessian estimate.}
\label{fig:SGD_Newton_OU}
\end{figure}

\subsection{Multivariate Ornstein-Uhlenbeck process model}
Our second model of interest is a two-dimensional OU process defined as
\begin{equation*}
\begin{bmatrix} 
dX^{(1)}_{t} \\ 
dX^{(2)}_{t} 
\end{bmatrix} = 
\begin{bmatrix}
 \theta_{1} - \theta_{2} X_{t}^{(1)} \\ 
 - \theta_{3} X_{t}^{(2)}
  \end{bmatrix} dt
  +
  \begin{bmatrix}
        \sigma_1\\
        \sigma_2 \\
    \end{bmatrix} dW_t,  \quad \quad
    X_0 = x_0.
    \end{equation*}
where $x_{0} \in \mathbb{R}^2$ is the initial condition and $(\sigma_{1}, \sigma_{2}) \in \mathbb{R}^{+} \times \mathbb{R}^{+}$ are the diffusion coefficients. We assume Gaussian measurement errors, $Y_{t}|X_{t} \sim g_{\theta}(\cdot | X_{t}) = \mathcal{N}_{2}(X_{t}, \theta_{4} I_{2})$ where $I_{2}$ is a two-dimensional Identity matrix. We generate one sequence of observations up to time $T=500$ with parameter choice $\theta = (\theta_{1}, \theta_{2}, \theta_{3}, \theta_{4}) = (0.48, 0.78,0.37,0.32)$, $\sigma_{1} = 0.8$, $\sigma_{2} = 0.6$, $x_{0} = (1,1)^{T}$. As before, we study various properties of \eqref{eq:ub_est} with the true parameter choice. In Figure \ref{fig:Bias_MOU}, we present the $\log$-$\log$ plot of bias against $\Delta_{L}$ for \eqref{eq:ub_est}, where the five points are evaluated with $L \in \{2,3,4,5,6\}$. The bias is approximated by the difference between ($\ref{eq:ub_est}$) and the true Hessian with $M = 10^4$, we sum over all entry-wise bias and present it on the plot. We observe that the summed bias is of order $\Delta_{L}^{0.972}$. This verifies result in Lemma \ref{lem:diff2}. On top of Figure \ref{fig:Cost-IncreVar_MOU}, we present a $\log$-$\log$ plot of cost against $\Delta_{L}$ for ($\ref{eq:ub_est}$) and the R$\&$G score estimate both with $M=10$. The experiments is done over $L \in \{2,3,4,5,6\}$. We observe that the cost of ($\ref{eq:ub_est}$) is proportional to $\Delta_{L}^{-0.866}$. This rate is similar to that of the score estimate, on average the cost ratio between ($\ref{eq:ub_est}$) and the score estimate is $3.495$.

 In Figure \ref{fig:Cost-IncreVar_MOU}, we present on the bottom the $\log$-$\log$ plot of summed incremental variance of Hessian estimate against $\Delta_{L}$ for $L \in \{2,3,4,5,6\}$. We compute the entry-wise sample variance of the incremental Hessian estimate for $10^3$ times, and plot the summed variance against $\Delta_{L}$. We observe that the Hessian incremental variance is proportional to $\Delta_{L}^{1.068}$. This verifies the result in Lemma \ref{lem:diff3}. On top of Figure \ref{fig:MSE_MOU}, we present the $\log$-$\log$ plot of the cost against MSE for ($\ref{eq:ub_est}$) and ($\ref{eq:hessian_disc}$), where the MSE is approximated through averaging over $10^3$ i.i.d. repetitions of both estimators. We observe that under a summed MSE target of $\mathcal{O}(\epsilon^2)$, the cost for ($\ref{eq:ub_est}$) is of order $\mathcal{O}(\epsilon^{-2.362})$, while the cost for ($\ref{eq:hessian_disc}$) is of order $\mathcal{O}(\epsilon^{-2.958})$. On average, the cost ratio between ($\ref{eq:hessian_disc}$) and ($\ref{eq:ub_est}$) is $3.575$. This verifies the variance reduction effect of truncated R$\&$G scheme. On bottom of Figure \ref{fig:MSE_MOU}, we present the $\log$-$\log$ plot of the cost against MSE for \eqref{eq:ub_est} and hessian estimate using $\Delta$PF. We observe that under a similar MSE target, the latter method on average costs $3.165$ times less than that of \eqref{eq:ub_est}. In Figure \ref{fig:SGD_Newton_MOU}, we present the convergence plots for SGD and Newton method. Both the score estimate and the Hessian estimate \eqref{eq:ub_est} is obtained with $M=2 \times 10^3$, truncated at level $L=8$. The learning rate for the SGD is set to $0.005$. The training reach convergence when the relative Euclidean distance between trained and true $\theta$ is no bigger than $0.02$. We initialize the training parameter at $(0.1, 0.1, 0.1, 0.1)$, we observe that the SGD method reaches convergence with $122$ iterations, while the Modified Newton method reaches convergence with $4$ iterations. The actual training time until convergence for the Newton method is roughly $7.6$ times faster than the SGD method. 

\begin{figure}[h!]
\centering
\includegraphics[width=0.6\textwidth]{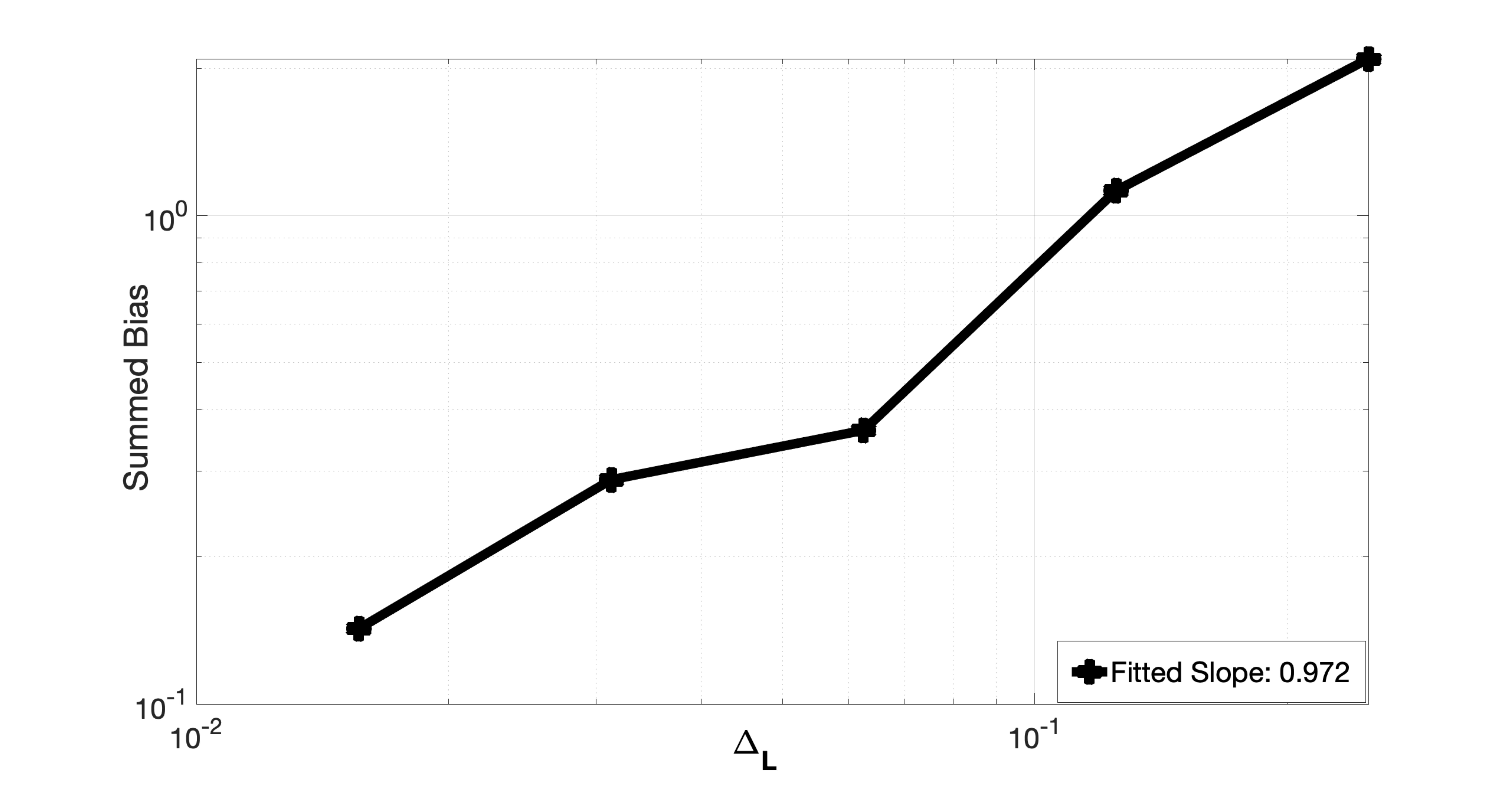}
\caption{Hessian estimate bias summed over all entries for the multivariate OU diffusion model.}
\label{fig:Bias_MOU}
\end{figure}

\begin{figure}[h!]
\centering
\includegraphics[width=0.49\textwidth]{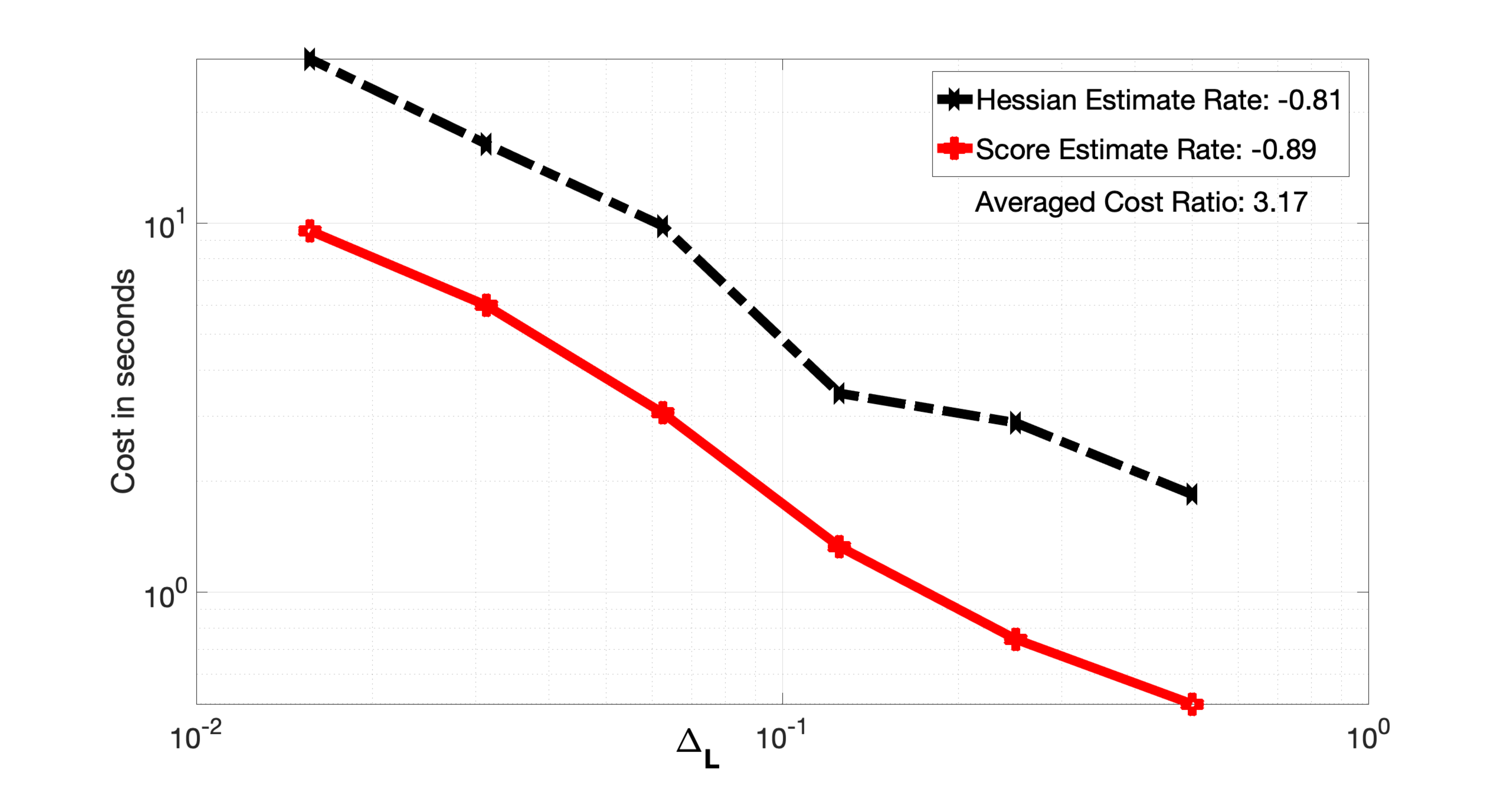}
\includegraphics[width=0.49\textwidth]{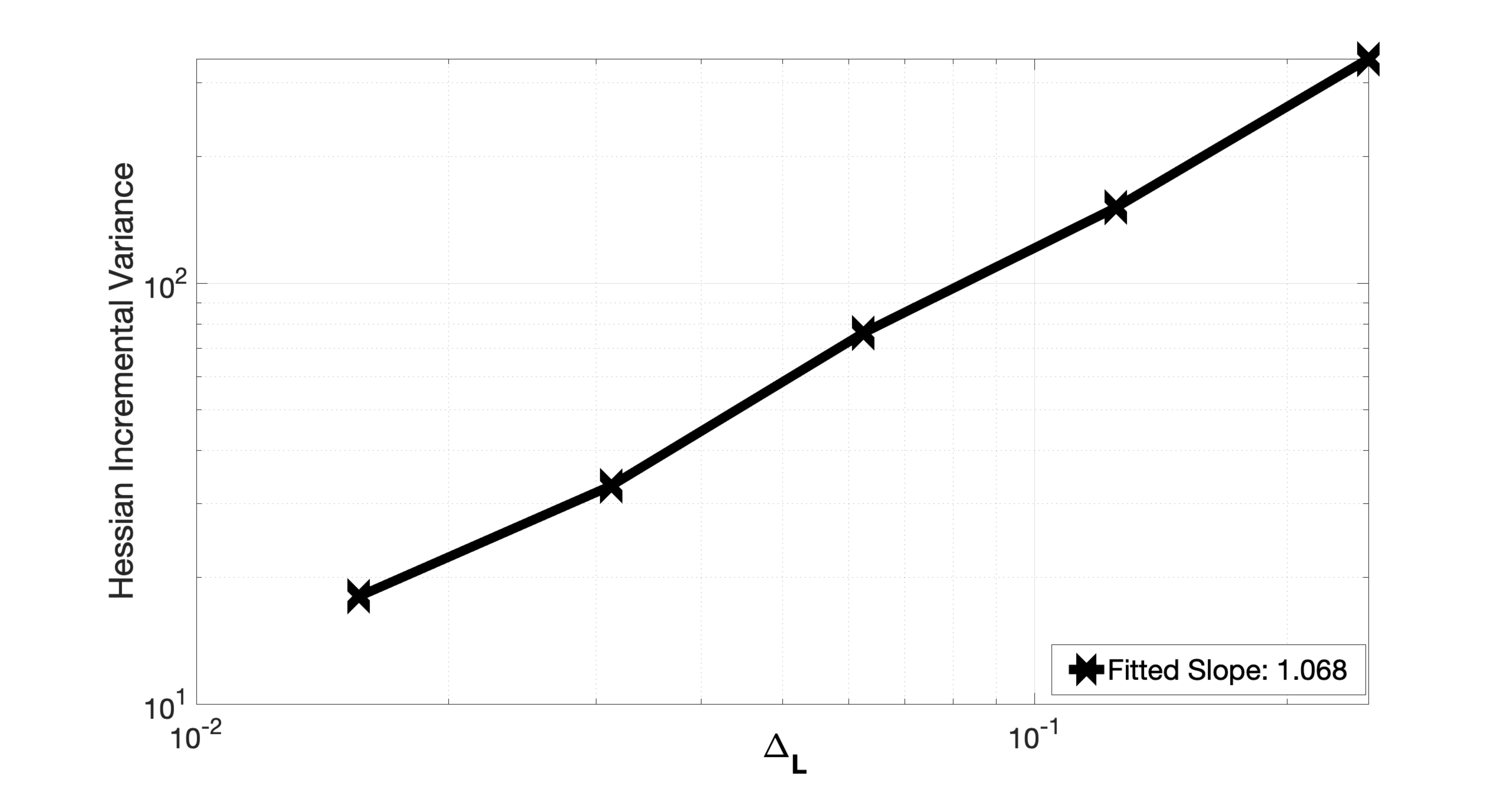}
\caption{Experiments for the Multivariate OU diffusion model. Top: cost of Hessian \& score estimate. Bottom: incremental Hessian estimate variance summed over all entries.}
\label{fig:Cost-IncreVar_MOU}
\end{figure}

\begin{figure}[h!]
\centering
\includegraphics[width=0.49\textwidth]{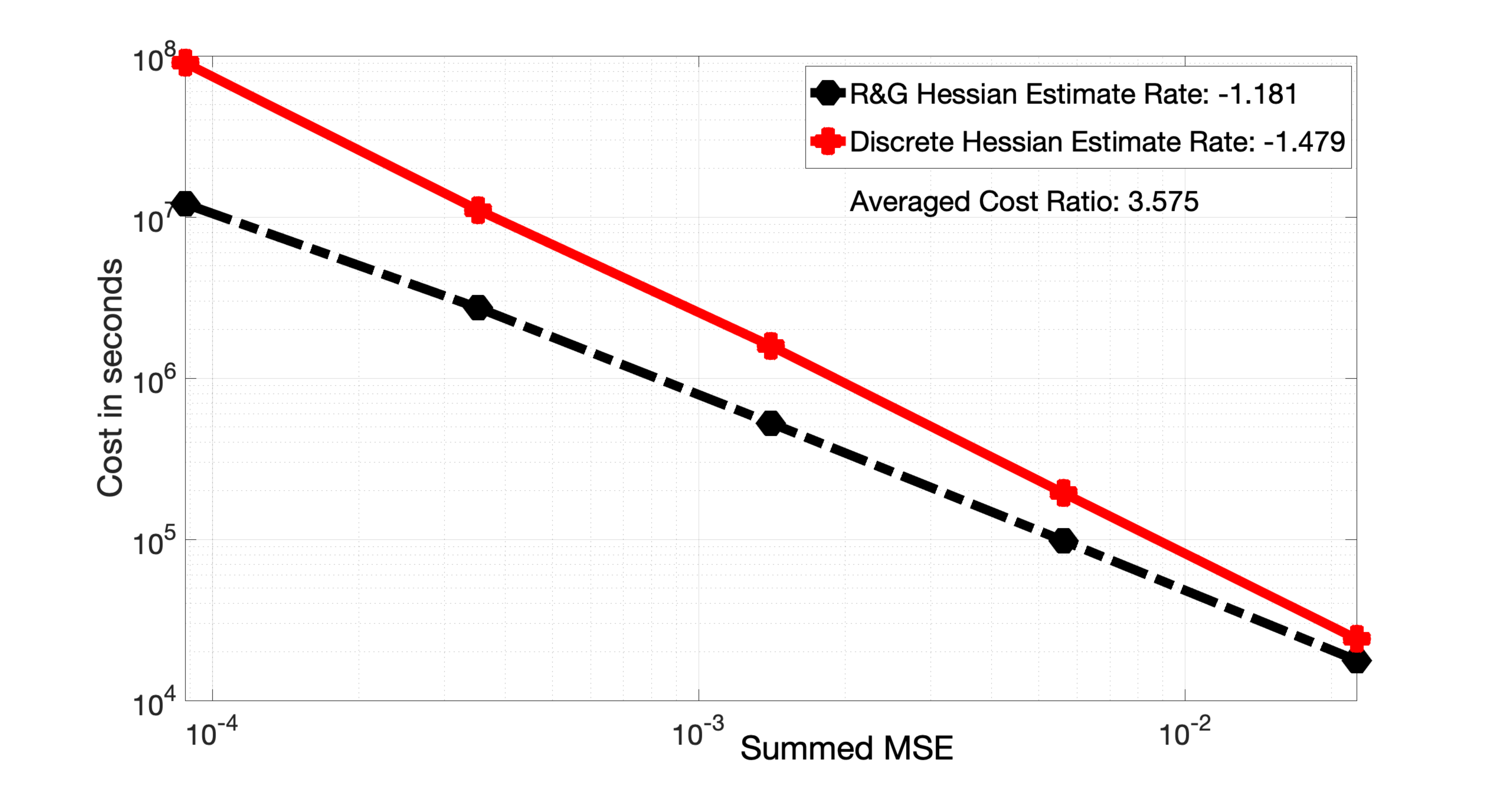} 
\includegraphics[width=0.49\textwidth]{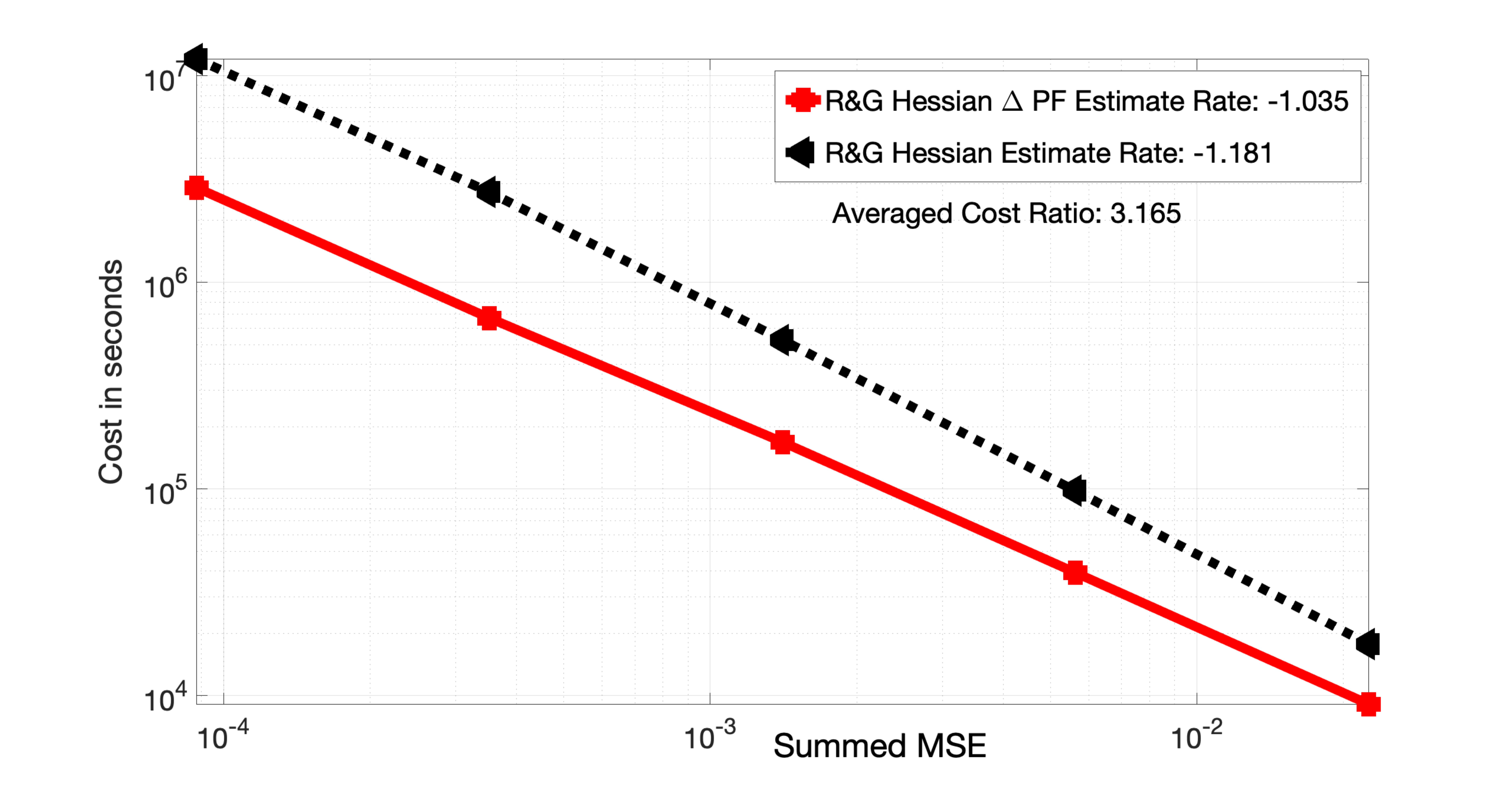} 
\caption{Hessian estimate cost against summed MSE for the multivariate OU diffusion model.}
\label{fig:MSE_MOU}
\end{figure}

\begin{figure}[h!]
\centering
\includegraphics[width=.49\linewidth]{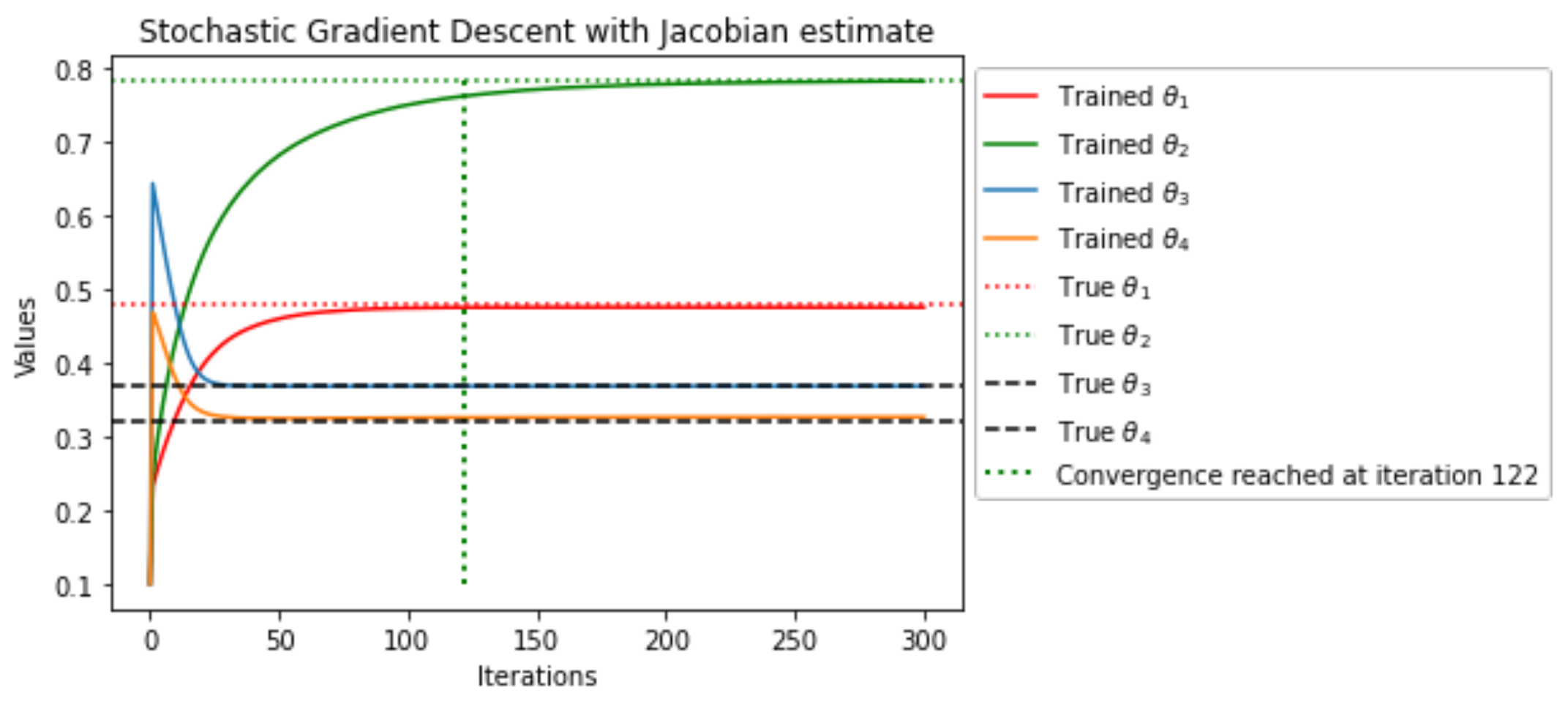}
\includegraphics[width=.49\linewidth]{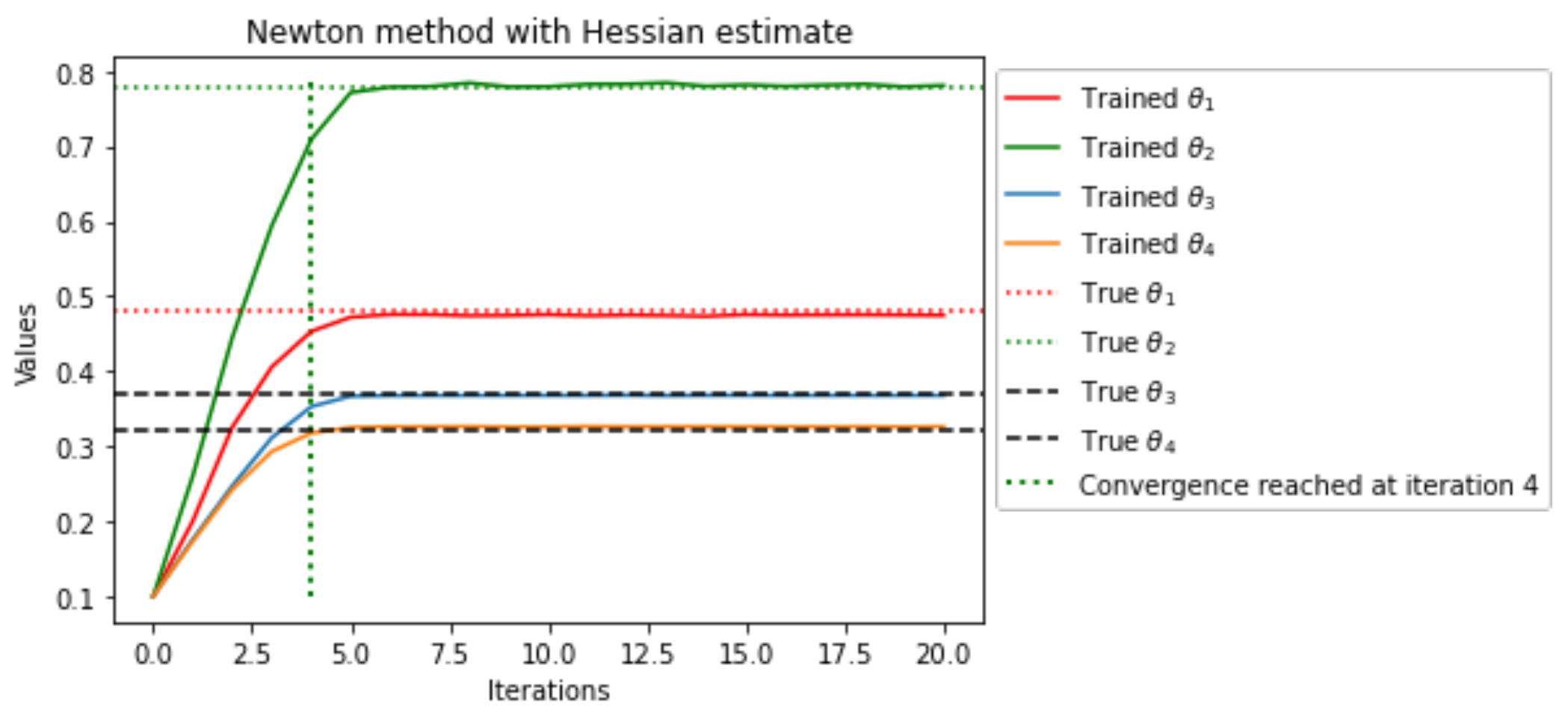}
\caption{Parameter estimate for the multivariate OU model. Top: SGD with the score estimate. Bottom: Newton method with score \& Hessian estimate.}
\label{fig:SGD_Newton_MOU}
\end{figure}


\subsection{FitzHugh--Nagumo model}
Our next model will be a two-dimensional ordinary differential equation, which arises in neuroscience, known as the
FitzHugh--Nagumo (FHN) model \cite{BMS20,DS19}. It is concerned with the membrane potential of a neuron and a (latent) recovery variable modeling the ion
channel kinetics. We consider a stochastic perturbed extended version, given as

\begin{equation*}
\begin{bmatrix} 
dX^{(1)}_{t} \\ 
dX^{(2)}_{t} 
\end{bmatrix} = 
\begin{bmatrix}
 \theta_{1}(X^{(1)}_t-(X_t^{(1)})^3-X^{(2)}_t) \\ 
 \theta_{2} X^{(1)}_t-X^{(2)}_t + \theta_{3}
  \end{bmatrix} dt
  +
  \begin{bmatrix}
        \sigma_1\\
        \sigma_2 \\
    \end{bmatrix} dW_t,  \quad \quad
    X_0 = u_0.
    \end{equation*}
for the discrete observations, we assume Gaussian measurement errors, $Y_{t}|X_{t} \sim g_{\theta}(\cdot | X_{t}) = \mathcal{N}_{2}(X_{t}, \theta_{4} I_{2})$, where  $(\theta_{1},\theta_{2},\theta_{3}, \theta_{4}) \in \mathbb{R}^+ \times \mathbb{R} \times \mathbb{R} \times \mathbb{R}^+$, ( $\sigma_1,\sigma_2) \in \mathbb{R}^+ \times \mathbb{R}^+$ are the diffusion coefficients and, as before, $\{W_t\}_{t \geq 0}$ is a Brownian motion. We generate one observation sequence with parameter choices $\theta = (\theta_{1}, \theta_{2}, \theta_{3}, \theta_{4}) = (0.89, 0.98, 0.5, 0.79)$, $\sigma = (0.2,0.4)$. As the true distribution of the observation is not available analytically, we uses $L=10$ to simulate out $\{Y_{1}, Y_{2}, ..., Y_{T}\}$ where $T=500$. In Figure \ref{fig:Bias_FHN}, we compared the bias of ($\ref{eq:ub_est}$), truncated at discretization level $L \in \{2,3,4,5,6,7\}$ and plot it against $\Delta_{L}$ ($\log$-$\log$ plot). The summed bias is obtained by taking element-wise difference between average of $10^3$ i.i.d. realizations of the Hessian estimate and the true Hessian, then summed over all the element-wise difference. The true Hessian is approximated by ($\ref{eq:ub_est}$) with $M=10^4$ and $L=10$.  We observe that the summed bias is of order $\mathcal{O}(\Delta_{L}^{1.402})$. This verifies the result in Lemma \ref{lem:diff2}. At the top of Figure ($\ref{fig:Cost-IncreVar_FHN}$), we present the $\log$-$\log$ plot of cost against $\Delta_{L}$ for ($\ref{eq:ub_est}$) and the R$\&$G score estimate both with $M=10$. The experiments is done over $L \in \{2,3,4,5,6,7\}$. We observe that the cost of ($\ref{eq:ub_est}$) is of order $\mathcal{O}(\Delta_{L}^{-0.866})$, while the cost for score estimate is of order $\mathcal{O}(\Delta_{L}^{-0.867})$. The average cost ratio between ($\ref{eq:ub_est}$) and the score estimate is $3.495$. In the bottom of Figure \ref{fig:Cost-IncreVar_FHN}, we present the $\log$-$\log$ plot of the summed incremental variance with $\Delta_{L}$, where $L \in\{2,3,4,5,6,7\}$. 
We observe that the summed incremental variance is of order $\mathcal{O}(\Delta_{L}^{1.118})$. This verifies the result in Lemma \ref{lem:diff3}. 

At the top of Figure \ref{fig:MSE_FHN} we present a $\log$-$\log$ plot of the cost against the summed MSE of \ref{eq:ub_est} over all entries for both ($\ref{eq:ub_est}$) and ($\ref{eq:hessian_disc}$). We observe that under a MSE target of $\epsilon^{2}$, ($\ref{eq:ub_est}$) requires cost of order $\mathcal{O}(\epsilon^{-2.482})$, while ($\ref{eq:hessian_disc}$) requires cost of order $\mathcal{O}(\epsilon^{-2.97})$. The average cost ratio between ($\ref{eq:ub_est}$) and ($\ref{eq:hessian_disc}$) under same MSE target is $3.987$. This verifies the variance reduction effect of truncated R$\&$G scheme. On the bottom of Figure \ref{fig:MSE_FHN} we present a $\log$-$\log$ plot of cost against summed MSE for \eqref{eq:ub_est} and Hessian estimate using the $\Delta$PF. We observe that under similar MSE target, the latter method on average costs $4.627$ times less than that of ($\ref{eq:ub_est}$). In Figure \ref{fig:SGD_Newton_FHN}, we present the convergence plots of SGD and the modified Newton method. For the modified Newton method, we set all the off-diagonal entries to zero for the Hessian estimate, and add $0.0001$ to the diagonal entries to avoid singularity. When the $L_2$ norm of the score is smaller than $0.1$, we scale the searching step by a learning rate of $0.002$. Both the score estimate and the Hessian estimate \eqref{eq:ub_est} is obtained with $M=2 \times 10^3$, truncated at level $L=8$. The learning rate for the SGD is set to $0.001$. The training reach convergence when the relative Euclidean distance between trained and true $\theta$ is no bigger than $0.02$. We initialize the training parameter at $(0.8,0.8,0.8,0.8)$, we observe that the SGD method reaches convergence with $247$ iterations, while the modified Newton method reaches convergence with $5$ iterations.

\begin{figure}[h!]
\centering
\includegraphics[width=0.6\textwidth]{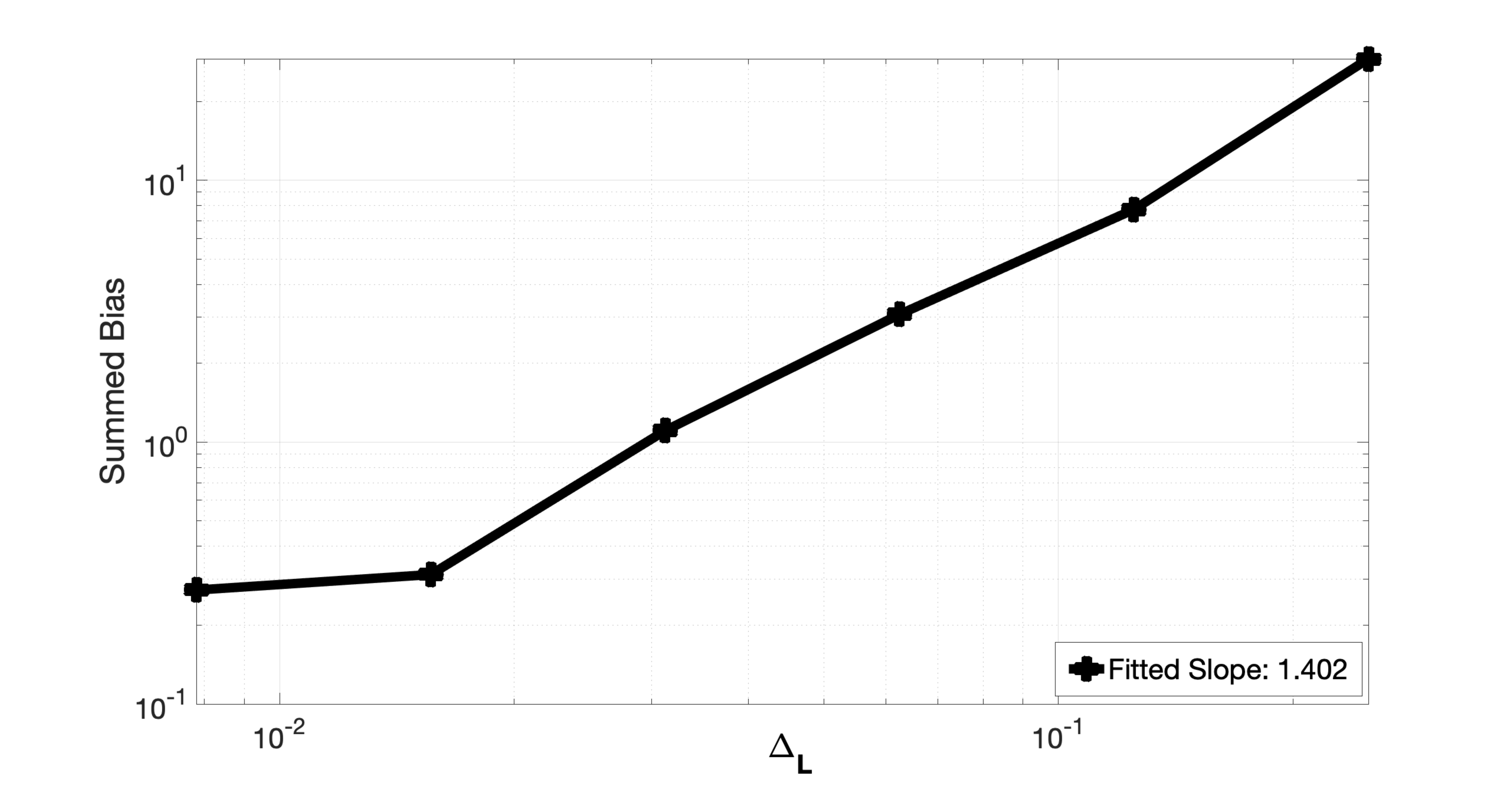}
\caption{Hessian estimate bias summed over all entries for the FHN model.}
\label{fig:Bias_FHN}
\end{figure}

\begin{figure}[h!]
\centering
\includegraphics[width=0.49\textwidth]{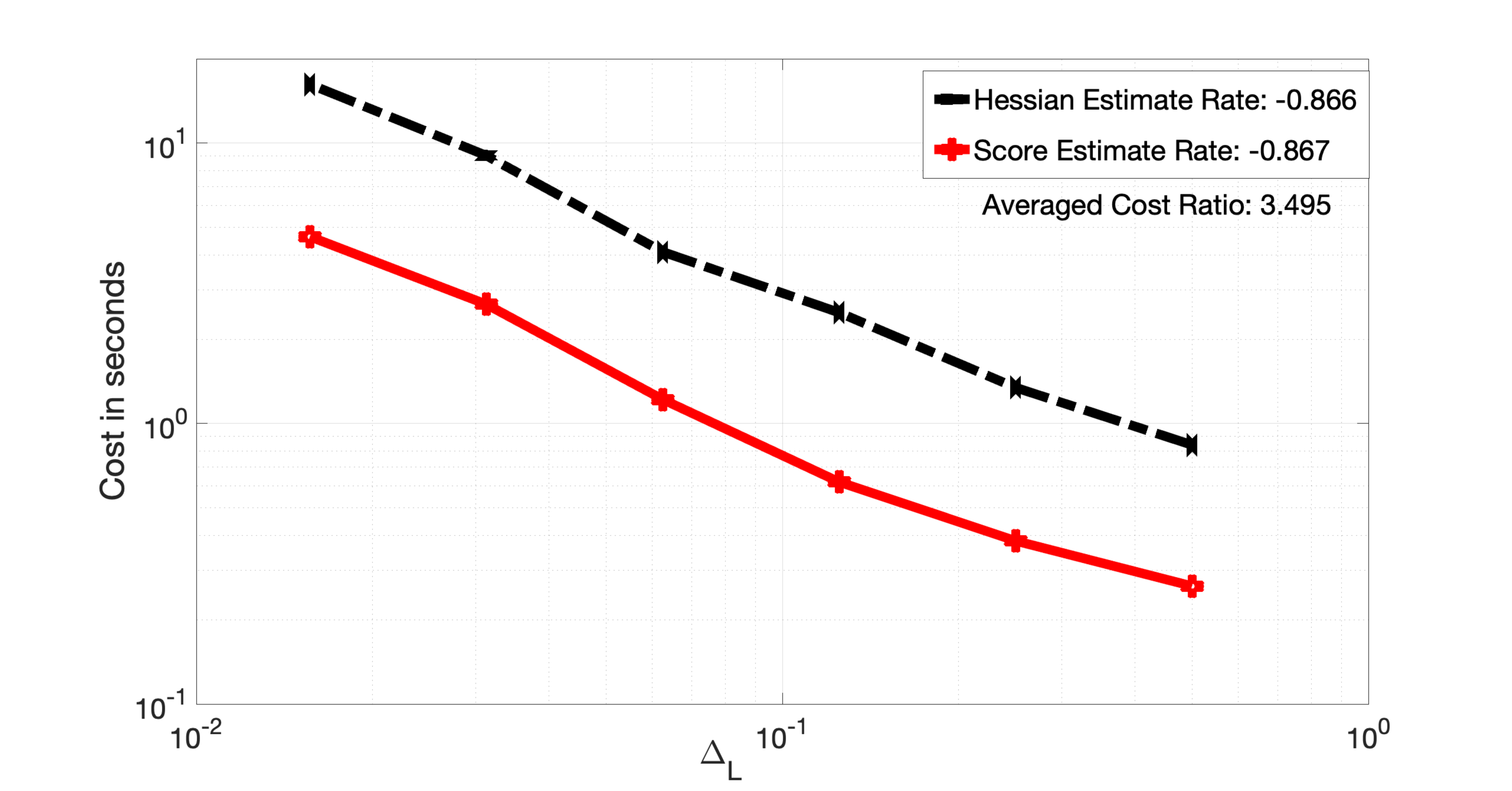}
\includegraphics[width=0.49\textwidth]{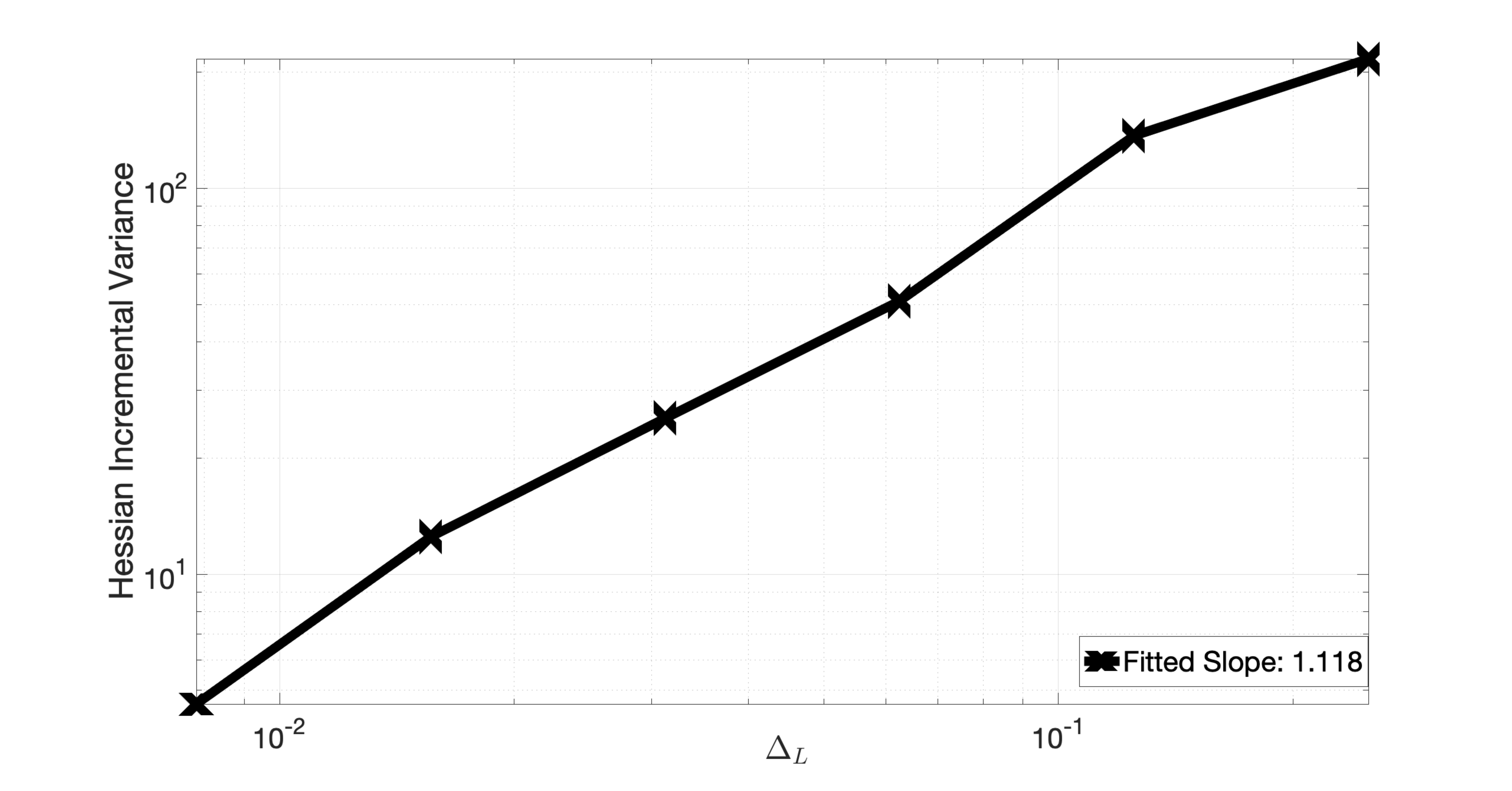}
\caption{Experiments for the FHN model. Top: cost of Hessian \& score estimate. Bottom: incremental Hessian estimate variance summed over all entries.}
\label{fig:Cost-IncreVar_FHN}
\end{figure}

\begin{figure}[h!]
\centering
\includegraphics[width=0.49\textwidth]{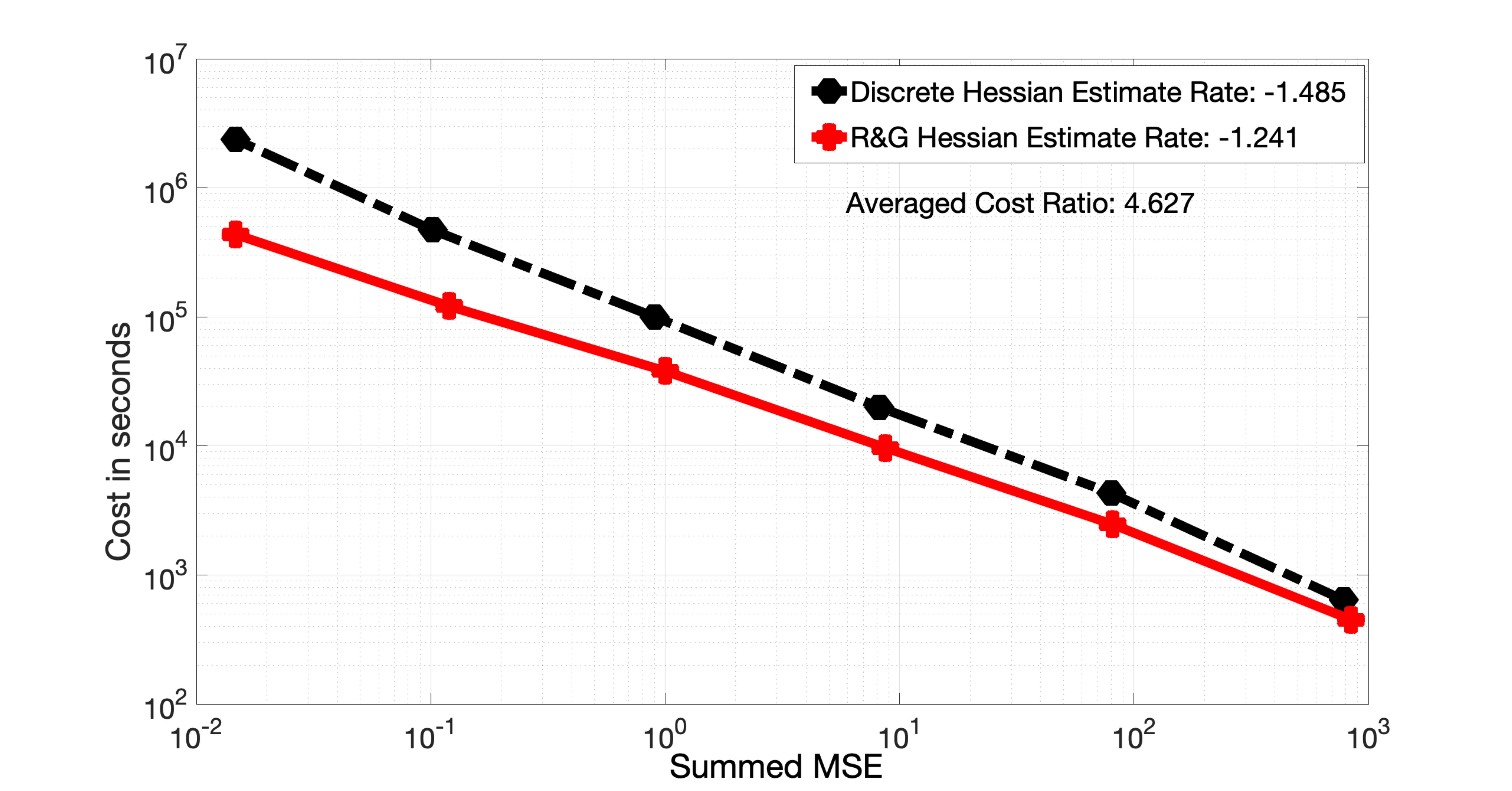}
\includegraphics[width=0.49\textwidth]{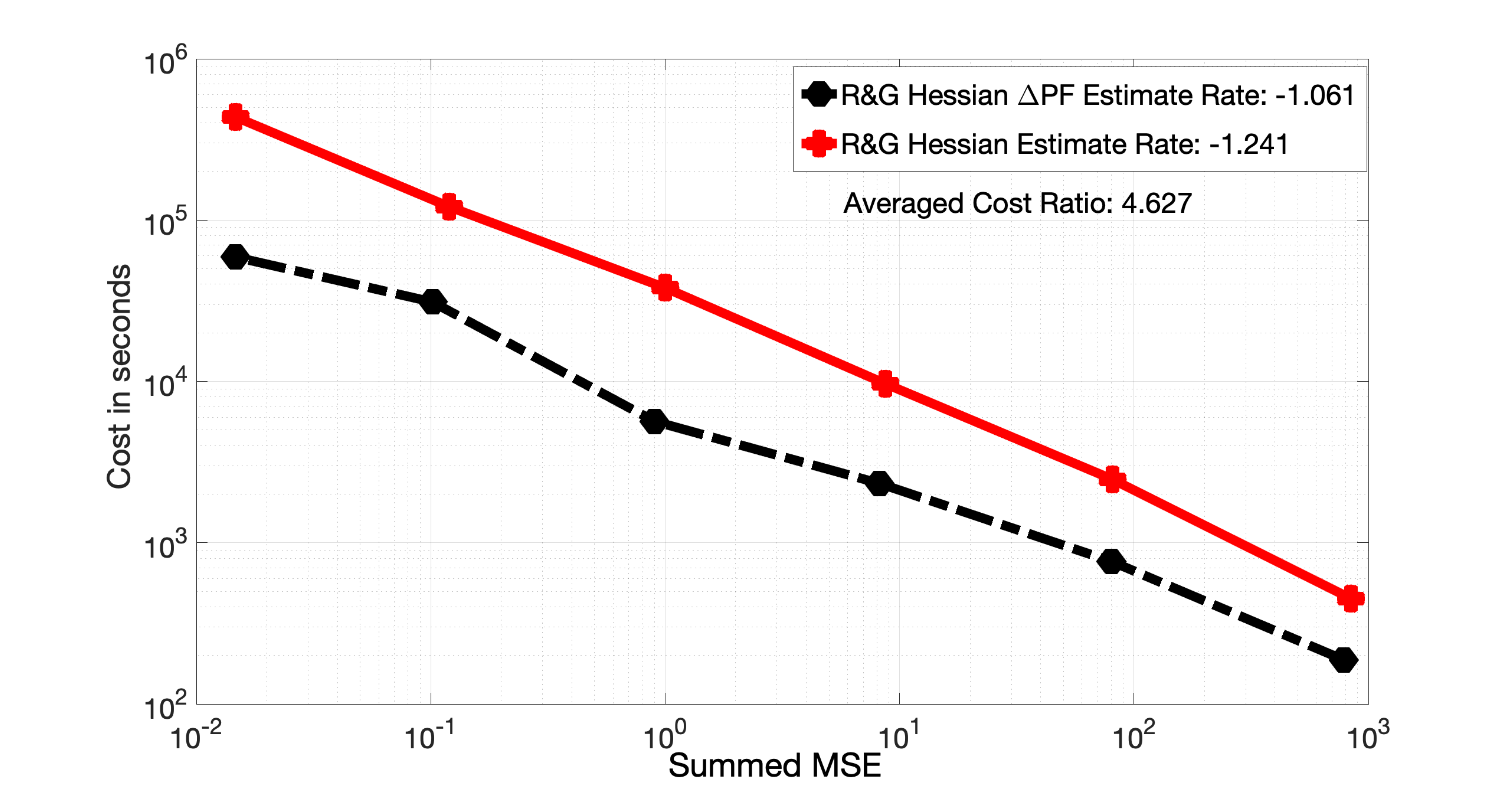}
\caption{Hessian Estimate Cost against summed MSE for FHN model.}
\label{fig:MSE_FHN}
\end{figure}

\begin{figure}[h!]
\centering
\includegraphics[width=.49\linewidth]{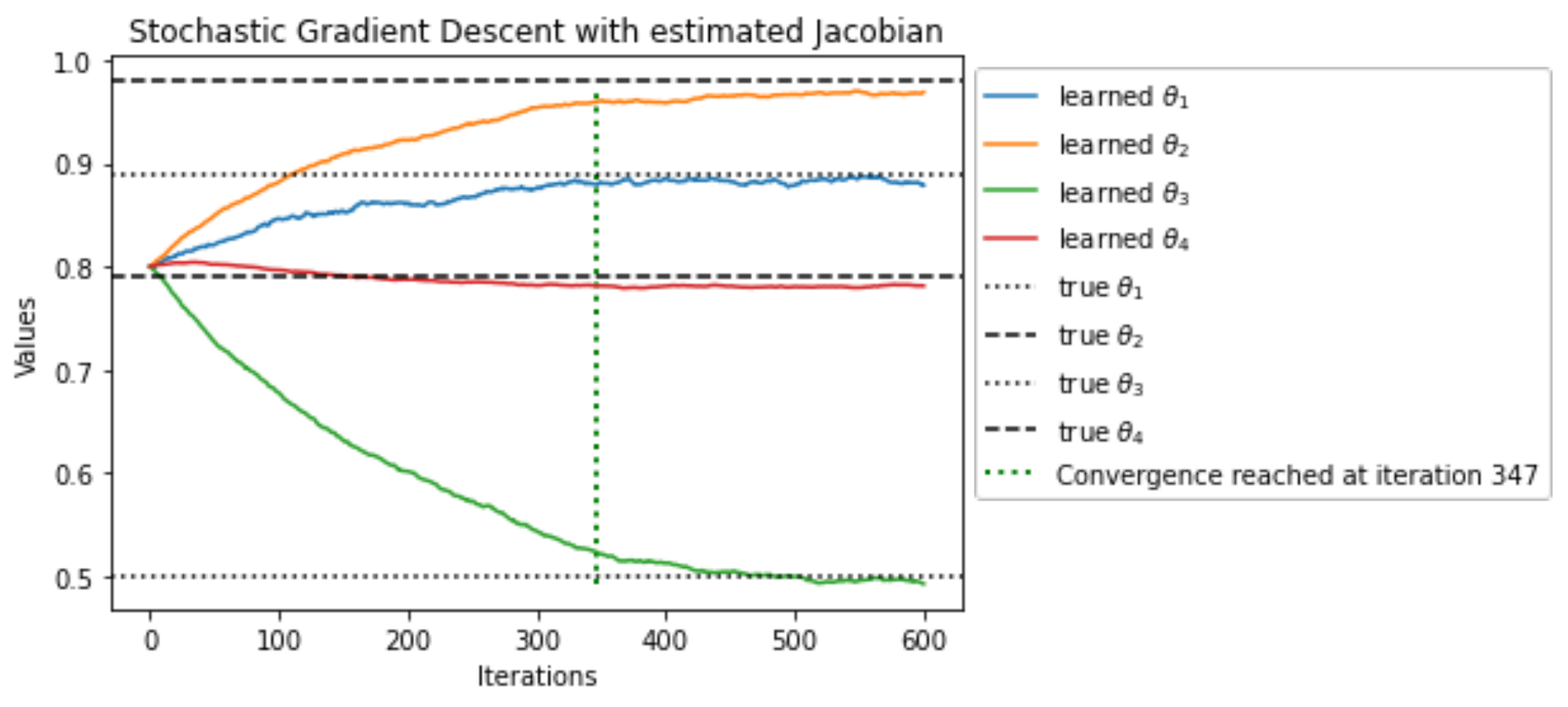}
\includegraphics[width=.49\linewidth]{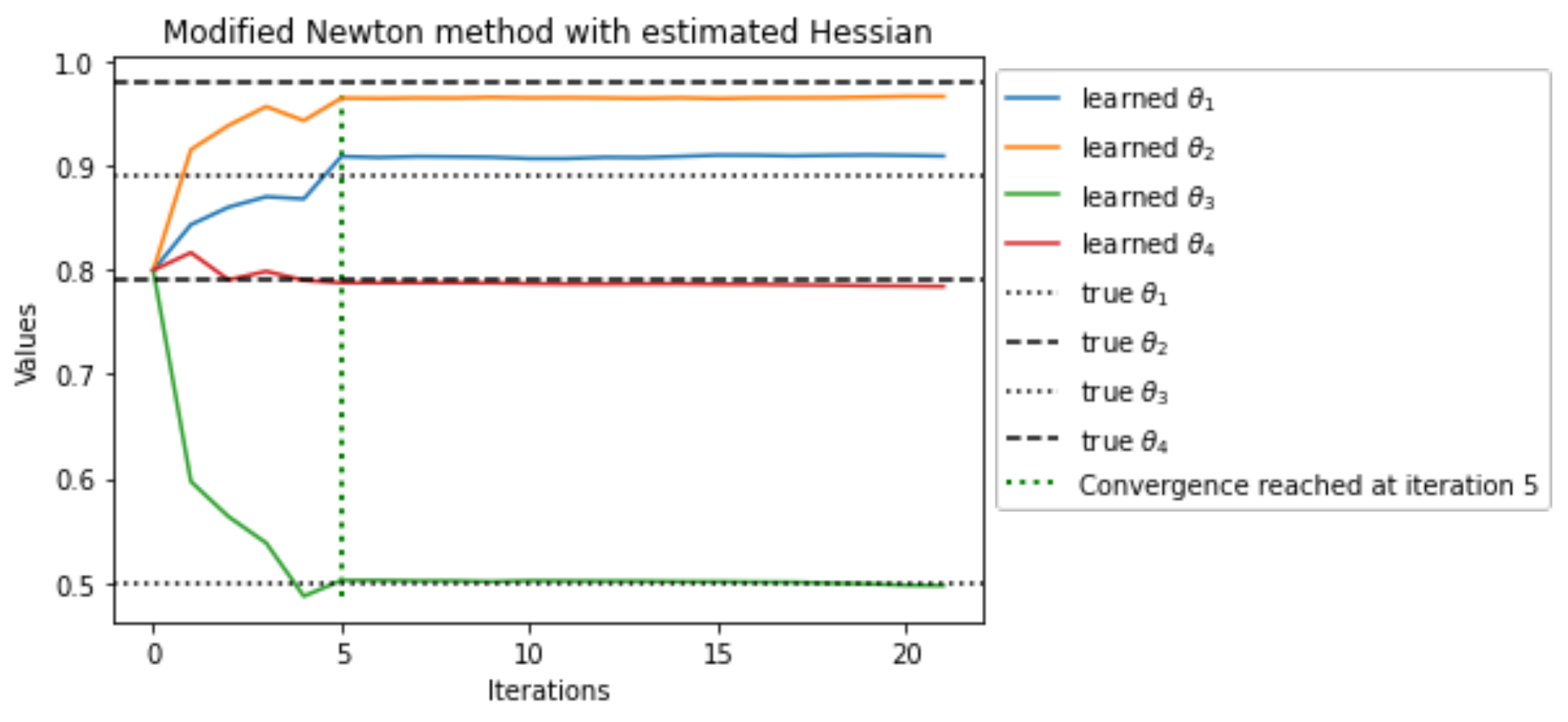}
\caption{Parameter estimate for the FHN model. Top: SGD with score estimate. Bottom: modified Newton method with score \& Hessian estimate.}
\label{fig:SGD_Newton_FHN}
\end{figure}

\section{Summary}
\label{sec:summ}
In this work we were interested in developing an unbiased estimator of the Hessian, related to partially observed diffusion processes. This task is of interest, as computing the Hessian is primarily biased, due to its computational cost, but also
its has improved convergence over the score function. We presented a general expression for the Hessian and proved, in the limit of discretization level, that it is consistent with the continuous-form. We demonstrated that we were able
to reduce the bias, arising from the discretization. This was shown through various numerical experiments that were tested on a range of diffusion processes. This not only highlighted the reduction in bias, but that convergence is better compared to computing
and using the score function. In terms of research directions beyond what we have done, it would be nice firstly to extend this to more complicated diffusion models, such as ones arising in mathematical finance \cite{JG06,GJR18}.
Such diffusion models would be rough volatility models. Another potential direction would be to consider diffusion bridges, and analyze how one can could use the tools here and adapt them. This has been of interest, with
recent works such as \cite{MS18,SVZ17}. Finally one could aim to 

\section*{Acknowledgments}
This work was supported by KAUST baseline funding.

\appendix

\section{Proofs for Proposition \ref{prop:hess_conv}}\label{app:hess_conv}

In this Section we will consider a diffusion process $\{X_t^x\}_{t\geq 0}=\mathbf{X}_T^x$ which follows \eqref{eq:sde} and has an initial condition $X_0=x\in\mathbb{R}^d$ and we will also consider Euler discretizations \eqref{eq:euler}, at some given level $l$, which are driven by the same Brownian motion as $\{X_t^x\}_{t\geq 0}$ and the same initial condition, written $(\widetilde{X}_{\Delta_l}^x,\widetilde{X}_{2\Delta_l}^x,\dots)$. We also consider another diffusion process $\{X_t^{x_\star}\}_{t\geq 0}$ which also follows \eqref{eq:sde}, initial condition $X_0=x_{\star}\in\mathbb{R}^d$ with the same Brownian motion as $\{X_t^x\}_{t\geq 0}$  and associated Euler discretizations, at level $l$, which are driven by the same Brownian motion as $\{X_t^{x_{\star}}\}_{t\geq 0}$ and the same initial condition, written $(\widetilde{X}_{\Delta_l}^{x_{\star}},\widetilde{X}_{2\Delta_l}^{x_{\star}},\dots)$.
The use of signifying the initial condition will be made apparent later on in the appendix.
The expectation operator for the described process is written $\mathbb{E}_{\theta}$.

We require the following additional assumption called (D2) and all derivatives are assumed to be well-defined.
\begin{itemize}
\item{$[\Sigma^{-1}]^{j,k}\in\mathcal{B}_b(\mathbb{R}^d)\cap\textrm{Lip}_{\|\cdot\|_2}(\mathbb{R}^d)$ $(j,k)\in\{1,\dots,d\}^2$.}
\item{For any $\theta\in\Theta$, $a_\theta^j\in\mathcal{B}_b(\mathbb{R}^d)$, $\sigma^{j,k}\in\mathcal{B}_b(\mathbb{R}^d)$, $(j,k)\in\{1,\dots,d\}^2$.}
\item{For any $\theta\in\Theta$, there exists $0<\underline{C}<\overline{C}<+\infty$ such that for any $(x,y)\in \mathbb{R}^d\times\mathbb{R}^{d_y}$, $\underline{C}\leq g_{\theta}(y|x)\leq \overline{C}$. In addition for any $(\theta,y)\in\Theta\times\mathbb{R}^{d_y}$, $g_{\theta}(y|\cdot)\in\textrm{Lip}_{\|\cdot\|_2}(\mathbb{R}^d)$.}
\item{For any $(\theta,y)\in\Theta\times\mathbb{R}^{d_y}$, $\frac{\partial}{\partial\theta^{(i)}}(\log\{g_{\theta}(y|\cdot)\})\in\mathcal{B}_b(\mathbb{R}^d)\cap\textrm{Lip}_{\|\cdot\|_2}(\mathbb{R}^d)$,
$i\in\{1,\dots,d_{\theta}\}$.}
\item{For any $\theta\in\Theta$, $$
\frac{\partial}{\partial\theta^{(i)}}(b_{\theta}^{(j)}), \frac{\partial}{\partial\theta^{(i)}}(b_{\theta}^{(j)})^2,\frac{\partial^2}{\partial\theta^{(i)}\partial\theta^{(k)}}(b_{\theta}^{(j)}), \frac{\partial^2}{\partial\theta^{(i)}\partial\theta^{(k)}}(b_{\theta}^{(j)})^2\in\mathcal{B}_b(\mathbb{R}^d)\cap\textrm{Lip}_{\|\cdot\|_2}(\mathbb{R}^d)
$$ 
$(i,k,j)\in\{1,\dots,d_{\theta}\}^2\times\{1,\dots,d\}$.}
\end{itemize}

The following result is proved in \cite{HHJ21} and is Lemma 1 of that article.

\begin{lem}\label{lem:diff1}
Assume (D1-2). Then for any $(n,r,\theta,i)\in \mathbb{N}\times[1,\infty)\times\Theta\times\{1,\dots,d_{\theta}\}$, there exists a $C<\infty$ such that for any $(l,x)\in\mathbb{N}_0\times\mathbb{R}^d$
$$
\mathbb{E}_{\theta}[|G_{\theta}^l(\mathbf{X}_{T}^{l,x})^{(i)}-G_{\theta}(\mathbf{X}_T^x)^{(i)}|^r]^{1/r} \leq C\Delta_l^{1/2}.
$$
\end{lem}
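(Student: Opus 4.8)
The plan is to compare $G_\theta^l(\mathbf{X}_T^{l,x})^{(i)}$ and $G_\theta(\mathbf{X}_T^x)^{(i)}$ term by term, according to the three pieces appearing in both definitions: the observation contribution $\sum_{p=1}^n\partial_{\theta^{(i)}}(\log\{g_\theta(y_p|\cdot)\})$; the ``$\|b_\theta\|_2^2$'' correction, a Riemann sum against $-\tfrac12\int_0^T\partial_{\theta^{(i)}}(\|b_\theta(X_s^x)\|_2^2)\,ds$; and the ``stochastic integral'' contribution $\sum_k\psi_\theta(\widetilde X_{k\Delta_l}^x)[\widetilde X_{(k+1)\Delta_l}^x-\widetilde X_{k\Delta_l}^x]$ against $\int_0^T\psi_\theta(X_s^x)\,dX_s^x$, where $\psi_\theta(x):=(\partial_{\theta^{(i)}}b_\theta)(x)^*\Sigma(x)^{-1}\sigma(x)^*$ (only $b_\theta$ carries $\theta$, since $\sigma,\Sigma$ do not and the increments are held fixed when differentiating). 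By the triangle inequality in $L_r(\mathbb{E}_\theta)$ it suffices to bound each of the three differences by $C\Delta_l^{1/2}$ with $C$ uniform in $(l,x)$. The ingredients I would first record from (D1)--(D2) are: since $a_\theta$ and $\sigma$ are bounded and globally Lipschitz, the standard strong-error bound $\mathbb{E}_\theta[\|\widetilde X_{k\Delta_l}^x-X_{k\Delta_l}^x\|_2^r]^{1/r}\le C\Delta_l^{1/2}$ and the time-regularity bound $\mathbb{E}_\theta[\|X_t^x-X_s^x\|_2^r]^{1/r}\le C|t-s|^{1/2}$ hold with $C$ depending only on $T$, $r$ and the coefficient bounds --- in particular uniformly in $x$; and the fields $\partial_{\theta^{(i)}}(\log\{g_\theta(y|\cdot)\})$, $f:=\partial_{\theta^{(i)}}(\|b_\theta(\cdot)\|_2^2)$, $\psi_\theta$, $\psi_\theta\sigma$ and $\psi_\theta a_\theta$ are all bounded and globally Lipschitz (products of bounded Lipschitz fields, using the first two (D2) bullets for $\Sigma^{-1}$ and for boundedness of $a_\theta,\sigma$, (D1) for their Lipschitz property, and the last two (D2) bullets for the $\theta$-derivatives).

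The observation term is immediate: Lipschitz continuity of $x\mapsto\partial_{\theta^{(i)}}(\log\{g_\theta(y_p|x)\})$ bounds the difference by $C\sum_{p=1}^n\|\widetilde X_p^x-X_p^x\|_2$, and taking $L_r$-norms and using the strong-error bound at the grid points $p\in\{1,\dots,n\}$ gives $Cn\Delta_l^{1/2}$. For the ``$\|b_\theta\|_2^2$'' term I would split
$$\frac{\Delta_l}{2}\sum_{k=0}^{\Delta_l^{-1}T-1}f(\widetilde X_{k\Delta_l}^x)-\frac12\int_0^T f(X_s^x)\,ds=\frac{\Delta_l}{2}\sum_{k}\big(f(\widetilde X_{k\Delta_l}^x)-f(X_{k\Delta_l}^x)\big)+\frac12\sum_{k}\int_{k\Delta_l}^{(k+1)\Delta_l}\big(f(X_{k\Delta_l}^x)-f(X_s^x)\big)\,ds ;$$
the first sum is handled like the observation term (Lipschitz of $f$ and the strong-error bound, with $\Delta_l^{-1}T$ summands each weighted by $\Delta_l$), and the second by Lipschitz of $f$ together with $\|X_{k\Delta_l}^x-X_s^x\|_2\le C\Delta_l^{1/2}$ for $s\in[k\Delta_l,(k+1)\Delta_l)$; both contributions are $O(\Delta_l^{1/2})$ in $L_r$.

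The stochastic-integral term is the main obstacle. Substituting the Euler increment $\widetilde X_{(k+1)\Delta_l}^x-\widetilde X_{k\Delta_l}^x=a_\theta(\widetilde X_{k\Delta_l}^x)\Delta_l+\sigma(\widetilde X_{k\Delta_l}^x)(W_{(k+1)\Delta_l}-W_{k\Delta_l})$ and $dX_s^x=a_\theta(X_s^x)\,ds+\sigma(X_s^x)\,dW_s$, the difference of the two objects splits into a drift part $\sum_k\int_{k\Delta_l}^{(k+1)\Delta_l}\big(\psi_\theta a_\theta(\widetilde X_{k\Delta_l}^x)-\psi_\theta a_\theta(X_s^x)\big)\,ds$, bounded exactly as the ``$\|b_\theta\|_2^2$'' term, and a martingale part
$$M_l:=\sum_{k=0}^{\Delta_l^{-1}T-1}\int_{k\Delta_l}^{(k+1)\Delta_l}\big(\psi_\theta\sigma(\widetilde X_{k\Delta_l}^x)-\psi_\theta\sigma(X_s^x)\big)\,dW_s ,$$
an It\^o integral of a bounded adapted integrand. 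I would apply the Burkholder--Davis--Gundy inequality followed by Lipschitz continuity of $\psi_\theta\sigma$ to obtain
$$\mathbb{E}_\theta[|M_l|^r]^{1/r}\le C_r\,\mathbb{E}_\theta\Big[\Big(\int_0^T\big\|\psi_\theta\sigma(\widetilde X_{\lfloor s\rfloor_l}^x)-\psi_\theta\sigma(X_s^x)\big\|^2 ds\Big)^{r/2}\Big]^{1/r}\le C\,\mathbb{E}_\theta\Big[\Big(\int_0^T\big(\|\widetilde X_{\lfloor s\rfloor_l}^x-X_{\lfloor s\rfloor_l}^x\|_2^2+\|X_{\lfloor s\rfloor_l}^x-X_s^x\|_2^2\big) ds\Big)^{r/2}\Big]^{1/r},$$
with $\lfloor s\rfloor_l$ the left grid-point of $s$. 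Minkowski's integral inequality when $r\ge2$ (and Jensen's inequality to reduce to $r=2$ when $1\le r<2$), combined with the strong-error and time-regularity bounds applied pointwise in $s$, makes the right-hand side $O((T\Delta_l)^{1/2})=O(\Delta_l^{1/2})$. Summing the three estimates gives the lemma with a constant uniform in $(l,x)$, since every constant traced back to the coefficient bounds and to $T=n$. The delicate point is precisely this stochastic-integral term --- correctly isolating the martingale after the Euler/It\^o substitution and running Burkholder--Davis--Gundy with constants uniform in the starting point $x$, which is what forces us to use the boundedness, not merely linear growth, of the coefficients in (D2); the other two terms are routine Lipschitz-plus-strong-convergence estimates.
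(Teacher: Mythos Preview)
The paper does not supply its own proof of this lemma: it simply states that the result ``is proved in \cite{HHJ21} and is Lemma 1 of that article.'' There is therefore no in-paper argument to compare against line by line. That said, your term-by-term decomposition---observation term via Lipschitz continuity of $\partial_{\theta^{(i)}}\log g_\theta$ and the strong Euler bound, Riemann-sum term via Lipschitz of $\partial_{\theta^{(i)}}\|b_\theta\|_2^2$ together with strong error and path H\"older regularity, and the stochastic-integral term via the Euler/It\^o substitution followed by Burkholder--Davis--Gundy on the martingale remainder---is exactly the standard route for this kind of estimate and is almost certainly what \cite{HHJ21} does. Your bookkeeping of which functions are bounded Lipschitz under (D1)--(D2), and your observation that boundedness (not just linear growth) of $a_\theta,\sigma$ is what makes the constants uniform in the starting point $x$, are both correct and to the point. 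The argument is sound; no gap.
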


We have the following result which can be proved using very similar arguments to Lemma \ref{lem:diff1}.

\begin{lem}\label{lem:diff2}
Assume (D1-2). Then for any $(n,r,\theta,i,j)\in \mathbb{N}\times[1,\infty)\times\Theta\times\{1,\dots,d_{\theta}\}^2$, there exists a $C<\infty$ such that for any $(l,x)\in\mathbb{N}_0\times\mathbb{R}^d$:
\begin{eqnarray*}
\mathbb{E}_{\theta}[|\varphi_{\theta}(\widetilde{X}_{1:n}^x)G_{\theta}^l(\mathbf{X}_{T}^{l,x})^{(i)}-
\varphi_{\theta}(X_{1:n}^x)G_{\theta}(\mathbf{X}_T^x)^{(i)}|^r]^{1/r} & \leq & C\Delta_l^{1/2}, \\
\mathbb{E}_{\theta}[|\varphi_{\theta}(\widetilde{X}_{1:n}^x)H_{\theta}^l(\mathbf{X}_{T}^{l,x})^{(ij)}-
\varphi_{\theta}(X_{1:n}^x)H_{\theta}(\mathbf{X}_T^x)^{(ij)}|^r]^{1/r} & \leq & C\Delta_l^{1/2},\\
\mathbb{E}_{\theta}[|\varphi_{\theta}(\widetilde{X}_{1:n}^x)G_{\theta}^l(\mathbf{X}_{T}^{l,x})^{(i)}G_{\theta}^l(\mathbf{X}_{T}^{l,x})^{(j)}-
\varphi_{\theta}(X_{1:n}^x)G_{\theta}(\mathbf{X}_T^x)^{(i)}G_{\theta}(\mathbf{X}_T^x)^{(j)}|^r]^{1/r} & \leq & C\Delta_l^{1/2} .
\end{eqnarray*}
\end{lem}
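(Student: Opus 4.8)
The plan is to deduce all three bounds from Lemma~\ref{lem:diff1}, a Hessian analogue of it, the classical $L^q$ strong error of the Euler--Maruyama scheme, and moment bounds on $G_\theta$ and $H_\theta$, assembled by Minkowski's and H\"older's inequalities together with two elementary telescoping identities.

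First I would collect the ingredients. By (D2) each factor $g_\theta(y_p\,|\,\cdot)$ of $\varphi_\theta=\prod_{p=1}^n g_\theta(y_p\,|\,\cdot)$ is bounded by $\overline C$ and Lipschitz, so $\varphi_\theta$ is bounded by $\overline C^{\,n}$ and, being a finite product of bounded Lipschitz functions, is globally Lipschitz on $(\mathbb{R}^d)^n$; hence $|\varphi_\theta(\widetilde X_{1:n}^x)-\varphi_\theta(X_{1:n}^x)|\le C\sum_{p=1}^n\|\widetilde X_p^x-X_p^x\|_2$. Under (D1) the coefficients are globally Lipschitz (and under (D2) bounded), so the classical strong convergence of Euler--Maruyama gives, for every $q\in[1,\infty)$, $\max_{1\le k\le \Delta_l^{-1}T}\mathbb{E}_\theta[\|\widetilde X_{k\Delta_l}^x-X_{k\Delta_l}^x\|_2^q]^{1/q}\le C\Delta_l^{1/2}$ with $C$ independent of $x$, whence $\mathbb{E}_\theta[|\varphi_\theta(\widetilde X_{1:n}^x)-\varphi_\theta(X_{1:n}^x)|^q]^{1/q}\le C\Delta_l^{1/2}$. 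Next, $H_\theta$ has exactly the structure of $G_\theta$ with first $\theta$-derivatives replaced by second ones, and under (D2) the second $\theta$-derivatives of $\log g_\theta$, of $\|b_\theta\|_2^2$ and of $b_\theta$ that appear are again bounded and Lipschitz, so the proof of Lemma~\ref{lem:diff1} applies line by line and yields $\mathbb{E}_\theta[|H_\theta^l(\mathbf{X}_T^{l,x})^{(ij)}-H_\theta(\mathbf{X}_T^x)^{(ij)}|^r]^{1/r}\le C\Delta_l^{1/2}$. Finally, since under (D2) both $G_\theta(\mathbf{X}_T^x)^{(i)}$ and $H_\theta(\mathbf{X}_T^x)^{(ij)}$ are sums of bounded terms and of stochastic integrals of bounded integrands against $dX_s$ (whose coefficients $a_\theta,\sigma$ are bounded), the Burkholder--Davis--Gundy inequality gives $\sup_x\mathbb{E}_\theta[|G_\theta(\mathbf{X}_T^x)^{(i)}|^q]^{1/q}+\sup_x\mathbb{E}_\theta[|H_\theta(\mathbf{X}_T^x)^{(ij)}|^q]^{1/q}<\infty$ for every $q\in[1,\infty)$, and combining this with Lemma~\ref{lem:diff1} and its Hessian analogue via the triangle inequality shows $\sup_{l\in\mathbb{N}_0}\mathbb{E}_\theta[|G_\theta^l(\mathbf{X}_T^{l,x})^{(i)}|^q]^{1/q}$ and $\sup_{l\in\mathbb{N}_0}\mathbb{E}_\theta[|H_\theta^l(\mathbf{X}_T^{l,x})^{(ij)}|^q]^{1/q}$ are finite as well, all constants uniform in $x$.

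With these in hand, the first bound follows by writing
\begin{align*}
&\varphi_\theta(\widetilde X_{1:n}^x)G_\theta^l(\mathbf{X}_T^{l,x})^{(i)}-\varphi_\theta(X_{1:n}^x)G_\theta(\mathbf{X}_T^x)^{(i)}\\
&\qquad=\varphi_\theta(\widetilde X_{1:n}^x)\big(G_\theta^l(\mathbf{X}_T^{l,x})^{(i)}-G_\theta(\mathbf{X}_T^x)^{(i)}\big)+\big(\varphi_\theta(\widetilde X_{1:n}^x)-\varphi_\theta(X_{1:n}^x)\big)G_\theta(\mathbf{X}_T^x)^{(i)},
\end{align*}
taking $\|\cdot\|_r$ and applying Minkowski's inequality: the first summand is at most $\overline C^{\,n}\,\mathbb{E}_\theta[|G_\theta^l(\mathbf{X}_T^{l,x})^{(i)}-G_\theta(\mathbf{X}_T^x)^{(i)}|^r]^{1/r}\le C\Delta_l^{1/2}$ by Lemma~\ref{lem:diff1}, while the second is, by Cauchy--Schwarz, at most $\mathbb{E}_\theta[|\varphi_\theta(\widetilde X_{1:n}^x)-\varphi_\theta(X_{1:n}^x)|^{2r}]^{1/(2r)}\,\mathbb{E}_\theta[|G_\theta(\mathbf{X}_T^x)^{(i)}|^{2r}]^{1/(2r)}\le C\Delta_l^{1/2}$ by the ingredients above. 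The Hessian bound is identical with $H$ in place of $G$ and the Hessian analogue of Lemma~\ref{lem:diff1} in place of Lemma~\ref{lem:diff1}. For the third bound one additionally telescopes the product,
\begin{align*}
&G_\theta^l(\mathbf{X}_T^{l,x})^{(i)}G_\theta^l(\mathbf{X}_T^{l,x})^{(j)}-G_\theta(\mathbf{X}_T^x)^{(i)}G_\theta(\mathbf{X}_T^x)^{(j)}\\
&\qquad=G_\theta^l(\mathbf{X}_T^{l,x})^{(i)}\big(G_\theta^l(\mathbf{X}_T^{l,x})^{(j)}-G_\theta(\mathbf{X}_T^x)^{(j)}\big)+\big(G_\theta^l(\mathbf{X}_T^{l,x})^{(i)}-G_\theta(\mathbf{X}_T^x)^{(i)}\big)G_\theta(\mathbf{X}_T^x)^{(j)},
\end{align*}
and then proceeds as before, pulling out the bounded factor $\varphi_\theta(\widetilde X_{1:n}^x)$ and applying H\"older's inequality with exponents adapted to the number of factors in each term (for instance $(3r,3r,3r)$ on the term carrying $\varphi_\theta(\widetilde X_{1:n}^x)-\varphi_\theta(X_{1:n}^x)$), together with the uniform-in-$l$ moment bounds and Lemma~\ref{lem:diff1} applied with exponent $3r$, the latter being valid for every $r\in[1,\infty)$.

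The genuinely substantive step is the Hessian analogue of Lemma~\ref{lem:diff1}, whose proof must re-run the strong-error analysis of \cite{HHJ21} for each of the three constituent terms of $H_\theta$ --- the observation term, the Riemann-sum term $\frac{\Delta_l}{2}\sum_k\frac{\partial^2}{\partial\theta^{(i)}\partial\theta^{(j)}}\|b_\theta(\widetilde X_{k\Delta_l})\|_2^2$ versus its integral, and above all the stochastic-integral term $\frac{\partial^2}{\partial\theta^{(i)}\partial\theta^{(j)}}\int_0^T b_\theta(X_s)^*\Sigma(X_s)^{-1}\sigma(X_s)^*dX_s$ versus its Euler discretization, where one invokes Burkholder--Davis--Gundy, It\^o isometry and the local increment estimates $\mathbb{E}_\theta[\|X_s-X_{k\Delta_l}^x\|_2^q]^{1/q}\le C\Delta_l^{1/2}$ for $s\in[k\Delta_l,(k+1)\Delta_l)$. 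Since under (D2) every second $\theta$-derivative appearing there enjoys exactly the boundedness and Lipschitz properties used for the first derivatives in Lemma~\ref{lem:diff1}, this transfer is essentially mechanical; the only point requiring care within the proof of Lemma~\ref{lem:diff2} proper is to make sure the moment bounds on $G_\theta^l$ and $H_\theta^l$ entering the H\"older steps are taken uniformly in $l$, which is exactly what the triangle-inequality argument in the second paragraph supplies.
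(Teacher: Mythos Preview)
Your proposal is correct and matches the paper's approach: the paper gives no proof at all, merely remarking that the lemma ``can be proved using very similar arguments to Lemma~\ref{lem:diff1}'', and your telescoping decomposition combined with H\"older/Minkowski, the strong Euler error for $\varphi_\theta$, uniform moment bounds, and the Hessian analogue of Lemma~\ref{lem:diff1} is precisely the natural fleshing-out of that remark. One small point worth flagging (a gap in the paper's stated hypotheses rather than in your argument) is that (D2) as listed does not explicitly assume $\tfrac{\partial^2}{\partial\theta^{(i)}\partial\theta^{(j)}}\log g_\theta(y|\cdot)\in\mathcal{B}_b(\mathbb{R}^d)\cap\textrm{Lip}_{\|\cdot\|_2}(\mathbb{R}^d)$, which both you and the paper implicitly require for the $H_\theta$ bound.
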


\begin{proof}[Proof of Proposition \ref{prop:hess_conv}]
In the following proof we will suppress the initial condition from the notation. We have
$$
\mathfrak{H}_{\theta}^{l,(ij)}-\mathfrak{H}_{\theta}^{(ij)}=\sum_{j=1}^3 T_j,
$$
where
\begin{eqnarray*}
T_1 & = & \frac{\mathbb{E}_{\theta}[\varphi_{\theta}(\tilde{X}_{1:n})G_{\theta}^l(\mathbf{X}_T^l)^{(i)}]}{\mathbb{E}_{\theta}[\varphi_{\theta}(\tilde{X}_{1:n})]}
\frac{\mathbb{E}_{\theta}[\varphi_{\theta}(\tilde{X}_{1:n})G_{\theta}^l(\mathbf{X}_T^l)^{(j)}]}{\mathbb{E}_{\theta}[\varphi_{\theta}(\tilde{X}_{1:n})]}  -
\frac{\mathbb{E}_{\theta}[\varphi_{\theta}(X_{1:n})G_{\theta}(\mathbf{X}_T)^{(i)}]}{\mathbb{E}_{\theta}[\varphi_{\theta}(X_{1:n})]}
\frac{\mathbb{E}_{\theta}[\varphi_{\theta}(X_{1:n})G_{\theta}(\mathbf{X}_T)^{(j)}]}{\mathbb{E}_{\theta}[\varphi_{\theta}(X_{1:n})]}, \\
T_2 & = & \frac{\mathbb{E}_{\theta}[\varphi_{\theta}(\tilde{X}_{1:n})G_{\theta}^l(\mathbf{X}_T^l)^{(i)}G_{\theta}^l(\mathbf{X}_T^l)^{(j)}]}{\mathbb{E}_{\theta}[\varphi_{\theta}(\tilde{X}_{1:n})]} -
\frac{\mathbb{E}_{\theta}[\varphi_{\theta}(X_{1:n})G_{\theta}(\mathbf{X}_T)^{(i)}G_{\theta}(\mathbf{X}_T)^{(j)}]}{\mathbb{E}_{\theta}[\varphi_{\theta}(X_{1:n})]}, \\
T_3 &  = & \frac{\mathbb{E}_{\theta}[\varphi_{\theta}(\tilde{X}_{1:n})H_{\theta}^l(\mathbf{X}_T^l)^{(ij)}]}{\mathbb{E}_{\theta}[\varphi_{\theta}(\tilde{X}_{1:n})]} -
\frac{\mathbb{E}_{\theta}[\varphi_{\theta}(X_{1:n})H_{\theta}(\mathbf{X}_T)^{(ij)}]}{\mathbb{E}_{\theta}[\varphi_{\theta}(X_{1:n})]}.
\end{eqnarray*}
We remark that 
\begin{equation}\label{eq:prf1}
|\mathbb{E}_{\theta}[\varphi_{\theta}(\widetilde{X}_{1:n})]-\mathbb{E}_{\theta}[\varphi_{\theta}(X_{1:n})]| \leq C\Delta_l^{1/2},
\end{equation}
by using (D2) and convergence of Euler approximations of diffusions. 
For $T_1$ we have
\begin{eqnarray*}
T_1 & = & \Bigg(\frac{\mathbb{E}_{\theta}[\varphi_{\theta}(\tilde{X}_{1:n})G_{\theta}^l(\mathbf{X}_T^l)^{(i)}]}{\mathbb{E}_{\theta}[\varphi_{\theta}(\tilde{X}_{1:n})]} -
\frac{\mathbb{E}_{\theta}[\varphi_{\theta}(X_{1:n})G_{\theta}(\mathbf{X}_T)^{(i)}]}{\mathbb{E}_{\theta}[\varphi_{\theta}(X_{1:n})]}\Bigg)
\frac{\mathbb{E}_{\theta}[\varphi_{\theta}(\tilde{X}_{1:n})G_{\theta}^l(\mathbf{X}_T^l)^{(j)}]}{\mathbb{E}_{\theta}[\varphi_{\theta}(\tilde{X}_{1:n})]} + \\
& & \frac{\mathbb{E}_{\theta}[\varphi_{\theta}(X_{1:n})G_{\theta}(\mathbf{X}_T)^{(i)}]}{\mathbb{E}_{\theta}[\varphi_{\theta}(X_{1:n})]}\Bigg(
\frac{\mathbb{E}_{\theta}[\varphi_{\theta}(\tilde{X}_{1:n})G_{\theta}^l(\mathbf{X}_T^l)^{(j)}]}{\mathbb{E}_{\theta}[\varphi_{\theta}(\tilde{X}_{1:n})]} - 
\frac{\mathbb{E}_{\theta}[\varphi_{\theta}(X_{1:n})G_{\theta}(\mathbf{X}_T)^{(j)}]}{\mathbb{E}_{\theta}[\varphi_{\theta}(X_{1:n})]}
\Bigg).
\end{eqnarray*}
So one can easily deduce by \cite[Theorem 1]{HHJ21} and (D2) that for some $C$ that does not depend upon $l$ 
$$
T_1 \leq C\Delta_l^{1/2}.
$$
Now for any real numbers, $a,b,c,d$ with $b,d$ non-zero we have the simple
identity
$$
\frac{a}{b} - \frac{c}{d} = \frac{a}{bd}[d-b] + \frac{1}{d}[a-c].
$$
So for $T_2,T_3$ combining this identity with Lemma \ref{lem:diff2}, \eqref{eq:prf1} and (D2) one can easily conclude that for some $C$ that does not depend upon $l$ 
$$
\max\{T_2,T_3\} \leq C\Delta_l^{1/2}.
$$
From here the proof is easily concluded.
\end{proof}

We end the section with a couple of results which are more-or-less direct Corollaries of \cite[Remarks 1 \& 2]{HHJ21}. We do not prove them.

\begin{lem}\label{lem:diff3}
Assume (D1-2). Then for any $(i,j,r,\theta)\in \{1,\dots,d_{\theta}\}^2\times[1,\infty)\times\Theta$, there exists a $C<\infty$ such that for any $(l,x,x_{\star})\in\mathbb{N}\times\mathbb{R}^{2d}$
\begin{eqnarray*}
\mathbb{E}_{\theta}[|H_{\theta}^l(\mathbf{X}_{T}^{l,x})^{(ij)}-
H_{\theta}^{l-1}(\mathbf{X}_{T}^{l-1,x_{\star}})^{(ij)}
|^r]^{1/r} & \leq &  C\Big(\Delta_l^{1/2} + \|x-x_{\star}\|_2\Big)\\
\mathbb{E}_{\theta}[|G_{\theta}^l(\mathbf{X}_{T}^{l,x})^{(i)}G_{\theta}^l(\mathbf{X}_{T}^{l,x})^{(j)}-
G_{\theta}^{l-1}(\mathbf{X}_{T}^{l-1,x_{\star}})^{(i)}G_{\theta}^{l-1}(\mathbf{X}_{T}^{l-1,x_{\star}})^{(j)}
|^r]^{1/r} & \leq &  C\Big(\Delta_l^{1/2} + \|x-x_{\star}\|_2\Big).
\end{eqnarray*}
\end{lem}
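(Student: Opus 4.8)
The plan is to follow the argument of \cite[Remarks 1 \& 2]{HHJ21} for the score, noting that under (D2) the second-order $\theta$-derivatives occurring in $H_\theta^l$ --- namely $\partial^2_{\theta^{(i)}\theta^{(j)}}\log g_\theta(y|\cdot)$, $\partial^2_{\theta^{(i)}\theta^{(j)}}\|b_\theta(\cdot)\|_2^2$ and $\partial^2_{\theta^{(i)}\theta^{(j)}}b_\theta(\cdot)$ --- are bounded and globally Lipschitz in $x$, exactly like the first-order quantities appearing in $G_\theta^l$. Hence each of $G_\theta^l$ and $H_\theta^l$ is, termwise, the same kind of additive path functional of the Euler skeleton $\widetilde X^{l,x}$: a finite sum of bounded Lipschitz functions of $\widetilde X^{l,x}$, a Riemann-type sum $\Delta_l\sum_k f_\theta(\widetilde X^{l,x}_{k\Delta_l})$ with $f_\theta$ bounded Lipschitz, and a discretized stochastic-integral term $\sum_k \phi_\theta(\widetilde X^{l,x}_{k\Delta_l})^*[\widetilde X^{l,x}_{(k+1)\Delta_l}-\widetilde X^{l,x}_{k\Delta_l}]$ with $\phi_\theta$ bounded Lipschitz and independent of $l$. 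It therefore suffices to bound such differences across consecutive levels and with perturbed initial conditions.

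First I would establish the basic path-stability estimate. Under the antithetic coupling of Algorithm \ref{alg:ql_def} (the coarse Brownian increments are pairwise sums of the fine ones, and the grid at level $l-1$ is nested in that at level $l$), a discrete Gronwall argument using the global Lipschitz and boundedness hypotheses of (D1)--(D2) yields, for every $r\geq 1$, a constant $C=C(T,r,\theta)$ independent of $l$ with
\[
\max_{1\leq k\leq \Delta_{l-1}^{-1}T}\;\mathbb{E}_\theta\big[\|\widetilde X_{k\Delta_{l-1}}^{l,x}-\widetilde X_{k\Delta_{l-1}}^{l-1,x_\star}\|_2^r\big]^{1/r}\;\leq\; C\big(\Delta_l^{1/2}+\|x-x_\star\|_2\big),
\]
the $\Delta_l^{1/2}$ coming from the strong rate between two consecutive Euler levels driven by common Brownian motion from a common start (cf.\ the proof of Proposition \ref{prop:hess_conv}) and the $\|x-x_\star\|_2$ from Lipschitz propagation of the initial gap; along the way one also records $\sup_k\mathbb{E}_\theta[\|\widetilde X^{l,x}_{k\Delta_l}\|_2^r]\leq C(1+\|x\|_2^r)$ and the uniform $L^r$-moment bounds on $G^l_\theta,H^l_\theta$ already implicit in Lemma \ref{lem:diff1}.

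Next I would transfer this bound to the functionals, handling the three groups of terms in $H_\theta^l$ separately. For the observation sum and for the Riemann sum the bound is immediate from Minkowski's inequality, the Lipschitz-in-$x$ property from (D2) and the path estimate, after pairing the fine-grid points of the level-$l$ Riemann sum against the corresponding level-$(l-1)$ points. The stochastic-integral term is the crux: decompose each Euler increment $\widetilde X^{l,x}_{(k+1)\Delta_l}-\widetilde X^{l,x}_{k\Delta_l}$ into its drift part $a_\theta(\widetilde X^{l,x}_{k\Delta_l})\Delta_l$ and its martingale part $\sigma(\widetilde X^{l,x}_{k\Delta_l})[W_{(k+1)\Delta_l}-W_{k\Delta_l}]$. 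The drift part is treated like a Riemann sum; for the martingale part, using that each coarse Brownian increment is the sum of two fine ones, one rewrites the level-$l$ minus level-$(l-1)$ difference as a single sum of martingale differences with coefficients of the form $\phi_\theta(\widetilde X^{l,x}_{\cdot})^*\sigma(\widetilde X^{l,x}_{\cdot})-\phi_\theta(\widetilde X^{l-1,x_\star}_{\cdot})^*\sigma(\widetilde X^{l-1,x_\star}_{\cdot})$. Burkholder--Davis--Gundy --- or, since the increments are conditionally Gaussian, a direct conditional second-moment computation --- then bounds the $L^r$-norm of this sum by $(\sum_k (\mathrm{coefficient})^2\Delta_l)^{1/2}$, and bounding each coefficient in $L^r$ by $C(\Delta_l^{1/2}+\|x-x_\star\|_2)$ through Lipschitzness and the path estimate (for the odd-indexed coefficients one routes through $\widetilde X^{l,x}_{2m\Delta_l}$, absorbing one extra Euler step of size $O(\Delta_l^{1/2})$) closes the first displayed inequality. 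The second inequality then follows by writing $ab-a'b'=a(b-b')+(a-a')b'$ with $a=G^l_\theta(\mathbf{X}_T^{l,x})^{(i)}$, $b=G^l_\theta(\mathbf{X}_T^{l,x})^{(j)}$, using Cauchy--Schwarz and the first bound with $r$ replaced by $2r$ together with the uniform $L^{2r}$-bounds on $G^l_\theta$.

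The one genuinely delicate point, exactly as in \cite{HHJ21}, is the stochastic-integral term: a naive pathwise estimate loses the rate entirely, since each of the $O(\Delta_l^{-1})$ summands is $O(\Delta_l)$, and the $\Delta_l^{1/2}$ decay is recovered only by exploiting the martingale (conditional-mean-zero) structure so that cross terms vanish in expectation. Everything else --- Minkowski, (D2), discrete Gronwall, Cauchy--Schwarz --- is routine and is essentially a transcription of the computations already carried out in \cite{HHJ21}; accordingly, as the authors note, we only outline the structure.
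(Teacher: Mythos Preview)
Your proposal is correct and follows exactly the route the paper indicates: the paper does not actually prove Lemma~\ref{lem:diff3} at all, stating only that it is a ``more-or-less direct Corollary of \cite[Remarks 1 \& 2]{HHJ21}'' and omitting the proof entirely. Your outline --- the coupled path-stability estimate via discrete Gronwall, the termwise treatment of the observation/Riemann/stochastic-integral pieces with BDG for the martingale part, and the $ab-a'b'$ decomposition with Cauchy--Schwarz for the product bound --- is precisely the argument behind those Remarks, so you have filled in what the authors deliberately left out.
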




\begin{thebibliography}{99}
  
  
\bibitem{ABH17}
N. Agarwal, B. Bullins and E. Hazan.
Second-order stochastic optimization for machine learning in linear time.
{\em Journal of Machine Learning Research}, 18, 1--40, 2017.
  
\bibitem{CDH10}
C. Andrieu, A. Doucet, and R. Holenstein. 
Particle Markov chain Monte Carlo methods. 
{\em J. R. Stat. Soc. Ser. B Stat. Methodol.}, 72(3):269--342, 2010.


\bibitem{ALV18}
C. Andrieu, A. Lee, and M. Vihola. 
Uniform ergodicity of the iterated conditional SMC and geometric
ergodicity of particle Gibbs samplers. 
{\em Bernoulli}, 24(2):842--872, 2018.


\bibitem{BC09}
A. Bain and D. Crisan.
\newblock{\em Fundamentals of Stochastic Filtering}.
\newblock{Springer, New York}, 2009.


\bibitem{BR05}
A. Beskos and G. O. Roberts. 
Exact simulation of diffusions. 
{\em The Annals of Applied Probability}, 15(4): 2422--2444, 2005.



\bibitem{BOR06}
A. Beskos, O. Papaspiliopoulos, and G. O. Roberts. 
Retrospective exact simulation of diffusion sample paths with applications. 
\newblock{\em Bernoulli}, 12(6):1077--1098, 2006.


\bibitem{BMS20}
J. Bierkens, F. van der Meulen and M. Schauer.
Simulation of elliptic and hypo-elliptic conditional diffusions.
{\em Adv. Appl. Probab.}, 52, 173--212, 2020.


\bibitem{BZ17}
J. Blanchet and F. Zhang. 
Exact simulation for multivariate Ito diffusions. 
arXiv preprint arXiv:1706.05124, 2017




\bibitem{BCN11}
R. H. Byrd, G. M. Chin, W. Neveitt, and J. Nocedal.
On the Use of stochastic Hessian information in optimization methods for machine learning.
{\em SIAM J. Optim.}, 21(3), 977--995, 2011.

\bibitem{CMR05}
O. Capp\'{e}, E. Moulines, and T. Ryden. 
\newblock{\em Inference in Hidden Markov Models.}
580 Springer, New York, 2005.


\bibitem{CFJ21}
N. K. Chada, J. Franks, A Jasra, K. J. H. Law and M. Vihola.
Unbiased inference for discretely observed hidden Markov model diffusions.
\emph{SIAM/ASA J. Unc. Quant.}, 9 (2), 763--787, 2021.


\bibitem{DS19}
S. Ditlevsen and A. Samson.
Hypoelliptic diffusions: filtering and inference from complete and partial observations.
{\em J. R. Stat. Soc. Ser. B Stat. Methodol.},81(2): 361--384, 2019.
  

\bibitem{FPR08} 
P. Fearnhead, O. Papaspiliopoulos, and G. O. Roberts. 
Particle filters for partially observed diffusions.
{\em J. R. Stat. Soc. Ser. B Stat. Methodol.}, 70(4):755--777, 2008
  
  
\bibitem{FPR10}
P. Fearnhead, O. Papaspiliopoulos, G. O. Roberts, and A. Stuart.
Random-weight particle filtering of continuous time processes. 
{\em J. R. Stat. Soc. Ser. B Stat. Methodol.}, 72(4):497--512, 2010.



\bibitem{JG06}
J. Gatheral.
\newblock{\em The volatility surface: a practitioner's guide}.
John Wiley \& Sons, 2006.


\bibitem{GJR18}
J. Gatheral, T. Jaisson, and M. Rosenbaum. 
Volatility is Rough. 
\emph{Quant. Finance}, 933--949, 2018.


\bibitem{GR14}
P. W. Glynn and C.H. Rhee. 
Exact estimation for Markov chain equilibrium expectations. 
{\em Journal of Applied Probability}, 51(A):377--389, 2014


\bibitem{HHJ21}
J. Heng, J. Houssineau and A. Jasra.
On unbiased score estimation for partially observed diffusions. arXiv: 2105:04912, 2021.


\bibitem{HJL21}
J. Heng, A. Jasra, K. J. Law, and A. Tarakanov. 
On unbiased estimation for discretized models. 
arXiv preprint arXiv:2102.12230, 2021.


\bibitem{JLS21}
P. Jacob, F. Lindstein and T. Sch\"on.
Smoothing with couplings of conditional particle filters.
\emph{J. Amer. Statist. Assoc.} {\bf 115}, 721--729, 2021.


\bibitem{JLY21}
A. Jasra, K. J. H Law and F. Yu.
 Unbiased filtering for a class of partially observed diffusion processes. 
 {\em Adv. Appl. Probab.}, (to appear), 2021.



\bibitem{JKL18}
A. Jasra, K. Kamatani, K. J. H. Law and Y. Zhou.
Bayesian static parameter estimation for partially observed diffusions via multilevel Monte Carlo. 
\emph{SIAM J. Sci. Comp.}, { 40}, A887-A902, 2018.

\bibitem{KP13}
P. E. Kloeden and E. Platen. 
{\em Numerical solution of stochastic differential equations}.
volume 23. Springer Science \& Business Media, 2013


\bibitem{MW06} 
A. Majda and X. Wang. 
\newblock{\em Non-linear Dynamics and Statistical Theories for Basic Geophysical Flows,} Cambridge University Press, 
\newblock 2006.


\bibitem{DM11}
D. Mcleish.
 A general method for debiasing a Monte Carlo. 
{\em Monte Carlo Methods and Applications}, 17: 301--315, 2011.



\bibitem{MS18}
F. van der Meulen and M. Schauer.
Bayesian estimation of incompletely observed diffusions.
\newblock{\em Stochastic}, 90(5), 641--662, 2018.



\bibitem{RG15}
 C. H. Rhee and P. Glynn.
Unbiased estimation with square root convergence for SDE models. 
\emph{Op. Res.}~{ 63}, 1026--1043, 2016.



\bibitem{LMR77}
L. M. Ricciardi.
\newblock{\em Diffusion Processes and Related Topics in Biology.}
Springer, Lecture Notes in Biomathematics, 1977.


\bibitem{SVZ17}
M. Schauer, F. Van Der Meulen, and H. Van Zanten. 
Guided proposals for simulating multi-dimensional diffusion bridges. 
\newblock{\em Bernoulli}, 23(4A):2917--2950, 2017.



\bibitem{SES04}
S. E. Shreve.
\newblock{\em Stochastic calculus for finance II: Continuous-time model}.
Springer Science \& Business Media, 2004.


\bibitem{HT00}
{ H. Thorisson}
{\em Coupling, stationarity, and regeneration}. Springer: New York, 2002.


\bibitem{MV18}
{ M. Vihola}.
Unbiased estimators and multilevel Monte Carlo. \emph{Op. Res.}, { 66}, 448--462, 2018.



 
\end{thebibliography}
  
  {}


\end{document}